\title{Improved bounds and new techniques for Davenport--Schinzel
sequences and their generalizations\footnote{A preliminary version
of this article appears in \emph{Proceedings of the 20th Annual
ACM-SIAM Symposium on Discrete Algorithms (SODA'09)} (New York, NY), pp.~1--10,
ACM and SIAM, 2009.}}
\author{Gabriel Nivasch\footnote{Institute of Theoretical Computer Science, ETH Z\"urich,
CH-8092 Z\"urich, Switzerland, \texttt{gabriel.nivasch@inf.ethz.ch}. Work was done while
the author was at Tel-Aviv University, and was supported by ISF Grant 155/05
and by the Hermann Minkoswki--MINERVA Center for Geometry at Tel
Aviv University.}}
\date{}
\newtheorem{theorem}{Theorem}[section]
\newtheorem{lemma}[theorem]{Lemma}
\newtheorem{recurrence}[theorem]{Recurrence}
\newtheorem{conjecture}[theorem]{Conjecture}
\newtheorem{corollary}[theorem]{Corollary}
\newenvironment{definition}[1][Definition]
{\refstepcounter{theorem}\begin{trivlist}\item\textbf{#1
\thetheorem:\ }} {\end{trivlist}}
\newenvironment{remark}[1][Remark]
{\refstepcounter{theorem}\begin{trivlist}\item\textbf{#1
\thetheorem:\ }} {\end{trivlist}}
\newcommand\Ex{{\mathrm{Ex}}} 
\newcommand\F{{F}} 
\newcommand\ADS{{\mathrm{ADS}}} 
\newcommand\N{{\Pi}} 
\newcommand\psif{{\psi'}} 
\newcommand\AFF{{\mathrm{AFF}}} 
\newcommand\Nf{{\Pi'}} 
\begin{document}

\maketitle

\begin{abstract}
We present several new results regarding $\lambda_s(n)$, the maximum
length of a Davenport--Schinzel sequence of order $s$ on $n$
distinct symbols.

First, we prove that $\lambda_s(n) \le n \cdot 2^{(1/t!)\alpha(n)^t
+ O\left( \alpha(n)^{t-1}\right)}$ for $s\ge 4$ even, and
$\lambda_s(n) \le n \cdot 2^{(1/t!)\alpha(n)^t \log_2 \alpha(n) +
O\left(\alpha(n)^t \right)}$ for $s\ge 3$ odd, where $t = \lfloor
(s-2) / 2 \rfloor$, and $\alpha(n)$ denotes the inverse Ackermann
function. The previous upper bounds, by Agarwal, Sharir, and Shor
(1989), had a leading coefficient of $1$ instead of $1/t!$ in the
exponent. The bounds for even $s$ are now tight up to lower-order
terms in the exponent. These new bounds result from a small
improvement on the technique of Agarwal et al.

More importantly, we also present a new technique for deriving upper
bounds for $\lambda_s(n)$. This new technique is very similar to the
one we applied to the problem of stabbing interval chains (Alon et
al., 2008). With this new technique we: (1) re-derive the upper
bound of $\lambda_3(n) \le 2n \alpha(n) + O{\bigl(n \sqrt{\alpha(n)}
\bigr)}$ (first shown by Klazar, 1999); (2) re-derive our own new
upper bounds for general $s$; and (3) obtain improved upper bounds
for the generalized Davenport--Schinzel sequences considered by
Adamec, Klazar, and Valtr (1992).

Regarding lower bounds, we show that $\lambda_3(n) \ge 2n \alpha(n)
- O(n)$ (the previous lower bound (Sharir and Agarwal, 1995) had a
coefficient of $1\over 2$), so the coefficient $2$ is tight. We also
present a simpler variant of the construction of Agarwal, Sharir,
and Shor that achieves the known lower bounds of $\lambda_s(n) \ge n
\cdot 2^{(1/t!) \alpha(n)^t - O\left(\alpha(n)^{t-1}\right)}$ for
$s\ge 4$ even.

\emph{Keywords:} Davenport--Schinzel sequence, lower envelope,
inverse Ackermann function.
\end{abstract}

\section{Introduction}
Given a sequence $S$, denote by $|S|$ the length of $S$, and by
$\|S\|$ the number of distinct symbols in $S$. If $u$ is another
sequence, we write $u \subset S$ if $S$ contains a subsequence $u'$
(not necessarily contiguous) which is isomorphic to $u$ (i.e., $u'$
can be made equal to $u$ by a one-to-one renaming of its symbols).
In this case we say that $S$ \emph{contains} $u$ or that $u$
\emph{is contained} in $S$. Otherwise, we write $u\not\subset S$ and
we say that $S$ is \emph{$u$-free}. For example, $S = abcdbc$
contains $u = abab$, but it is $v$-free for $v = abba$.

A sequence $S = a_1 a_2 a_3 \ldots$ is called \emph{$r$-sparse} if
$a_i \neq a_j$ whenever $1\le |j-i| \le r-1$. In other words, $S$ is
$r$-sparse if every interval in $S$ of length at most $r$ contains
only distinct symbols.

A \emph{Davenport--Schinzel sequence of order $s$}, for $s\ge 1$, is
a sequence that is $2$-sparse (i.e., contains no adjacent repeated
symbols) and is $u$-free for $u=ababab\ldots$ of length $s+2$. In
other words, a Davenport--Schinzel sequence of order $s$ does not
contain any alternation $a\ldots b\ldots a\ldots b\ldots$ of length
$s+2$ for any pair of symbols $a$, $b$.

Let $\lambda_s(n)$ denote the maximum length of a
Davenport--Schinzel sequence of order $s$ on $n$ distinct symbols
($\lambda_s(n)$ is finite for all $s$ and $n$). We always take $s$
to be fixed, and consider $\lambda_s(n)$ as a function of $n$.

These sequences are named after Harold Davenport and Andrzej
Schinzel, who first studied them in 1965 \cite{DS_original}. The
main motivation for Davenport--Schinzel sequences is the complexity
of the lower envelope of a set of curves in the plane. However,
Davenport--Schinzel sequences have a large number of applications in
computational and combinatorial geometry; the book \cite{DS_book} by
Sharir and Agarwal is entirely devoted to this topic. Given the
prominent role these sequences play in computational geometry, it is
of great interest to derive tight asymptotic bounds for
$\lambda_s(n)$. This goal is quite challenging, given the
complicated form of the known bounds (see below). There has been
little progress in the problem for nearly 20 years.

The bounds $\lambda_1(n) = n$ (no $aba$) and $\lambda_2(n) = 2n-1$
(no $abab$) are quite easy to obtain. But for $s\ge 3$ the problem
becomes much more complicated---it turns out that $\lambda_s(n)$ is
slightly superlinear in $n$.

Hart and Sharir showed in 1986 \cite{HS86,DS_book} that
$\lambda_3(n) = \Theta(n \alpha(n))$, where $\alpha(n)$ denotes the
inverse Ackermann function. (For the upper bound see also Sharir
\cite{sharir87} and Klazar \cite{klazar99}, and for the lower bound
see also Wiernik and Sharir \cite{WS88}, Komj\'ath \cite{komjath},
and Shor \cite{shorI}.)

The tightest known bounds for $\lambda_3(n)$ are
\begin{equation}\label{eq_lambda_3_old_sandwich}
{1\over 2} n \alpha(n) - O(n) \le \lambda_3(n) \le 2n \alpha(n) +
O{\bigl( n \sqrt{\alpha(n)} \bigr)}.
\end{equation}
The lower bound is due to Sharir and Agarwal~\cite{DS_book} (based
on the construction by Wiernik and Sharir~\cite{WS88}). The upper
bound is due to Klazar~\cite{klazar99}. Klazar~\cite{klazar_survey}
asks whether $\lim_{n\to\infty} \lambda_3(n) / \bigl( n \alpha(n)
\bigr)$ exists.

The current upper and lower bounds for $\lambda_s(n)$ for general
$s$ were established by Agarwal, Sharir, and Shor in 1989
\cite{ASS89,DS_book}, and are as follows. Let $t = \lfloor (s-2) / 2
\rfloor$. Then,
\begin{align}
\lambda_s(n) &\le
\begin{cases}
n \cdot 2^{\alpha(n)^t + O\left( \alpha(n)^{t-1}\right)},
& \text{$s\ge 4$ even};\\
n \cdot 2^{\alpha(n)^t \log_2 \alpha(n) + O\left(\alpha(n)^t
\right)}, & \text{$s\ge 3$ odd};
\end{cases}\nonumber\\
\lambda_s(n) &\ge n\cdot 2^{(1/t!) \alpha(n)^t -
O\left(\alpha(n)^{t-1} \right)}, \quad s\ge 4 \text{
even}.\label{eq_lambda_lower}
\end{align}
For odd $s\ge 5$ the asymptotically best lower bounds known are
obtained by $\lambda_s(n) \ge \lambda_{s-1}(n)$.

Sharir and Agarwal's book \cite{DS_book} contains a complete
derivation of the current upper and lower bounds for $\lambda_s(n)$
for all $s$.

In 2008 the author, together with Alon, Kaplan, Sharir, and
Smorodinsky, conjectured that:
\begin{conjecture}[\cite{interval_chains}]\label{conjecture_lambda}
The true bounds for $\lambda_s(n)$ are
\begin{equation*}
\lambda_s(n)=
\begin{cases}
n \cdot 2^{(1/t!)\alpha(n)^t \pm
O\left(\alpha(n)^{t-1}\right)}, & \text{$s\ge 4$ even};\\
n\cdot 2^{(1/t!)\alpha(n)^t \log_2 \alpha(n) \pm
O\left(\alpha(n)^t\right)}, & \text{$s\ge 3$ odd};
\end{cases}
\end{equation*}
where $t = \lfloor (s-2)/2 \rfloor$.
\end{conjecture}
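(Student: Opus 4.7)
The conjecture splits naturally into three subgoals: (i) sharpen the leading coefficient in the upper bound from $1$ to $1/t!$ in both parities; (ii) check that the existing Agarwal--Sharir--Shor construction already supplies the matching $(1/t!)\alpha(n)^t$ for even $s$ (this is essentially~\eqref{eq_lambda_lower}); and (iii) for odd $s\ge 5$, produce lower-bound sequences whose leading exponent is $(1/t!)\alpha(n)^t\log_2\alpha(n)$, improving the trivial bound $\lambda_s(n)\ge\lambda_{s-1}(n)$.

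For (i), my plan is to revisit the Agarwal--Sharir--Shor recurrence on a partitioned quantity $\lambda_{s,m}(n)$, whose $t$-fold unrolling generates the Ackermann hierarchy and the exponent $\alpha(n)^t$. In the ASS analysis each of the $t$ unrolling steps picks an Ackermann level independently, which overcounts by roughly a factor of $t!$; I would instead track an ordered tuple of levels and exploit the fact that only monotone tuples actually arise, replacing $\alpha(n)^t$ by $\binom{\alpha(n)+t}{t}\sim\alpha(n)^t/t!$. A cleaner route, and the one the paper's abstract advertises, is to bypass the recurrence altogether and charge each long alternation in the sequence to a node of a tree of height $\alpha(n)$, in the spirit of the stabbing-interval-chain analysis of~\cite{interval_chains}; the admissible ``shapes'' of a charged configuration then form a set of size $\binom{\alpha(n)}{t}$, and the $1/t!$ saving drops out automatically. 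For odd $s$ one further inserts a level of branching at each node to pick up the extra $\log_2\alpha(n)$ factor.

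Subgoal (iii) is where I expect the real difficulty to lie. The strategy would be to take an extremal construction realizing $(1/t!)\alpha(n)^t$ for even order $s-1$ and splice into each of its blocks a weaker recursive construction contributing a $\log_2\alpha(n)$ multiplicative factor while raising the order by exactly one. The hard part is to verify that the composite sequence is $ababab\ldots$-free of length $s+2$: the composition typically creates new alternations between outer and inner symbols that must be tamed, presumably by laying out the inner ``interface'' symbols in a path-like rather than star-like pattern. Given that the $\log_2\alpha(n)$ gap for odd $s$ has resisted resolution for two decades, a full proof of the conjecture's odd half may in fact require a genuinely new construction rather than a modification of the ASS template; absent that, a reasonable intermediate target is a conditional bootstrap showing that any lower bound of the form $c\cdot\alpha(n)^t\log_2\alpha(n)$ with $c>0$ can be improved to $c=1/t!$.
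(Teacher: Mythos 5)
The statement you are trying to prove is a \emph{conjecture}, and the paper does not prove it---it only establishes parts of it, and explicitly leaves the rest open. What the paper does prove are your subgoals (i) and (ii): the upper bounds with leading coefficient $1/t!$ (Theorem~\ref{thm_lambda_new_upper}, obtained both by a modification of the Agarwal--Sharir--Shor recurrence and by the new almost-DS-sequence technique) and the matching even-order lower bound (the simplified construction of Section~\ref{sec_constr_even}, which re-derives (\ref{eq_lambda_lower})). Your subgoal (iii), the lower bound $n\cdot 2^{(1/t!)\alpha(n)^t\log_2\alpha(n)-O(\alpha(n)^t)}$ for odd $s\ge 5$, is precisely what remains unproven: the paper's conclusion states that for odd $s$ the best known lower bound is still $\lambda_s(n)\ge\lambda_{s-1}(n)$ and that the even-order construction ``does not seem to work when $s$ is odd'' (see the remark accompanying Figure~\ref{fig_ab_even_odd}, where case~(ii) of the verification fails for odd $s$). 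Your plan for (iii)---splicing a $\log$-factor gadget into the even-order construction and ``taming'' the new outer/inner alternations---is exactly the step that breaks down there, and you offer no mechanism for repairing it; a ``conditional bootstrap'' from an arbitrary positive constant $c$ to $c=1/t!$ is likewise not something the paper's techniques (or any argument you sketch) deliver. So the proposal has a genuine gap: it is a research program for the open half of the conjecture, not a proof.

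On subgoal (i), your heuristic for where the $1/t!$ comes from is roughly in the right spirit but not how the paper actually obtains it. In the paper the saving arises from recurrences of the form $C_{s,k}\approx C_{s-2,k}\,C_{s,k-1}$ (for the constants $P_{s,k}$, $Q_{s,k}$ of Section~\ref{sec_psi_old} and $R_s(d)$ of Section~\ref{sec_psi_new}), whose logarithms telescope into $\sum_{i\le k} c_{s-2,i}\approx k^t/t!$ (Appendix~\ref{app_growth_constants}); the combinatorial source of the improvement is the observation that the ``middle'' layer sequences decompose into blocks of distinct symbols, plus the introduction of the quantities $n^*_i$ bounded via $\psi_s(b,n^*)$ (Remark~\ref{remark_upper_improv}), or, in the new technique, the counting of symbols in almost-DS sequences via Recurrence~\ref{rec_nskm}. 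Your ``monotone tuple of Ackermann levels'' picture would need to be turned into an actual recurrence of this kind to constitute a proof, but since the upper-bound half is proved in the paper this is a presentational rather than fatal issue; the fatal issue is (iii).
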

This conjecture is based on some surprisingly similar tight bounds
that they obtained for an unrelated problem called \emph{stabbing
interval chains with $j$-tuples}.

\subsection{Generalized Davenport--Schinzel sequences}

Adamec, Klazar, and Valtr~\cite{AKV92} considered a generalization
of Davenport--Schinzel sequences, in which the forbidden pattern is
not limited to $abab\ldots$, but can be an arbitrary sequence.

Let $u$ (the \emph{forbidden pattern}) be a sequence with $\|u\| =
r$ distinct symbols and length $|u| = s$. Then we denote by
$\Ex_u(n)$ the maximum length of an $r$-sparse, $u$-free sequence on
$n$ distinct symbols. The standard Davenport--Schinzel sequences are
obtained by taking $r=2$ and $u = abab\ldots$ of length $s+2$.

The requirement of $r$-sparsity is necessary, since an
$(r-1)$-sparse, $u$-free sequence can be arbitrarily long. The
requirement of $r$-sparsity, however, ensures that $\Ex_u(n)$ is
finite.

Generalized Davenport--Schinzel sequences have found several
applications in discrete mathematics. Valtr~\cite{valtr_survey} used
generalized Davenport--Schinzel sequences to obtain bounds for some
Tur\'an-type problems for geometric graphs. Alon and Friedgut
\cite{alon_friedgut} used them to derive an almost-tight upper bound
for the so-called Stanley--Wilf conjecture (the conjecture was later
proved by Marcus and Tardos~\cite{marcus_tardos} by a different
technique). For more information see the surveys by Klazar
\cite{klazar_survey} and by Valtr \cite{valtr_survey}. More
recently, Pettie~\cite{pettie} used generalized Davenport--Schinzel
sequences to improve Sundar's~\cite{sundar} near-linear upper bound
for the \emph{deque conjecture} for splay trees.

\subsection{Formation-free sequences}

Klazar in 1992 \cite{klazar92} developed a general technique for
bounding $\Ex_u(n)$ in terms of only $r = \|u\|$ and $s = |u|$. His
technique is based on considering what we call \emph{formation-free
sequences} (our name). Given integers $r$ and $s$, an
\emph{$(r,s)$-formation} is a sequence of $s$ permutations on $r$
symbols. For example, $abcd\allowbreak\ dcab\allowbreak\
dcab\allowbreak\ cdba\allowbreak\ dabc$ is a $(4,5)$-formation. An
\emph{$(r,s)$-formation-free sequence} is a sequence which is
$r$-sparse and does not contain any $(r,s)$-formation as a
subsequence.

Denote by $\F_{r,s}(n)$ the length of the longest possible
$(r,s)$-formation-free sequence on $n$ distinct symbols. Let $u$ be
a sequence with $\|u\| = r$ and $|u| = s$. Since $u$ is trivially
contained in every $(r,s)$-formation, it follows that $\Ex_u(n) \le
\F_{r,s}(n)$.

Klazar made a slight improvement to this observation, by noting that
if $r\ge 2$, then $u$ is contained in every $(r,s-1)$-formation, and
thus,
\begin{equation}\label{eq_Klazar_Ex_F}
\Ex_u(n) \le \F_{r,s-1}(n) \qquad \text{for $r\ge 2$.}
\end{equation}
(The case $r=1$ is not interesting in any case.) Klazar proved the
bound
\begin{equation*}
\F_{r,s}(n) \le n \cdot 2^{O{\left( \alpha(n)^{s-3} \right)}},
\end{equation*}
where the $O$ notation hides constants that depend on $r$ and $s$.
Together with (\ref{eq_Klazar_Ex_F}), this implies that
\begin{equation*}
\Ex_u(n) \le n \cdot 2^{O{\left(\alpha(n)^{s-4} \right)}}.
\end{equation*}

\subsection{Our results}

In this paper we present several new results.

First, we make a small improvement on the argument of Agarwal et
al.~\cite{ASS89,DS_book} and prove:

\begin{theorem}\label{thm_lambda_new_upper}
Let $s\ge 3$ be fixed, and let $t = \lfloor (s-2) / 2 \rfloor$. Then
\begin{equation*}
\lambda_s(n) \le
\begin{cases}
n \cdot 2^{(1/t!)\alpha(n)^t + O\left(\alpha(n)^{t-1}\right)}, &
\text{$s$ even};\\
n \cdot 2^{(1/t!) \alpha(n)^t \log_2 \alpha(n) + O\left(\alpha(n)^t
\right)}, & \text{$s$ odd}.
\end{cases}
\end{equation*}
\end{theorem}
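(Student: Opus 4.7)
The plan is to retrace the double-induction argument of Agarwal, Sharir, and Shor~\cite{ASS89,DS_book}, while tightening the bookkeeping so that the leading constant in the exponent drops from $1$ to $1/t!$. Following the standard ASS setup, I would work with a two-parameter quantity $\lambda_s(m, n)$---the maximum length of a Davenport--Schinzel sequence of order $s$ on $n$ symbols whose positions are partitioned into $m$ blocks of distinct symbols each---and the usual Ackermann-type hierarchy of functions $a_k(\cdot)$ built by iterated composition, so that $a_k(O(1)) = n$ corresponds roughly to $k \le \alpha(n)$. The ASS partition/merge scheme produces, after the usual manipulations, a recurrence of the schematic form
\begin{equation*}
\lambda_{2k+2}(m, n) \le 2\,\lambda_{2k+2}(m-1, n') \cdot 2^{\,O{(\lambda_{2k}(m,\,\cdot))}} + \text{lower-order terms},
\end{equation*}
where $n'$ is $n$ minus a small correction. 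Writing $f_{2k+2}(m)$ for the logarithm base $2$ of $\lambda_{2k+2}(m, n)/n$ yields the additive, Pascal-type recurrence
\begin{equation*}
f_{2k+2}(m) \le f_{2k+2}(m-1) + f_{2k}(m) + O(1).
\end{equation*}

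The entire improvement is contained in one observation: solving this Pascal recurrence from the even base case $f_4(m) = O(m)$ by the standard Pascal-triangle identity gives
\begin{equation*}
f_{2t+2}(m) \le \binom{m + t - 1}{t} + O{\bigl( m^{t-1} \bigr)} = \frac{m^t}{t!} + O{\bigl( m^{t-1} \bigr)},
\end{equation*}
whereas the ASS analysis effectively bounded each inductive step by $m$ and multiplied, yielding the cruder $m^t$ with coefficient $1$. Specializing $m = \alpha(n) + O(1)$ so that $a_k(m) \ge n$, and then removing the $m$-block assumption at the standard constant-factor cost, produces the claimed $(1/t!)\alpha(n)^t + O(\alpha(n)^{t-1})$ exponent, proving the even case.

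For odd $s = 2t+3$ the same framework applies, but one feeds the even bound into an additional ``order-$3$'' layer (essentially the same step that converts $\lambda_{s-1}(n)$ to $\lambda_s(n)$ in Hart--Sharir style); this extra layer multiplies the exponent by a $\log_2 \alpha(n)$ factor, which is the source of the $\log_2 \alpha(n)$ in the odd part of the statement. The main obstacle I foresee is not the binomial identity itself but the careful accounting of lower-order slack: the additive $O(1)$ in the recurrence, the constants arising from the partition scheme, and the interaction between the induction on $s$ and the induction on the Ackermann level $k$ must all be charged to the $O(\alpha(n)^{t-1})$ (or $O(\alpha(n)^t)$) error term, rather than bleeding into the leading coefficient. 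In particular, the base case $s = 4$ needs to be proved sharply enough that the two-step induction on $s$ does not accumulate a constant factor per iteration, and the change-of-variables between $m$ and $\alpha(n)$ must not absorb the $1/t!$ saving back into the $O(\cdot)$ error.
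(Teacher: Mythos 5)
Your overall strategy coincides with the paper's first technique: pass to the block-counted quantity $\psi_s(m,n)$, run a double induction on the order $s$ and on the inverse-Ackermann level $k$, and extract the $1/t!$ by solving an additive Pascal-type recurrence for the logarithms of the coefficients, with base cases $\Theta(k)$ for $s=4$ and $\Theta(\log_2 k)$ for $s=3$, finally setting the level to $\alpha(\cdot)$ and reducing $\lambda_s(n)$ to $\psi_s(2n,n)$ at lower-order cost. The genuine gap is in where you locate the improvement. You assert that the ASS partition scheme ``after the usual manipulations'' already yields the recurrence $f_{s}(k)\le f_{s}(k-1)+f_{s-2}(k)+O(1)$ and that ASS merely solved it loosely. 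That is not the case: in the ASS recurrence the middle (global) symbols of each layer are bounded via $\lambda_{s-2}$ of the number of such symbols, and the resulting bound has the form $\psi_s(m,n)\le \mathcal F_{s,k}(n)\,m\alpha_k(m)+\mathcal G_{s,k}(n)\,n$ with coefficients depending on $n$; the leading constant $1$ in the exponent is a structural consequence of that recurrence, not of careless solving. To make the Pascal structure with $n$-free coefficients available one must first strengthen the recurrence, and this is the actual new content: (i) the middle-symbol subsequence of the $i$-th layer is an order-$(s-2)$ sequence that is itself partitioned into $m_i$ blocks of distinct symbols, so it is bounded by $\psi_{s-2}(m_i,n^*_i)$ rather than by $\lambda_{s-2}$; and (ii) one introduces the quantities $n^*_i$ (global symbols meeting layer $i$) and bounds $\sum_i n^*_i\le\psi_s(b,n^*)+b$, which is the term estimated recursively at level $k-1$ after choosing $b\approx m/\alpha_{k-1}(m)^{s-2}$ layers. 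It is precisely the product ``level-$(k-1)$ bound for order $s$'' times ``level-$k$ bound for order $s-2$'' that produces the coefficient recurrence $Q_{s,k}\approx Q_{s-2,k}\,Q_{s,k-1}$, i.e.\ your Pascal recurrence; without (i) and (ii) the recurrence you propose to solve is simply not established.

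Two further problems. Your displayed schematic recurrence is not a correct intermediate statement: it multiplies sequence lengths, decrements the number of blocks by one, and then specializes that same parameter to $\alpha(n)+O(1)$, thereby conflating the block count (which is as large as $2n$) with the inverse-Ackermann level; the descent mechanism---grouping the $m$ blocks into $b\approx m/\alpha_{k-1}(m)^{s-2}$ layers and applying the level-$(k-1)$ bound to $\psi_s(b,n^*)$---is what actually drives the induction and must appear explicitly. Also, the odd case is not obtained by feeding the even bound through an extra ``order-$3$ layer'' that multiplies the exponent by $\log_2\alpha(n)$; no such reduction from order $s-1$ to order $s$ exists. Instead one runs the same induction along the odd chain $s=3,5,7,\ldots$, with base case $\psi_3(m,n)=O\bigl(km\alpha_k(m)+kn\bigr)$, whose coefficient $\Theta(k)$ contributes $\log_2 k$ at the bottom and propagates through the Pascal recurrence to give $(1/t!)\alpha(n)^t\log_2\alpha(n)$. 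The final reduction you sketch (greedy partition into at most $2n$ blocks, with $\varphi_{s-2}(n)$ handled by induction on $s$ in steps of two) is fine and indeed only affects lower-order terms.
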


Thus, the upper bounds for $\lambda_s(n)$ are now in line with
Conjecture~\ref{conjecture_lambda}, and for $s$ even they are also
tight up to lower-order terms in the exponent.

More importantly, we also present a new technique for deriving upper
bounds for $\lambda_s(n)$. Our new technique is based on some
recurrences very similar to those used by Alon et
al.~\cite{interval_chains}, for the problem of stabbing interval
chains with $j$-tuples.

With our new technique we re-derive Klazar's upper bound
(\ref{eq_lambda_3_old_sandwich}) for $\lambda_3(n)$, as well as our
new bounds in Theorem~\ref{thm_lambda_new_upper} for $\lambda_s(n)$,
$s\ge 4$. We also apply our technique to formation-free sequences,
proving that:

\begin{theorem}\label{thm_formation_free_upper}
For $s\ge 4$ we have
\begin{equation*}
\F_{r,s}(n) \le
\begin{cases}
n \cdot 2^{ (1/t!) \alpha(n)^t + O{\left( \alpha(n)^{t-1} \right)}},
&
\text{ $s$ odd};\\
n \cdot 2^{ (1/t!) \alpha(n)^t \log_2 \alpha(n) + O{\left(
\alpha(n)^t \right)}}, &
\text{ $s$ even};\\
\end{cases}
\end{equation*}
where $t = \lfloor (s-3) / 2 \rfloor$. (The $O$ notation hides
factors dependent on $r$ and $s$.)
\end{theorem}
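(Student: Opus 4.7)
The plan is to adapt the new technique that the paper develops for $\lambda_s(n)$---which in turn parallels the stabbing-interval-chains approach of Alon et al.---to the setting of formation-free sequences. The key idea is a two-parameter recurrence relating $\F_{r,s}(n)$ to $\F_{r,s-1}$ evaluated on a smaller instance, unrolled along the inverse Ackermann hierarchy. The relation $\lambda_s(n) \le \F_{2,s+1}(n)$ from (\ref{eq_Klazar_Ex_F}) is consistent with the index shift $t = \lfloor(s-3)/2\rfloor$ here versus $\lfloor(s-2)/2\rfloor$ for $\lambda_s$, so we expect precisely the same recurrence structure, with $r$ entering only through the hidden constants.

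First I would partition an $(r,s)$-formation-free sequence $S$ on $n$ symbols into a sequence of contiguous blocks. Inside each block the length is controlled by $\F_{r,s-1}$ applied to the number of distinct symbols appearing in that block; for the cross-block contribution, each symbol is classified by its pattern of occurrences across blocks, and one shows that if too many symbols exhibit a sufficiently rich pattern, then they can be assembled into an $(r,s)$-formation embedded in $S$. This yields a recurrence of the schematic form
\begin{equation*}
\F_{r,s}(n) \;\le\; m \cdot \F_{r,s-1}(n_1) \;+\; C_{r,s}\cdot \F_{r,s}(n_2) \;+\; O_{r,s}(n),
\end{equation*}
where $m$ is the number of blocks and $n_1,n_2$ are smaller symbol counts depending on the block partition. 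The extra $\log_2 \alpha(n)$ factor in the case $s$ even arises from a dyadic-halving step, mirroring how the analogous $\lambda_s$ recurrence behaves at the odd/even parity transition.

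Next I would unroll the recurrence along the Ackermann hierarchy. Choosing at level $k$ the block count $m$ in terms of the Ackermann function $A_k$, the inductive hypothesis $\F_{r,s-1}(n_1) \le n_1 \cdot 2^{(1/(t-1)!)\alpha(n_1)^{t-1} + \cdots}$ combines with the recurrence so that each level multiplies the exponent by roughly $\alpha(n)/t$; telescoping over the $t$ levels required produces the leading coefficient $1/t!$, exactly as in Theorem~\ref{thm_lambda_new_upper}. The parity split, and the base cases $s=4$ (giving essentially $\F_{r,4}(n) = O_{r}(n\alpha(n))$) and $s=5$ (giving $n\cdot 2^{O_r(\alpha(n))}$), proceed as for $\lambda_s$.

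The main obstacle will be the combinatorial step at the heart of the recurrence: proving that too many symbols with rich cross-block interaction force an embedded $(r,s)$-formation. For the alternation pattern $abab\ldots$ one need only exhibit two symbols alternating the correct number of times, but a formation requires $r$ distinct symbols appearing together in $s$ successive permutations, so the extraction becomes a pigeonhole/Ramsey-type selection applied block by block, with a careful quantitative tracking of how "rich" a pattern has to be before $r$ mutually compatible symbols can be pulled out. Once this step is in place at the right quantitative level---bounding the number of wide-spanning symbols as a function of the number of blocks---the remainder of the analysis is a transcription of the template already worked out for $\lambda_s(n)$.
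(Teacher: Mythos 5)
Your high-level plan points in the same direction as the paper's argument, but as written it has a genuine gap: the step you yourself flag as ``the main obstacle'' is precisely the new content of the proof, and nothing in your sketch supplies it. Moreover, your recurrence is set up on the wrong quantity. The technique does not recurse on the length $\F_{r,s}(n)$ with symbol counts as arguments; it first reduces to a block-parameterized length $\psif_{r,s}(m,n)$ (Definition~\ref{def_psif} and Lemma~\ref{lemma_F_to_psif}, where the greedy partition into maximal $(r,s-2)$-formation-free segments---note $s-2$, not $s-1$---gives $m\le 2n$, and the $r$-sparsification of Lemma~\ref{lemma_r_sparsify} is needed), and then, crucially, it \emph{inverts} the problem: Lemma~\ref{lemma_psif_to_FFF} trades $\psif_{r,s}(m,n)$ for $\Nf_{r,s,k}(m)$, the maximum number of distinct symbols in an $m$-block sequence with no $(r,s)$-formation in which every symbol occurs at least $k$ times. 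It is this dual quantity, not $\F$ itself, that satisfies the clean recurrence (Recurrence~\ref{rec_AFF_gral}) whose iteration down the inverse Ackermann hierarchy produces the coefficient $1/t!$; your schematic inequality $\F_{r,s}(n)\le m\,\F_{r,s-1}(n_1)+C\,\F_{r,s}(n_2)+O(n)$ carries neither the multiplicity parameter nor the block count in the recursive position, so the ``unrolling along the hierarchy'' you describe has nothing concrete to act on.

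The combinatorial step you defer is also where the quantitative content lives, and the paper fills it not by a Ramsey-type selection but by two specific counting facts plus careful bookkeeping: $\Nf_{r,2,2}(m)=(r-1)(m-1)$ (Lemma~\ref{lemma_AFF_s_2_k_2}) and $\Nf_{r,s,s}(m)\le (r-1)\binom{m-2}{s-2}$ (Lemma~\ref{lemma_FFF_s_times}), the latter by pigeonholing the $s-2$ middle occurrences of each symbol so that $r$ symbols share the same $s-2$ blocks and hence span an $(r,s)$-formation; and, inside Recurrence~\ref{rec_AFF_gral}, the classification of symbols into local, left-, right-, middle-concentrated, and scattered, with thresholds $k_1,k_2,k_3$ combining as $k=k_2k_3+2k_1-3k_2-k_3+2$ so that no global symbol escapes. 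These pieces also force the index shift that explains the statement you are proving: $k\ge R_{s-1}(d)$ yields exponent $s-3$ in Corollary~\ref{cor_Nrsk_upper}, hence $t=\lfloor(s-3)/2\rfloor$ and the parity of the logarithmic factor swapped relative to Theorem~\ref{thm_lambda_new_upper}. Without them the proposal is a statement of intent rather than a proof; with them, your outline would indeed become the paper's argument via Corollaries~\ref{cor_Nrsk_upper} and~\ref{cor_psif_alphak_bound} followed by Lemma~\ref{lemma_F_to_psif} with $d=\alpha(m)$.
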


As an aside, we improve on Klazar's bound (\ref{eq_Klazar_Ex_F}):

\begin{lemma}\label{lemma_Ex_to_F}
Let $u$ be a sequence with $\|u\| = r$, $|u| = s$. Then, $\Ex_u(n)
\le \F_{r,s-r+1}(n)$.
\end{lemma}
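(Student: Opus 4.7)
The plan is to establish the stronger combinatorial statement that every $(r,s-r+1)$-formation contains $u$ as a subsequence (in the isomorphism sense). Once this is in hand the lemma follows immediately: every $r$-sparse $u$-free sequence must also avoid every $(r,s-r+1)$-formation, and hence has length at most $\F_{r,s-r+1}(n)$.

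I would prove the containment claim by induction on $s$. The base case $s = r$ is immediate: $u$ is then itself a permutation of $r$ symbols, a $(r,1)$-formation is a single permutation of $r$ symbols, and any two permutations of $r$ symbols are isomorphic.

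For the inductive step I let $F = \pi_1 \cdots \pi_{s-r+1}$ be an arbitrary $(r,s-r+1)$-formation and split on the status of the last letter $u_s$. If the symbol of $u_s$ occurs elsewhere in $u$, I drop the final permutation from $F$ and the final letter from $u$: the truncated sequence $u' = u_1 \cdots u_{s-1}$ still uses $r$ distinct symbols, so by the induction hypothesis it embeds into the $(r,s-r)$-formation $\pi_1 \cdots \pi_{s-r}$ via some injection $\phi$; the embedding is extended by locating the already-assigned image $\phi(u_s)$ inside $\pi_{s-r+1}$. If instead $u_s$ is the only occurrence of its symbol, I reserve the last letter $\beta$ of $\pi_{s-r+1}$ as the image of $u_s$, delete every copy of $\beta$ from $F$ to obtain an $(r-1,s-r+1)$-formation $F'$, and apply the induction hypothesis to $u' = u_1 \cdots u_{s-1}$ (which uses $r-1$ symbols and satisfies $(s-1)-(r-1)+1 = s-r+1$) to embed $u'$ into $F'$; I then set $p_s := |F|$, the position of the reserved $\beta$.

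The delicate step is the unique-symbol case: the reserved $\beta$ must sit strictly after every position used by the embedding of $u'$. Choosing $\beta$ to be the very last letter of $\pi_{s-r+1}$ makes this automatic, since $F'$ omits the position $|F|$ and therefore every $p_j$ with $j \le s-1$ is forced to be strictly smaller than $|F|$. With this choice the induction closes cleanly; the degenerate regimes $r \le 2$ or $s = r$ fall into either the base case or the repeated-symbol case and require no separate argument.
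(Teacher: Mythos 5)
Your proof is correct. Like the paper, you reduce the lemma to the combinatorial claim that every $(r,s-r+1)$-formation contains $u$, but your proof of that claim takes a genuinely different route. The paper argues by an explicit one-shot construction: it partitions $u$ into $s-r+1$ blocks by merging the first occurrence of each symbol with the immediately preceding letter, and then defines a single permutation $\sigma$ by scanning the blocks from right to left, greedily assigning the rightmost still-free letters of each permutation $\ell_j$ to the ``new'' symbols of the block $B_j$; the global injectivity of $\sigma$ is what guarantees that the leading (repeated) symbol of each block also lands correctly inside $\ell_j$. You instead induct on $|u|$, peeling off the last letter: if its symbol repeats, you drop the last permutation and embed the shortened word into the remaining $(r,s-r)$-formation; if it is the unique occurrence of its symbol, you reserve the last letter of the last permutation, delete that symbol throughout to get an $(r-1,s-r+1)$-formation, and recurse, with the arithmetic $(s-1)-(r-1)+1=s-r+1$ playing exactly the role of the paper's block-merging count. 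Unrolled, your recursion reproduces the paper's block-to-permutation correspondence and its right-to-left greedy choice of images, so the two arguments are close kin; what your packaging buys is that the order and injectivity requirements are immediate at each step (no free-symbol bookkeeping or deferred assignment of the block-leading symbols), making the verification shorter, while the paper's version exhibits the block structure of $u$ explicitly and produces the embedding permutation $\sigma$ directly. Your handling of the degenerate cases is also fine: case 2 only arises when $r\ge 2$, and both recursive calls stay within the range where the inductive statement applies.
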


This, together with Theorem~\ref{thm_formation_free_upper},
yields:\footnote{Klazar himself~\cite{klazar92} speculated that it
should be possible to achieve roughly $\Ex_u(n) \le n\cdot
2^{O\bigl(\alpha(n)^{s/2}\bigr)}$.}

\begin{theorem}\label{thm_Ex_upper}
Let $u$ be a sequence with $\|u\| = r$, $|u| = s$, and $s\ge r+3$.
Let $t = \lfloor (s-r-2) / 2 \rfloor$. Then,
\begin{equation*}
\Ex_u(n)\\ \le
\begin{cases}
n \cdot 2^{(1/t!)\alpha(n)^t + O\left( \alpha(n)^{t-1}\right)}, &
\text{ $s-r$ even};\\
n \cdot 2^{(1/t!)\alpha(n)^t \log_2 \alpha(n) + O\left(\alpha(n)^t
\right)}, & \text{ $s-r$ odd}.
\end{cases}
\end{equation*}
\end{theorem}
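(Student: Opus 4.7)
The plan is to combine Lemma \ref{lemma_Ex_to_F} with Theorem \ref{thm_formation_free_upper} directly. By Lemma \ref{lemma_Ex_to_F} we have $\Ex_u(n) \le \F_{r,\,s-r+1}(n)$, so it suffices to apply the formation-free bound at the parameter $s' := s-r+1$. The standing hypothesis $s \ge r+3$ translates into $s' \ge 4$, which is exactly the range in which Theorem \ref{thm_formation_free_upper} is stated, so the application is legal.

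Next I match up the $t$-parameters. Theorem \ref{thm_formation_free_upper} uses $t' = \lfloor (s'-3)/2 \rfloor$; substituting $s' = s-r+1$ gives $t' = \lfloor (s-r-2)/2 \rfloor = t$, which coincides with the $t$ defined in the statement of Theorem \ref{thm_Ex_upper}. This is the only algebraic identity needed and is immediate.

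The remaining bookkeeping is a parity check. Theorem \ref{thm_formation_free_upper} yields the bound $n\cdot 2^{(1/t!)\alpha(n)^t + O(\alpha(n)^{t-1})}$ for $s'$ odd and the bound with an extra $\log_2 \alpha(n)$ factor in the exponent for $s'$ even. Since $s' = s-r+1$, the parity of $s'$ is the opposite of the parity of $s-r$: namely, $s'$ is odd precisely when $s-r$ is even, and $s'$ is even precisely when $s-r$ is odd. Feeding these two cases into Theorem \ref{thm_formation_free_upper} produces exactly the two cases claimed in Theorem \ref{thm_Ex_upper}.

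There is essentially no genuine obstacle here --- the statement is a direct corollary of the two prior results, and no new estimates on $\alpha(n)$ or on the relevant recurrences are required. The only point that warrants explicit mention in the write-up is the parity shift introduced by the ``$s-r+1$'' in Lemma \ref{lemma_Ex_to_F}, since otherwise a casual reader might expect ``$s-r$ even'' to line up with the non-logarithmic case, whereas the intermediate quantity $s' = s-r+1$ flips this correspondence.
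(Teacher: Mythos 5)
Your proposal is correct and is exactly the derivation the paper intends: the text immediately preceding Theorem~\ref{thm_Ex_upper} states that it follows by combining Lemma~\ref{lemma_Ex_to_F} with Theorem~\ref{thm_formation_free_upper}, which is what you do. Your parameter substitution $s' = s-r+1$, the identity $t' = \lfloor(s'-3)/2\rfloor = \lfloor(s-r-2)/2\rfloor = t$, and the parity flip (odd $s'$ corresponding to even $s-r$) are all accurate and account for the case labels in the statement.
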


Note that Theorem~\ref{thm_Ex_upper} is a generalization of
Theorem~\ref{thm_lambda_new_upper}: Taking $r = 2$ and $u =
abab\ldots$ of length $s+2$ yields the theorem once again.

Regarding lower bounds, we prove:

\begin{theorem}\label{thm_lambda3_lower}
$\lambda_3(n) \ge 2n \alpha(n) - O(n)$.
\end{theorem}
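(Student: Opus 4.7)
The plan is to revisit the Hart--Sharir/Wiernik--Sharir construction that underlies the bound $\lambda_3(n) \ge \tfrac12 n \alpha(n) - O(n)$ in (\ref{eq_lambda_3_old_sandwich}), and to sharpen it so as to improve the leading coefficient by a factor of $4$. Recall that this construction produces, for each ``level'' $k$ and ``width'' $m$, a $2$-sparse $ababa$-free sequence obtained by substituting into a level-$(k-1)$ outer sequence a suitable ``fan-out'' of fresh symbols at each occurrence of each outer letter. The recurrences governing the length $L_k(m)$ and the alphabet size $N_k(m)$ are of Ackermann type, and the ratio $L_k(m)/N_k(m)$ converges, after the standard inversion $k=\alpha(n)$, to a constant times $\alpha(n)$; the task is precisely to push this constant from $1/2$ up to $2$.

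The first step is to isolate the two orthogonal sources of slack in the classical construction: (i) asymmetry in the placement of the fan-out material with respect to an outer letter, and (ii) lack of ``reuse'' of fan-out symbols between consecutive outer occurrences. I would then design a modified substitution that (i) places a fresh group of new symbols on \emph{both} sides of each outer occurrence (the $ababa$-free property is closed under reversal, so this does not by itself cause any new forbidden alternation), and (ii) identifies the right-hand fan-out of one occurrence with the left-hand fan-out of the next, so that each inserted group of new symbols is ``charged'' to two consecutive outer letters rather than to one. Together, these two modifications roughly quadruple the length produced per outer letter without inflating the alphabet size or changing the level of the Ackermann recursion that drives the construction.

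With the modified construction $\Psi'_k(m)$ in hand, the proof then proceeds along standard lines: (a) verify that $\Psi'_k(m)$ is still $2$-sparse and $ababa$-free, by case analysis on whether each of the two letters $a,b$ in a potential alternation is ``old'' (inherited from the outer sequence) or ``new'' (inserted by the substitution), and on whether the two letters sit inside the same substituted block, adjacent blocks, or far-apart blocks; (b) set up the modified recurrences $L'_k(m)$, $N'_k(m)$; (c) solve them to obtain $L'_k(m) \ge 2 m \cdot a_k(m) - O(m)$ against $N'_k(m) \le m \cdot a_k(m) + O(m)$, where $a_k$ denotes the appropriate row of the Ackermann hierarchy; and (d) choose $k = \alpha(n)$ and the corresponding $m$ to extract the claimed bound $\lambda_3(n) \ge 2 n \alpha(n) - O(n)$.

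The main obstacle is step (a), and specifically the fan-out \emph{sharing} of modification (ii): this is precisely the kind of change that can accidentally create a forbidden $ababa$ straddling the boundary between two consecutive outer occurrences, since a symbol from the shared fan-out now appears in the context of both neighboring outer letters rather than just one. Ruling this out requires a careful combinatorial check that every alternation created across a shared boundary still has length at most $4$. Once this verification is carried out, the rest of the argument is a routine re-solution of Ackermann-type recurrences, with the improved leading constant tracking exactly the factor-$4$ gain from the symmetrization and sharing.
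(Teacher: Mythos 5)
Your proposal is a plan rather than a proof: the one step on which everything hinges---verifying that the symmetrized, fan-out-sharing substitution is still $ababa$-free---is exactly the step you defer (``the main obstacle is step (a)''), and nothing in the write-up indicates how the case analysis would go or why sharing a fan-out between two consecutive outer letters cannot create a length-$5$ alternation. The danger is concrete: a new symbol must recur (otherwise it contributes nothing to the length), and once it appears in two shared groups, an outer symbol occurring before, between, and after those groups already yields $ababa$; ruling this out is precisely the delicate placement problem, and the remark that $ababa$-freeness is ``closed under reversal'' does not address it. There is also a quantitative problem with the accounting. The upper bound $\lambda_3(n) \le 2n\alpha(n) + O\bigl(n\sqrt{\alpha(n)}\bigr)$ forces the average multiplicity of a symbol to be at most about $2\alpha(n)$, so a modification that keeps $k=\alpha(n)$ recursion levels cannot ``roughly quadruple the length produced per outer letter'' relative to the classical multiplicity-$(d\pm O(1))$ construction; at the construction level only a factor of $2$ is available. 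The remaining factor of $2$ in the gap between $\tfrac12 n\alpha(n)$ and $2n\alpha(n)$ comes from the interpolation to general $n$, which the classical treatment wastes by passing to level $d+1$; your step (d) is silent on this, so even granting your modified construction you would only recover a coefficient of $1$.

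For comparison, the paper does not graft extra material onto the Wiernik--Sharir construction. It builds a fresh two-parameter family $Z_d(m)$ in which every symbol appears exactly $2d+1$ times, with designated \emph{special blocks} of length $m$ consisting solely of first and last occurrences; the recursive step composes $Z_d(m-1)$ with $Z_{d-1}(f)$ and increases multiplicities by duplicating each special-block symbol into an \emph{adjacent} pair $aa$ straddling a special-block boundary of a copy of $Z_d(m-1)$. Adjacency of the duplicated pair is what makes the no-$ababa$ check essentially trivial, in contrast to your shared fan-outs. The coefficient $2$ then comes from two separate sources, each worth a factor of $2$: the multiplicity $2d+1$ instead of $d\pm O(1)$, and an interpolation step that concatenates $\lfloor n/N^*_d\rfloor$ disjoint copies of the diagonal sequence $Z_d(d)$ (after showing the average block length $V_d(d)\ge d/2$ is large, so deleting interface repetitions costs only a $2/d$ fraction), rather than moving to $Z_{d+1}$. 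If you want to salvage your approach, you would need both to carry out the freeness verification for the sharing step in full and to redo the interpolation along these more careful lines.
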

\begin{corollary}\label{cor_lim_lambda_3}
$\lim_{n\to\infty} \lambda_3(n) / \bigl(n \alpha(n) \bigr) = 2$.
\end{corollary}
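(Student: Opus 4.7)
The plan is to observe that Corollary~\ref{cor_lim_lambda_3} follows immediately by sandwiching $\lambda_3(n)$ between matching leading terms. Specifically, I would combine the upper bound from (\ref{eq_lambda_3_old_sandwich}), namely Klazar's
\begin{equation*}
\lambda_3(n) \le 2n\alpha(n) + O\bigl(n\sqrt{\alpha(n)}\bigr),
\end{equation*}
with the new lower bound from Theorem~\ref{thm_lambda3_lower},
\begin{equation*}
\lambda_3(n) \ge 2n\alpha(n) - O(n).
\end{equation*}
Both bounds now exhibit the same leading coefficient $2$, so the coefficient is pinned down.

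Next, I would divide both inequalities by $n\alpha(n)$ to obtain
\begin{equation*}
2 - O\bigl(1/\alpha(n)\bigr) \;\le\; \frac{\lambda_3(n)}{n\alpha(n)} \;\le\; 2 + O\bigl(1/\sqrt{\alpha(n)}\bigr).
\end{equation*}
The only genuine content beyond Theorem~\ref{thm_lambda3_lower} is the standard fact that $\alpha(n)\to\infty$ as $n\to\infty$, which makes both error terms tend to $0$. A squeeze argument then yields $\lim_{n\to\infty}\lambda_3(n)/(n\alpha(n)) = 2$.

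Since the bulk of the work is already contained in Theorem~\ref{thm_lambda3_lower} (matching Klazar's upper bound coefficient), there is no real obstacle here; the corollary is essentially a formal consequence. The only point worth noting is that the upper-bound error $O(n\sqrt{\alpha(n)})$, though larger than the lower-bound error $O(n)$, is still $o(n\alpha(n))$, so it does not prevent the limit from existing and equaling $2$. This also answers affirmatively Klazar's question from~\cite{klazar_survey} about the existence of $\lim_{n\to\infty}\lambda_3(n)/\bigl(n\alpha(n)\bigr)$.
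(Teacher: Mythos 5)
Your argument is correct and is exactly the intended one: the paper states the corollary immediately after Theorem~\ref{thm_lambda3_lower} without a separate proof, because it follows directly by sandwiching between Klazar's upper bound (\ref{eq_lambda_3_old_sandwich}) and the new lower bound, then dividing by $n\alpha(n)$ and letting $\alpha(n)\to\infty$. You've supplied the routine squeeze argument the paper leaves implicit, and your observation that $O\bigl(n\sqrt{\alpha(n)}\bigr) = o\bigl(n\alpha(n)\bigr)$ is the only point that needs checking.
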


Finally, we present a simpler variant of the construction of
Agarwal, Sharir, and Shor \cite{ASS89,DS_book}, which achieves the
lower bounds (\ref{eq_lambda_lower}) for $s\ge 4$ even.

\subsection{The Ackermann function and its inverse}

Let us define (our version of) the Ackermann function and its
inverse.

The \emph{Ackermann hierarchy} is a sequence of functions $A_k(n)$,
for $k = 1, 2, 3, \ldots$ and $n\ge 0$, where $A_1(n) = 2n$, and for
$k\ge 2$ we let $A_k(n) = A_{k-1}^{(n)}(1)$. (Here $f^{(n)}$ denotes
the $n$-fold composition of $f$.) Alternatively, the definition of
$A_k(n)$ for $k\ge 2$ can be written recursively as
\begin{equation}\label{eq_def_Ak}
A_k(n) =
\begin{cases}
1, & \text{if $n=0$};\\
A_{k-1}{\bigl( A_k(n-1) \bigr)}, & \text{otherwise}.
\end{cases}
\end{equation}
We have $A_2(n) = 2^n$, and $A_3(n) = 2^{2^{\cdots^2}}$ is a
``tower" of $n$ twos. Each function in this hierarchy grows much
faster than the preceding one. Namely, for every fixed $k$ and $c$
we have $A_{k+1}(n) \ge A_k^{(c)}(n)$ for all large enough $n$.

Notice that $A_k(1) = 2$ and $A_k(2) = 4$ for all $k$, but $A_k(3)$
already grows very rapidly with $k$. We define the \emph{Ackermann
function} as $A(n) = A_n(3)$. Thus, $A(n) = 6, 8, 16, 65536, \ldots$
for $n = 1, 2, 3, 4, \ldots$.\footnote{The Ackermann function is
usually defined by ``diagonalizing" the hierarchy, letting $A'(n) =
A_n(n)$. This does not make any asymptotic difference, since
$A'(n-2) \le A(n) \le A'(n-1)$ for $n\ge 5$. (There are several
other definitions of the Ackermann hierarchy and function in the
literature, all of which exhibit equivalent rates of growth.) We
prefer the above definition because, first, ``diagonalization" is
unnecessary, and second, the corresponding definition
(\ref{eq_def_alpha}) of $\alpha(x)$ comes out simpler. For other
references where $\alpha(x)$ is defined as in (\ref{eq_def_alpha})
see Pettie \cite{pettie} and Seidel \cite[slide 85]{seidel_pdf}.}

For every fixed $k$ we have $A(n) \ge A_k(n)$ for all large enough
$n$. It is also easy to verify that
\begin{equation}\label{eq_another_rec_A}
A(n) = A_{n-2}{\bigl( A(n-1) \bigr)}, \qquad \text{for $n\ge 3$}.
\end{equation}

We then define the slow-growing inverses of these rapidly-growing
functions as
\begin{align}
\alpha_k(x) &= \min{\{ n \mid A_k(n) \ge x \}}, \label{eq_alphak_Ak}\\
\alpha(x) &= \min{\{ n \mid A(n) \ge x\}} \label{eq_alpha_A},
\end{align}
for all real $x\ge 0$.

Alternatively, and equivalently, we can define these inverse
functions directly without making reference to $A_k$ and $A$. We
define the \emph{inverse Ackermann hierarchy} by $\alpha_1(x) =
\lceil x/2 \rceil$ and
\begin{equation}\label{eq_rec_alpha_k}
\alpha_k(x) =
\begin{cases}
0, & \text{if $x\le 1$};\\
1+ \alpha_k{\bigl( \alpha_{k-1}(x) \bigr)}, & \text{otherwise};
\end{cases}
\end{equation}
for $k\ge 2$. In other words, for each $k\ge 2$, $\alpha_k(x)$
denotes the number of times we must apply $\alpha_{k-1}$, starting
from $x$, until we reach a value not larger than $1$. Thus,
$\alpha_2(x) = \lceil \log_2 x \rceil$, and $\alpha_3(x) = \log^*x$.
Finally, we define the \emph{inverse Ackermann function} by
\begin{equation}\label{eq_def_alpha}
\alpha(x) = \min{\{ k \mid \alpha_k(x) \le 3 \}}.
\end{equation}

It is an easy exercise (only slightly tedious) to prove by induction
that the above two definitions of $\alpha_k$ and $\alpha$ are
exactly equivalent.

\subsection{Organization of this paper}

Sections \ref{sec_DS_gral_upper}--\ref{sec_FF} contain our
upper-bound results. In Section~\ref{sec_DS_gral_upper} we show how
Theorem~\ref{thm_lambda_new_upper} reduces to bounding a function
denoted $\psi_s(m,n)$. In Section~\ref{sec_psi_old} we improve the
technique of Agarwal et al.~\cite{ASS89,DS_book} for bounding
$\psi_s(m,n)$. In Section~\ref{sec_psi_new} we present an
alternative technique, which yields the same improved bounds for
$\psi_s(m,n)$.

Section~\ref{sec_FF} addresses formation-free sequences. We first
prove Lemma~\ref{lemma_Ex_to_F}, and then we extend our new
technique to formation-free sequences, proving
Theorem~\ref{thm_formation_free_upper}.

Sections \ref{sec_constr_3}--\ref{sec_constr_even} contain our
lower-bound results. Section~\ref{sec_constr_3} presents our
construction for $\lambda_3(n)$ that proves
Theorem~\ref{thm_lambda3_lower}. Section~\ref{sec_constr_even}
contains our simplified construction of Davenport--Schinzel
sequences of even order $s\ge 4$.

Appendices \ref{app_lambda_3_klazar}--\ref{app_ack_like} contain
some technical calculations.

For completeness, we provide proofs in this paper of most of the
previous results we rely on.

\section{Upper bounds for Davenport--Schinzel sequences}
\label{sec_DS_gral_upper}

The upper bounds for $\lambda_s(n)$ are obtained by considering a
function with an additional parameter $m$:

\begin{definition}\label{def_psi}
Let $m$, $n$, and $s$ be positive integers. Then $\psi_s(m,n)$
denotes the maximum length of a Davenport--Schinzel sequence of
order $s$ on $n$ distinct symbols that can be partitioned into $m$
or fewer contiguous blocks, where each block contains only distinct
symbols.
\end{definition}

The relation between $\lambda_s(n)$ and $\psi_s(m,n)$ is as follows:

\begin{lemma}[\cite{ASS89, DS_book}]\label{lemma_lambda_to_psi}
Let $\varphi_{s-2}(n)$ be a nondecreasing function in $n$ such that
$\lambda_{s-2}(n) \le n \varphi_{s-2}(n)$ for all $n$. Then,
\begin{equation}\label{eq_lambda_to_psi}
\lambda_s(n) \le \varphi_{s-2}(n) \bigl( \psi_s(2n,n) + 2n \bigr).
\end{equation}
\end{lemma}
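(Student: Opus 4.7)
The plan is to follow the classical reduction of Agarwal, Sharir, and Shor. Let $S$ be a Davenport--Schinzel sequence of order $s$ on $n$ symbols with $|S|=\lambda_s(n)$. First I would partition $S$ greedily from left to right into maximal contiguous blocks $B_1,\ldots,B_m$ of pairwise distinct symbols; by maximality and by 2-sparsity, for every $i\ge 2$ the first symbol $a_i$ of $B_i$ must appear somewhere in $B_{i-1}$ and cannot be the last symbol of $B_{i-1}$. This setup simultaneously enables the $\psi_s$-bound on chunks of blocks and an $\lambda_{s-2}$-bound on the total number of blocks.

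To bound the number of blocks $m$, I would form an auxiliary 2-sparse sequence from the block-transition symbols $a_2,a_3,\ldots,a_m$, possibly interleaving them with the last symbols of their preceding blocks to guarantee 2-sparsity on at most $n$ symbols. The key claim is that this auxiliary sequence contains no alternation of length $s$; that is, it is a DS sequence of order $s-2$. The reason is a pullback argument: any alternation $abab\ldots$ of length $s$ among the first-symbols lifts to an alternation of length $s+2$ in $S$, by prepending the occurrence of the first alternating symbol in its preceding block (which always exists because $a_i\in B_{i-1}$) and by a symmetric extension on the right. Such a length-$(s+2)$ alternation would contradict the order-$s$ property of $S$. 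Consequently the auxiliary sequence has length at most $\lambda_{s-2}(n)\le n\,\varphi_{s-2}(n)$, and a bit of bookkeeping for duplicates and boundary blocks yields $m\le 2n\,\varphi_{s-2}(n)+O(n)$.

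Finally, I would partition $B_1,\ldots,B_m$ into $k=\lceil m/(2n)\rceil\le \varphi_{s-2}(n)+O(1)$ consecutive chunks of at most $2n$ blocks each. Each chunk is itself a DS sequence of order $s$ on at most $n$ symbols, partitioned into at most $2n$ distinct-symbol blocks, so by the definition of $\psi_s$ its length is at most $\psi_s(2n,n)$. Summing over the $k$ chunks and absorbing the $O(1)$ slack into the $+2n$ additive term (using the trivial inequality $\psi_s(2n,n)\ge 2n$), I would conclude
\[
\lambda_s(n)=|S|\le k\cdot \psi_s(2n,n)\le \varphi_{s-2}(n)\bigl(\psi_s(2n,n)+2n\bigr),
\]
as claimed.

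The main obstacle lies in the second step: verifying rigorously that the auxiliary first-symbol sequence, once suitably normalized, really is a DS sequence of order $s-2$. The delicacy is in handling consecutive duplicates $a_i=a_{i+1}$ without losing too much length, and in confirming that the lifted alternation in $S$ truly has length $s+2$ (the prepended and appended occurrences must be at distinct positions from those already used, in the correct alternation pattern). A clean device to handle this is to work with the interleaved sequence of last-symbols and next-first-symbols of adjacent blocks, which is automatically 2-sparse on at most $n$ symbols and whose alternation length is related to the alternation length in $S$ with a loss of exactly $2$.
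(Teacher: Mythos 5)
Your decomposition is quite different from the paper's, and unfortunately the difference introduces a gap that breaks the final step. The paper partitions $S$ greedily into at most $2n$ blocks, each of which is a DS sequence of \emph{order $s-2$} (a new block starts only when adding a symbol would create an alternation of length $s$); the $2n$ bound on the number of blocks comes from showing that each block must contain the first or last occurrence of some symbol. Then each block's length is bounded by $\lambda_{s-2}(n_i)\le n_i\varphi_{s-2}(n)$, and the sum $\sum n_i$ is bounded by $\psi_s(2n,n)+2n$ via the ``take-first-occurrences'' reduction. In contrast, you partition into \emph{maximal distinct-symbol} blocks (many more of them), try to bound their number $m$ by roughly $2n\varphi_{s-2}(n)$ via an auxiliary first-symbol sequence, and then apply $\psi_s$ to chunks of $2n$ blocks.

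The fatal flaw is in the last arithmetic step. You get $k=\lceil m/(2n)\rceil\le \varphi_{s-2}(n)+O(1)$ and then claim $k\cdot\psi_s(2n,n)\le\varphi_{s-2}(n)\bigl(\psi_s(2n,n)+2n\bigr)$, ``absorbing the $O(1)$ slack into the $+2n$ additive term using $\psi_s(2n,n)\ge 2n$.'' That inequality goes the \emph{wrong} way: to absorb an additive constant $c'$ in $k$ you would need $c'\cdot\psi_s(2n,n)\le 2n\,\varphi_{s-2}(n)$, i.e. $\psi_s(2n,n)/n\le (2/c')\varphi_{s-2}(n)$. But $\psi_s(2n,n)/n$ grows like $\lambda_s(n)/n$, which dominates $\varphi_{s-2}(n)\approx\lambda_{s-2}(n)/n$ by roughly one level of the hierarchy, so the inequality fails for all large $n$. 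The factor $\psi_s(2n,n)$ is the superlinear term in the bound, not the one you can discard. Separately, your lifting argument as stated is also problematic: prepending ``the occurrence of the first alternating symbol in its preceding block'' produces a repeated $a\,a\,b\,a\,b\ldots$, not a longer alternation --- the extension has to be by the \emph{other} symbol and in the correct parity, and the required earlier/later occurrence of that other symbol is not guaranteed by the block-maximality you use. The paper's coarser partition into order-$(s-2)$ blocks is exactly what makes the extremal symbol belong to the right side at the right time, so that the $m\le 2n$ bound comes out cleanly with no multiplicative slack.
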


\begin{proof}
Let $S$ be a Davenport--Schinzel sequence of order $s$ on $n$
symbols with maximum length $\lambda_s(n)$. Partition $S$ greedily
from left to right into blocks $S_1, S_2, \ldots, S_m$, such that
each $S_i$ is a sequence of order $s-2$; in other words, when
scanning $S$ from left to right, start a new block $S_{i+1}$ only if
an additional symbol would cause $S_i$ to contain an alternation of
length $s$.\footnote{This greedy left-to-right approach is in fact
optimal---it yields a partition of $S$ into the minimum possible
number of blocks of specified order $r<s$.} We claim that $m$, the
number of blocks, is at most $2n$.

Indeed, consider some block $S_i$ for $i<m$. Since $S_i$ was
extended maximally to the right, it must contain an alternation
$abab\ldots$ of length $s-1$, which is extended to length $s$ by the
first symbol of $S_{i+1}$ (which is either $a$ or $b$, depending on
the parity of $s$). But then, we cannot have \emph{both} $b$
appearing in a previous $S_j$, $j<i$, \emph{and} $b$ or $a$
(depending on the parity of $s$) appearing in a subsequent $S_j$,
$j>i$, because then $S$ would contain a forbidden alternation of
length $s+2$.

Hence, each block $S_i$ (including the last one $S_m$) contains
either the first occurrence or the last occurrence of at least one
symbol. Thus, $m\le 2n$.

Let $n_i = \| S_i\|$. Then,
\begin{equation*}
\lambda_s(n) = |S| = \sum_{i=1}^m |S_i| \le \sum_{i=1}^m
\lambda_{s-2}(n_i) \le \sum_{i=1}^m n_i \varphi_{s-2}(n_i) \le
\varphi_{s-2}(n) \sum_{i=1}^m n_i.
\end{equation*}

Let us now bound $\sum n_i$. Construct a subsequence $S'$ of $S$ by
taking, for each block $S_i$, just the first occurrence of each
symbol in $S_i$. Note that $S'$ has length $|S'| = \sum n_i$ and,
being a subsequence of $S$, it contains no alternation of length
$s+2$. Furthermore, $S'$ is decomposable into $m$ blocks of distinct
symbols $S'_1, \ldots, S'_m$. However, $S'$ might contain adjacent
equal symbols at the interface between blocks, but by removing at
most one symbol from each block $S'_i$, we can obtain a sequence
$S''$ with no adjacent equal symbols. Therefore, $|S''| \le
\psi_{s}(m,n)$, and so $|S'|\le \psi_{s}(m,n) + m$. Since $m\le
2n$, we conclude that
\begin{equation*}
\lambda_s(n) \le \varphi_{s-2}(n) \bigl( \psi_{s}(2n,n) + 2n
\bigr). \qedhere
\end{equation*}
\end{proof}

In particular, for $s=3$ we have $\lambda_3(n) \le \psi_3(2n,n) +
2n$ (by taking $\varphi_1(n) = 1$, since $\lambda_1(n) = n$).
Actually for $s=3$ we have $\lambda_3(n) = \psi_3(2n,n)$ (Hart and
Sharir~\cite{HS86,DS_book}).

The main issue, then, is to bound $\psi_s(m,n)$. We present two
different techniques for bounding $\psi_s(m,n)$. The first one is a
minor modification of the technique of Agarwal et
al.~\cite{ASS89,DS_book}. The second one is our new technique. Both
techniques yield the following bounds:

\begin{lemma}\label{lemma_gral_psi_alphak_bound}
For $s=3$ we have
\begin{equation*}
\psi_3(m,n) = O{\bigl( km\alpha_k(m) + kn \bigr)} \qquad \text{for
all $k$}.
\end{equation*}
In general, for every fixed $s\ge 3$ we have
\begin{equation*}
\psi_s(m,n) \le C_{s,k}\bigl( m \alpha_k(m)^{s-2} + n\bigr) \qquad
\text{for all $k$},
\end{equation*}
for some constants $C_{s,k}$ of the form
\begin{equation*}
C_{s,k} =
\begin{cases}
2^{(1/t!) k ^t \pm O{\left( k^{t-1} \right)}}, & \text{$s$ even};\\
2^{(1/t!) k ^t \log_2 k \pm O{\left( k^t \right)}}, & \text{$s$
odd};
\end{cases}
\end{equation*}
where $t = \lfloor (s-2) / 2 \rfloor$.
\end{lemma}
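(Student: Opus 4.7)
The plan is to prove both statements by induction on the Ackermann-hierarchy parameter $k$, exploiting the identity $\alpha_k(m) = 1 + \alpha_k\bigl(\alpha_{k-1}(m)\bigr)$ from (\ref{eq_rec_alpha_k}). The base case will be a small $k$ (either $k=1$ or $k=2$), where a direct argument gives the desired bound: for $s=3$ one can use the crude $\psi_3(m,n) = O(m\alpha_2(m) + n)$ that comes from a standard Hart--Sharir style decomposition, and for general $s$ one uses the trivial $\psi_s(m,n) = O(sm)$ (which is $O(m\alpha_1(m))$) as the starting point of the recursion.

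For the inductive step, given a Davenport--Schinzel sequence $S$ of order $s$ partitioned into $m$ blocks on $n$ symbols, I would subdivide the $m$ blocks, from left to right, into $b \approx m / \alpha_{k-1}(m)$ contiguous \emph{chunks}, each itself a union of roughly $\alpha_{k-1}(m)$ blocks. Now classify each symbol $a$ as \emph{local} to a chunk if all its occurrences lie in a single chunk, and \emph{global} otherwise. Writing $n = n_{\mathrm{loc}} + n_{\mathrm{glob}}$, the local part of $S$ contributes, chunk by chunk, at most $\sum_i \psi_s\bigl(\alpha_{k-1}(m), n_i\bigr)$, which by the inductive hypothesis on the parameter $k-1$ is at most $C_{s,k-1}\bigl(m \alpha_{k-1}(m)^{s-2} + n_{\mathrm{loc}}\bigr)$. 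For the global part, I would compress each chunk down to a bounded number (two or so, depending on parity of $s$) of blocks, so that the projection onto the global symbols becomes a Davenport--Schinzel sequence of order $s$ partitioned into $O(b)$ blocks; applying the inductive hypothesis again, this contributes at most $C_{s,k}\bigl(b \cdot \alpha_k(b)^{s-2} + n_{\mathrm{glob}}\bigr)$, and here $\alpha_k(b) \le \alpha_k(m) - 1$ by the recurrence for $\alpha_k$.

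Combining the two contributions and using $\alpha_k(b) = \alpha_k(m) - 1$, the recurrence for $C_{s,k}$ reads (schematically)
\begin{equation*}
C_{s,k} \le C_{s,k-1} + C_{s,k} \cdot \left(1 - \frac{s-2}{\alpha_k(m)}\right)^{-\text{something}},
\end{equation*}
which, once simplified by a careful accounting of how $\alpha_k(m)^{s-2}$ absorbs the $\alpha_{k-1}(m)^{s-2}$ term coming from local symbols, yields a multiplicative recurrence $C_{s,k} \le C_{s,k-1} \cdot \bigl(1 + O(1/k)\bigr)$ in the even case and $C_{s,k} \le C_{s,k-1} \cdot \bigl(1 + O(\log k / k)\bigr)$ in the odd case. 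Iterating from the base case produces the claimed $(1/t!)k^t$ (respectively $(1/t!)k^t \log_2 k$) growth in the exponent of $C_{s,k}$, with $t = \lfloor (s-2)/2 \rfloor$.

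The main obstacle will be the careful bookkeeping that separates the cases of even and odd $s$ and produces exactly the $1/t!$ leading coefficient: this hinges on the observation that each application of the recurrence increases the exponent of $\alpha_k(m)$ by one not on every induction step on $k$, but only on every \emph{other} step (which is what gives $t = \lfloor (s-2)/2\rfloor$ rather than $s-2$), and on getting the linear-in-$m$ term and the linear-in-$n$ term to combine correctly without compounding across levels. The two techniques referenced in the excerpt --- the modified Agarwal--Sharir--Shor argument in Section~\ref{sec_psi_old} and the new chain-stabbing style argument in Section~\ref{sec_psi_new} --- will differ mainly in how they implement the compression step for global symbols, but both should feed into the same recurrence on $C_{s,k}$ and hence yield the same asymptotic bound.
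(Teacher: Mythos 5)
There is a genuine gap, and it is the central one. Your decomposition into chunks with local/global symbols matches the spirit of the paper's Recurrence~\ref{recurrence_psi}, but your treatment of the global symbols loses exactly the part of the sequence that is hard to count. A global symbol can occur many times \emph{inside} a single chunk, so ``compressing each chunk to a bounded number of blocks'' forces you to discard all but $O(1)$ occurrences per chunk, and those discarded occurrences are never accounted for anywhere in your bound $C_{s,k}\bigl(b\,\alpha_k(b)^{s-2}+n_{\mathrm{glob}}\bigr)$. The paper's missing idea is to classify the within-chunk occurrences of global symbols as \emph{starting}, \emph{middle}, or \emph{ending}: the starting (resp.\ ending) occurrences form a sequence of order $s-1$ over all layers, and the middle occurrences form, per layer, a sequence of order $s-2$ on $n^*_i$ symbols and $m_i$ blocks; the compressed $b$-block sequence is used only to bound $\sum_i n^*_i$ via $\psi_s(b,n^*)$ at level $k-1$. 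This forces a simultaneous induction on $s$ (absent from your proposal), and it is precisely the $s\to s-2$ step that produces $t=\lfloor (s-2)/2\rfloor$. Relatedly, your level assignment is off: bounding the local part of a chunk of $\approx\alpha_{k-1}(m)$ blocks by the level-$(k-1)$ hypothesis gives a term of order $C_{s,k-1}\,m\,\alpha_{k-1}(\alpha_{k-1}(m))^{s-2}$ (already for $k=3$ this is $m(\log\log m)^{s-2}$), which cannot be absorbed into $C_{s,k}\,m\,\alpha_k(m)^{s-2}$. In the paper the local parts are handled by induction on $m$ at the \emph{same} level $k$, with chunk size $\approx\alpha_{k-1}(m)^{s-2}$ chosen so that $\alpha_k$ of the chunk size is $\alpha_k(m)-1$, and level $k-1$ enters only through the $\psi_s(b,n^*)$ term.

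The schematic coefficient recurrence you write down also cannot yield the claimed constants: iterating $C_{s,k}\le C_{s,k-1}\bigl(1+O(1/k)\bigr)$ gives $C_{s,k}=k^{O(1)}$, and $1+O(\log k/k)$ gives $2^{O(\log^2 k)}$, neither of which is $2^{(1/t!)k^t}$ or $2^{(1/t!)k^t\log_2 k}$. The correct recurrence couples different orders: schematically $C_{s,k}\approx C_{s-2,k}\,C_{s,k-1}$ (see (\ref{eq_Psk_Qsk}) for $P_{s,k},Q_{s,k}$, or $R_s(d)$ in Corollary~\ref{cor_nskm}), so $\log_2 C_{s,k}\approx\sum_{i\le k}\log_2 C_{s-2,i}$, and each two-step drop in $s$ contributes one summation over $k$; that is what produces $(1/t!)k^t$ (Appendix~\ref{app_growth_constants}). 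Your explanation that ``the exponent of $\alpha_k(m)$ increases only every other step in $k$'' is not the mechanism---the exponent $s-2$ is fixed throughout. Finally, your base cases need repair: $\psi_s(m,n)=O(sm)$ is false (two blocks $12\ldots n\ n\ldots 21$ already give length $2n$); the paper's base case at $k=2$ is the polylogarithmic bound of Lemma~\ref{lemma_psi_log_bound}, which itself requires the same recurrence (with $b=2$) and the induction on $s$.
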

(Equivalent bounds for $\psi_3(m,n)$ and $\psi_4(m,n)$ were
previously derived by Hart and Sharir~\cite{HS86,DS_book}, and
Agarwal, Sharir, and Shor~\cite{ASS89,DS_book}, respectively. For
$s\ge 5$ these are improvements over \cite{ASS89,DS_book}, which for
$s\ge 6$ yield improved bounds for $\lambda_s(n)$.)

From Lemmas~\ref{lemma_lambda_to_psi}
and~\ref{lemma_gral_psi_alphak_bound} it follows that $\lambda_s(n)
= o(n \alpha_\ell(n))$ for every fixed $\ell$: Just take $k =
\ell+1$ in Lemma~\ref{lemma_gral_psi_alphak_bound}, bounding
$\varphi_{s-2}(n)$ in Lemma~\ref{lemma_lambda_to_psi} by induction.
Here the magnitude of the constants $C_{s,k}$ is irrelevant.

But we can also derive a tighter bound for $\lambda_s(n)$, namely
Theorem~\ref{thm_lambda_new_upper}, if we let $k$ grow very slowly
with $m$; for this the dependence of $C_{s,k}$ on $k$ is
significant:

\begin{proof}[Proof of Theorem~\ref{thm_lambda_new_upper}]
Take $k = \alpha(m)$ in Lemma~\ref{lemma_gral_psi_alphak_bound}
(recalling that $\alpha_{\alpha(m)}(m) \le 3$ by definition), and
substitute into Lemma~\ref{lemma_lambda_to_psi}. For $s = 3,4$ we
get $\lambda_3(n) = O{\left( n \alpha(n) \right)}$, $\lambda_4(n) =
O{\left(n \cdot 2^{\alpha(n)}\right)}$ (by taking $\varphi_1(n) =
1$, $\varphi_2(n) = 2$). For $s\ge 5$ we bound $\lambda_{s-2}(n)$ by induction on $s$, and we substitute the resulting bound for $\varphi_{s-2}(n)$ into (\ref{eq_lambda_to_psi}). We obtain the desired bounds (the factor
$\varphi_{s-2}(n)$ only affects lower-order terms in the exponent).
\end{proof}

\section{Bounding $\psi_s(m,n)$---improving the known technique}\label{sec_psi_old}

In this section we prove Lemma~\ref{lemma_gral_psi_alphak_bound} by
making a small improvement on the technique of Agarwal et
al.~\cite{ASS89,DS_book}. The main ingredient in the proof is the
following complicated-looking recurrence relation. This is a small
modification of the recurrence in \cite{ASS89,DS_book} (and more
complicated).

\begin{recurrence}\label{recurrence_psi}
Let $m,n \ge 1$ and $b\le m$ be integers, and let
\begin{equation*}
m = m_1 + m_2 + \cdots + m_b
\end{equation*}
be a partition of $m$ into $b$ nonnegative integers. Then, there
exists a partition of $n$ into $b+1$ nonnegative integers
\begin{equation}\label{eq_part_n_i}
n = n_1 + n_2 + \cdots + n_b + n^*,
\end{equation}
and there exist nonnegative integers $n^*_1, n^*_2, \ldots, n^*_b
\le n^*$ satisfying
\begin{equation}\label{eq_n*i}
n^*_1 + n^*_2 + \cdots + n^*_b \le \psi_s(b, n^*) + b,
\end{equation}
such that
\begin{equation}\label{eq_recurrence}
\psi_s(m,n) \le 2\psi_{s-1}(m, n^*) + 4m + \sum_{i=1}^b \Bigl(
\psi_{s-2}(m_i, n^*_i) + \psi_s(m_i, n_i) \Bigr).
\end{equation}
\end{recurrence}

Here it is appropriate to repeat Matou\v sek's advice
\cite[p.~179]{matou_book} to first study the proof below and then
try to understand the statement of the recurrence.

\begin{proof}
Let $S$ be a maximum-length Davenport--Schinzel sequence of order
$s$ that is partitionable into $m$ blocks $S_1, \ldots, S_m$, each
of distinct symbols. Thus, $|S| = \psi_s(m,n)$. Group the blocks
$S_1, \ldots, S_m$ into $b$ \emph{layers} $L_1, L_2, \ldots, L_b$
from left to right, by letting each layer $L_i$ contain $m_i$
consecutive blocks.

We partition the alphabet of $S$ into two sets of symbols. The
\emph{local} symbols are those that appear in only one layer, and
the \emph{global} symbols are those that appear in two or more
layers. Let $n_i$ be the number of symbols local to layer $L_i$, for
$1\le i\le b$, and let $n^*$ be the number of global symbols.
Equation (\ref{eq_part_n_i}) follows.

For each layer $L_i$, let $n^*_i$ denote the number of global
symbols that appear in $L_i$. Trivially $n^*_i \le n^*$ for all $i$.
To see that (\ref{eq_n*i}) holds, build a subsequence $S'$ of $S$ by
taking, for each layer $L_i$ and each global symbol $a$ in $L_i$,
just the first occurrence of $a$ within $L_i$. The sequence $S'$,
being a subsequence of $S$, does not contain any alternation of
length $s+2$. Furthermore, $S'$ consists of $b$ blocks of distinct
symbols, corresponding to the $b$ layers of $S$.

However, $S'$ might contain pairs of adjacent equal symbols at the
interface between blocks. But there are at most $b-1$ such pairs of
symbols, and by deleting one symbol from each pair, we finally
obtain a Davenport--Schinzel sequence. Bound (\ref{eq_n*i}) follows.

Each occurrence of a global symbol $a$ in a layer $L_i$ is
classified into \emph{starting}, \emph{middle}, or \emph{ending}, as
follows: If $a$ does not appear in any previous layer $L_j$, $j<i$,
we say that $a$ is a \emph{starting symbol} for $L_i$. Similarly, if
$a$ does not appear in any subsequent layer $L_j$, $j>i$, then $a$
is an \emph{ending symbol} for $L_i$. If $a$ appears both before and
after $L_i$, then $a$ is a \emph{middle symbol} for $L_i$.

Decompose $S$ into four sequences $T_1, T_2, T_3, T_4$ (not
necessarily contiguous), as follows: Let $T_1$ contain all
occurrences of the local symbols of $S$. Let $T_2$ contain all
occurrences of the starting global symbols in all the layers of $S$;
similarly, let $T_3$ contain all occurrences of the middle global
symbols, and let $T_4$ contain all occurrences of the ending global
symbols in all the layers of $S$. Thus, $|T_1| + \cdots + |T_4| =
\psi_s(m,n)$. Each sequence $T_1, \ldots, T_4$ inherits from $S$ the
partition into $b$ layers, in which the $i$-th layer is further
partitioned into $m_i$ blocks.

Each of the sequences $T_1, \ldots, T_4$ might contain pairs of
adjacent equal symbols, but these can only occur at the interface
between adjacent blocks. Hence, by removing at most $m-1$ symbols
from each sequence, we obtain sequences $T'_1, \ldots, T'_4$ with no
adjacent equal symbols. Thus, $\psi_s(m,n) \le |T'_1| + \cdots +
|T'_4| + 4m$. We now bound each of $|T'_1|, \ldots, |T'_4|$
individually.

Let us first consider $T'_1$. The $i$-th layer in $T'_1$ is a
Davenport--Schinzel sequence of order $s$ on $n_i$ symbols, and it
consists of $m_i$ blocks, each of distinct symbols. Thus
\begin{equation*}
|T'_1| \le \sum_{i=1}^b \psi_s(m_i,n_i).
\end{equation*}

Next consider $T'_2$. We claim that each layer in $T'_2$ is a
Davenport--Schinzel sequence of order $s-1$. Indeed, suppose for a
contradiction that some layer in $T'_2$ contains an alternation
$abab\ldots$ of length $s+1$. Then, since $a$ and $b$ are starting
symbols for this layer, they must both appear in $S$ in some
subsequent layer, and so $S$ would contain an alternation of length
$s+2$, a contradiction.

Furthermore, since each global symbol is a starting symbol for
exactly one layer, the layers in $T'_2$ have pairwise disjoint sets
of symbols, so \emph{all} of $T'_2$ is a Davenport--Schinzel
sequence of order $s-1$. A similar argument applies for $T'_4$.
Thus,
\begin{equation*}
|T'_2|, |T'_4| \le \psi_{s-1}(m, n^*).
\end{equation*}

Finally, consider $T'_3$. Each layer in $T'_3$ is composed of middle
global symbols, which appear in $S$ in both previous and subsequent
layers. Therefore, no layer in $T'_3$ can contain an alternation of
length $s$, or else $S$ would contain an alternation of length
$s+2$. Thus, each layer in $T'_3$ is a Davenport--Schinzel sequence
of order $s-2$. (However, the whole $T'_3$ is not necessarily of
order $s-2$.) Since the $i$-th layer in $T'_3$ contains $n^*_i$
different symbols and is partitioned into $m_i$ blocks, each of
distinct symbols, we have
\begin{equation*}
|T'_3| \le \sum_{i=1}^b \psi_{s-2}(m_i, n^*_i).
\end{equation*}

Bound (\ref{eq_recurrence}) follows.
\end{proof}

\begin{remark}\label{remark_upper_improv}
Our key improvement over the method of Agarwal, Sharir, and Shor
lies in the bound for $|T'_3|$. They noted that each layer in $T'_3$
is a sequence of order $s-2$, but they did not use the fact that the
blocks in each layer have distinct symbols. In addition, they did
not introduce the variables $n^*_i$.
\end{remark}

\subsection{Applying the recurrence relation}

We apply Recurrence~\ref{recurrence_psi} repeatedly to obtain
successively better upper bounds on $\psi_s(m,n)$. We first obtain a
polylogarithmic bound, and then we use induction to go all the way
down the inverse Ackermann hierarchy.

For $s\ge 3$ let $m_0(s)$ be a large enough constant (depending only
on $s$) such that
\begin{equation}\label{eq_m0s}
m \ge 2 + 2\lceil \log_2 m\rceil^{s-2} \qquad \text{for all $m\ge
m_0(s)$}.
\end{equation}
Define integers $P_{s,2}$, $Q_{s,2}$ for $s\ge 1$ by
\begin{equation} \label{eq_Ps2_Qs2_start}
P_{1,2} = P_{2,2} = 0, \qquad Q_{1,2} = 1,\ Q_{2,2} = 2,
\end{equation}
and, for $s\ge 3$,
\begin{equation} \label{eq_Ps2_Qs2_general}
\begin{split}
P_{s,2} &= 4P_{s-1,2} + 2P_{s-2,2} + 2Q_{s-2,2} + 8,\\
Q_{s,2} &= \max{\bigl\{ m_0(s),\ 2Q_{s-1,2} + 2Q_{s-2,2} \bigr\}}.
\end{split}
\end{equation}
The reason for our choice of $m_0(s)$ will become apparent later on,
in the proof of Lemma~\ref{lemma_psi_alphak_bound}. (Also recall
that we take $s$ to be a constant, so the growth of $P_{s,2}$,
$Q_{s,2}$ in $s$ is irrelevant for us.)

Our polylogarithmic bound is as follows:

\begin{lemma}\label{lemma_psi_log_bound}
For all $m$, $n$, and $s$, we have
\begin{equation*}
\psi_s(m,n) \le P_{s,2}\, m (\log_2 m)^{s-2} + Q_{s,2}\, n.
\end{equation*}
\end{lemma}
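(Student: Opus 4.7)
The plan is to prove the bound by a double induction, with $s$ as the outer variable and $m$ as the inner one. For the base cases $s=1,2$, we simply cite $\lambda_1(n)=n$ and $\lambda_2(n)\le 2n$, which give $\psi_1(m,n)\le n$ and $\psi_2(m,n)\le 2n$, matching $P_{1,2}=P_{2,2}=0$ and the specified values of $Q_{1,2}$, $Q_{2,2}$. For $s\ge 3$ with $m\le Q_{s,2}$, we fall back on the trivial bound $\psi_s(m,n)\le mn\le Q_{s,2}\,n$, handling all small-$m$ cases uniformly.

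For the inductive step with $s\ge 3$ and $m>Q_{s,2}$ (which forces $m\ge m_0(s)$), the plan is to invoke Recurrence~\ref{recurrence_psi} with $b=2$ and $m_1=\lfloor m/2\rfloor$, $m_2=\lceil m/2\rceil$. This yields nonnegative integers $n^*,n_1,n_2,n^*_1,n^*_2$ with $n_1+n_2+n^*=n$ and, using the trivial bound $\psi_s(2,n^*)\le 2n^*$, $n^*_1+n^*_2\le 2n^*+2$. We then substitute the inductive hypothesis into every term on the right-hand side of (\ref{eq_recurrence}): the outer induction bounds $\psi_{s-1}(m,n^*)$ and $\psi_{s-2}(m_i,n^*_i)$, while the inner induction bounds $\psi_s(m_i,n_i)$ since $m_i<m$.

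The resulting sum splits into an $n$-part and an $m$-part. The $n$-part is $Q_{s,2}(n-n^*)+2Q_{s-1,2}\,n^*+2Q_{s-2,2}(n^*+1)$, which the defining inequality $Q_{s,2}\ge 2Q_{s-1,2}+2Q_{s-2,2}$ collapses to at most $Q_{s,2}\,n$ plus an $O(1)$ constant to be absorbed into the $m$-part. The $m$-part is where the real work lies: the dominant contribution $P_{s,2}\,m(\log_2 m-1)^{s-2}$ from $\sum_i\psi_s(m_i,n_i)$ must be pushed up to $P_{s,2}\,m(\log_2 m)^{s-2}$, and the ``halving gap'' $P_{s,2}\,m\bigl[(\log_2 m)^{s-2}-(\log_2 m-1)^{s-2}\bigr]$ must swallow $2P_{s-1,2}\,m(\log_2 m)^{s-3}$, $P_{s-2,2}\,m(\log_2 m)^{s-4}$, $4m$, and a few additive constants. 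By the mean value theorem this gap is at least $P_{s,2}(s-2)\,m(\log_2 m-1)^{s-3}$, and the generous choice $P_{s,2}=4P_{s-1,2}+2P_{s-2,2}+2Q_{s-2,2}+8$ leaves ample room whenever $m\ge m_0(s)$. The hard part will be the constant-chasing itself, together with the parity handling when $m$ is odd; the latter we manage by using $\lceil\log_2\lceil m/2\rceil\rceil=\lceil\log_2 m\rceil-1$ for $m\ge 2$ (equivalently, interpreting $\log_2$ throughout as $\lceil\log_2\cdot\rceil$). The role of the threshold (\ref{eq_m0s}) baked into $Q_{s,2}$ is precisely to guarantee that this single halving-step gap dominates all the lower-order terms.
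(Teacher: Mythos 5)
Your proposal follows the paper's proof step for step: double induction on $s$ and then on $m$, apply Recurrence~\ref{recurrence_psi} with $b=2$ and a balanced split $m_1 = \lfloor m/2\rfloor$, $m_2 = \lceil m/2\rceil$, and let the definitions in (\ref{eq_Ps2_Qs2_general}) absorb the lower-order terms (with the constant $2Q_{s-2,2}$ from your tighter $n^*_1+n^*_2\le 2n^*+2$ pushed into the $m$-part, exactly as you describe). The only deviation is cosmetic and lies in the halving step: the paper uses $\log_2 m_i \le \log_2 m - \tfrac12$ for $m\ge 3$ together with $(x-\tfrac12)^{s-2}\le x^{s-2}-\tfrac12 x^{s-3}$, keeping real $\log_2$ and proving the inequality exactly as stated, whereas your $\lceil\log_2\cdot\rceil$-based route (via $\lceil\log_2\lceil m/2\rceil\rceil=\lceil\log_2 m\rceil-1$) yields the marginally weaker ceiling version --- which is harmless here, since the only downstream consumer, Lemma~\ref{lemma_psi_alphak_bound}, needs precisely the $\widehat\alpha_2=\lceil\log_2\cdot\rceil$ form.
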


\begin{proof}
We proceed by induction on $s$. If $s=1$ then $\psi_1(m,n) \le n$,
and if $s=2$ then $\psi_2(m,n) \le 2n-1$, and the claim holds. So
let $s\ge 3$.

For each $s$ we proceed by induction on $m$. If $m \le m_0(s)$ then
$\psi_s(m,n) \le m_0(s) n \le Q_{s,2}\, n$, and we are done. So
assume $m > m_0(s)$.

We apply Recurrence~\ref{recurrence_psi} with $b = 2$. Let $m_1 =
\lfloor m/2 \rfloor$ and $m_2 = \lceil m/2 \rceil$, so $m_1 + m_2 =
m$. Let us bound each term in the right-hand side of
(\ref{eq_recurrence}) separately.

The term $2\psi_{s-1}(m,n^*)$ is bounded, by induction on $s$, by
\begin{equation*}
2\psi_{s-1}(m,n^*) \le 2P_{s-1,2}\, m (\log_2 m)^{s-3} +
2Q_{s-1,2}\, n^*.
\end{equation*}
Next, we bound the term $\sum_{i=1}^2 \psi_{s-2}(m_i,n_i^*)$. Using
again induction on $s$, and applying $\log_2 m_i \le \log_2 m$, we
get
\begin{equation*}
\sum_{i=1}^2 \psi_{s-2}(m_i,n_i^*) \le P_{s-2,2}\, m (\log_2
m)^{s-4} + Q_{s-2,2}(n^*_1 + n^*_2).
\end{equation*}
Now, applying (\ref{eq_n*i}), we bound $n^*_1 + n^*_2$ loosely by
$n^*_1 + n^*_2 \le \psi_s(2,n^*) + 2 \le 2n^* + m$. Thus, being
again very loose, we get
\begin{equation*}
\sum_{i=1}^2 \psi_{s-2}(m_i,n_i^*) \le m (\log_2 m)^{s-3} (P_{s-2,2}
+ Q_{s-2,2}) + 2Q_{s-2,2}\, n^*.
\end{equation*}
Next we bound the term $\sum_{i=1}^2 \psi_s(m_i,n_i)$, using
induction on $m$. Applying $\log_2 m_i \le \log_2 m - {1\over 2}$,
which is true for $m\ge 3$, and using the fact that $(x - {1\over
2})^{s-2} \le x^{s-2} - {1\over 2}x^{s-3}$ for all $x\ge {1\over
2}$, we get
\begin{align*}
\sum_{i=1}^2 \psi_s(m_i,n_i) &\le \sum_{i=1}^2 \left( P_{s,2}\, m_i
(\log_2
m_i)^{s-2} + Q_{s,2}\, n_i \right)\\
&\le P_{s,2}\, m (\log_2 m)^{s-2} - {1\over 2} P_{s,2}\, m(\log_2
m)^{s-3} + Q_{s,2} (n - n^*).
\end{align*}
Finally, we bound $4m$ (very loosely for $s\ge 4$) by $4m(\log_2
m)^{s-3}$. Putting everything together, we get
\begin{align*}
\psi_s(m,n) &\le P_{s,2}\, m (\log_2 m)^{s-2} + Q_{s,2}\, n \\
& \qquad {}+ m (\log_2 m)^{s-3}\left( 2P_{s-1,2} + P_{s-2,2} +
Q_{s-2,2} +
4 - {1\over 2} P_{s,2} \right)\\
& \qquad {} + (2Q_{s-1,2} + 2Q_{s-2,2} - Q_{s,2})n^*.
\end{align*}
By the definition of $P_{s,2}$ and $Q_{s,2}$ in
(\ref{eq_Ps2_Qs2_general}), the last two lines are non-positive, so
\begin{equation*}
\psi_s(m,n) \le P_{s,2}\, m (\log_2 m)^{s-2} + Q_{s,2}\, n. \qedhere
\end{equation*}
\end{proof}

We are now ready to go all the way down the inverse Ackermann
hierarchy. Define integers $P_{s,k}$, $Q_{s,k}$ for $k\ge 3$, $s\ge
1$ by
\begin{equation*}
P_{1,k} = P_{2,k} = 0, \qquad Q_{1,k} = 1,\ Q_{2,k} = 2,
\end{equation*}
and, for $s\ge 3$,
\begin{equation}\label{eq_Psk_Qsk}
\begin{split}
P_{s,k} &= Q_{s-2,k}(1 + P_{s,k-1}) + 2d_s\, P_{s-1,k} + d'_s\,
P_{s-2,k} + 4,\\
Q_{s,k} &= Q_{s-2,k}\, Q_{s,k-1} + 2Q_{s-1,k},
\end{split}
\end{equation}
for some constants $d_s$ and $d'_s$ to be specified later, with
$P_{s,2}$, $Q_{s,2}$ as in (\ref{eq_Ps2_Qs2_start}),
(\ref{eq_Ps2_Qs2_general}). These quantities $P_{s,k}$, $Q_{s,k}$ will give rise to $C_{s,k}$ of Lemma~\ref{lemma_gral_psi_alphak_bound}.

\begin{lemma}\label{lemma_psi_alphak_bound}
For every $s$ there exists a constant $c_s$ such that
\begin{equation}\label{eq_psi_alphak_bound}
\psi_s(m,n) \le P_{s,k}\, m (\alpha_k(m)+c_s)^{s-2} + Q_{s,k}\, n
\end{equation}
for all integers $n$, $m$, $s$, and $k$.
\end{lemma}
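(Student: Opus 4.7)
The plan is to prove the lemma by a triple induction: outer on $s$, middle on $k$, and inner on $m$. The base cases $s=1,2$ are immediate from $\psi_1(m,n)\le n$ and $\psi_2(m,n)\le 2n-1$, which fit the claimed form with $P_{s,k}=0$ and $Q_{s,k}\in\{1,2\}$. For $s\ge 3$, the base case $k=2$ is precisely Lemma~\ref{lemma_psi_log_bound}, since $\alpha_2(m)=\lceil\log_2 m\rceil$ makes its polylogarithmic estimate match the claimed shape once $c_s$ is chosen large enough to absorb the ceiling.

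For the inductive step with $s,k\ge 3$ fixed, I would induct on $m$; the small-$m$ range, where $\alpha_k(m)$ is bounded by a fixed constant, is absorbed into $Q_{s,k}n$ via the trivial bound $\psi_s(m,n)\le mn$. For larger $m$, I would apply Recurrence~\ref{recurrence_psi} to a maximum-length witness sequence, with a partition $m=m_1+\cdots+m_b$ calibrated to balance two competing needs: each $m_i$ must be small enough to satisfy $\alpha_k(m_i)\le\alpha_k(m)-1$, so that the inner induction on $m$ applies to each $\psi_s(m_i,n_i)$; meanwhile $b$ itself must be small enough that invoking the middle induction on $k$ at level $k-1$ to bound $\psi_s(b,n^*)$ yields a contribution dominated by the ``derivative'' of the leading term $P_{s,k}\,m(\alpha_k(m)+c_s)^{s-2}$. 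The natural choice is to take each $m_i$ just small enough to decrease $\alpha_k$ by one, which forces $b$ to be of order $m$ divided by a suitable previous-level quantity.

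With the partition fixed, each of the four terms on the right-hand side of (\ref{eq_recurrence}) is bounded by the appropriate inductive hypothesis. The term $2\psi_{s-1}(m,n^*)$ uses induction on $s$ and contributes $2Q_{s-1,k}n^*$ together with a lower-order $P_{s-1,k}$ piece, accounting for the summands $2Q_{s-1,k}$ and $2d_s\,P_{s-1,k}$ in (\ref{eq_Psk_Qsk}). The sum $\sum\psi_{s-2}(m_i,n_i^*)$ uses induction on $s$ to yield $Q_{s-2,k}\sum n_i^*$ plus a $P_{s-2,k}$ piece; then (\ref{eq_n*i}) lets me replace $\sum n_i^*$ by $\psi_s(b,n^*)+b$, and applying the $(k-1)$-case of the lemma to $\psi_s(b,n^*)$ produces exactly the products $Q_{s-2,k}(1+P_{s,k-1})$ and $Q_{s-2,k}\,Q_{s,k-1}$ that head the recurrences for $P_{s,k}$ and $Q_{s,k}$ in (\ref{eq_Psk_Qsk}). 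The sum $\sum\psi_s(m_i,n_i)$ uses the inner induction on $m$ and, because $\alpha_k(m_i)\le\alpha_k(m)-1$, is bounded by $P_{s,k}\,m(\alpha_k(m)+c_s-1)^{s-2}+Q_{s,k}(n-n^*)$, so the crucial slack
\begin{equation*}
(\alpha_k(m)+c_s)^{s-2}-(\alpha_k(m)+c_s-1)^{s-2}\ge (s-2)(\alpha_k(m)+c_s-1)^{s-3}
\end{equation*}
is left over to absorb the $P$-contributions from the other three terms. The residual $4m$ is absorbed by the explicit ``$+4$'' in the recurrence for $P_{s,k}$.

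The main obstacle, as in Lemma~\ref{lemma_psi_log_bound}, is calibrating the constants so that these absorptions actually close. In particular, the cross-term $Q_{s-2,k}\,P_{s,k-1}\,b\,(\alpha_{k-1}(b)+c_s)^{s-2}$ produced by applying the $(k-1)$-bound to $\psi_s(b,n^*)$ must be dominated by $P_{s,k}\cdot m\cdot(s-2)(\alpha_k(m)+c_s-1)^{s-3}$; satisfying this inequality is what drives the precise choice of partition and fixes the constants $c_s,d_s,d'_s$ as functions of $s$ alone (independent of $k$ and $m$). The verification is mechanical but tedious, analogous to the choice of $m_0(s)$ in (\ref{eq_m0s}) for the polylog base case, and I would defer its explicit constant-chasing to an appendix so that the induction step itself reads cleanly.
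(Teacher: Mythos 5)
Your overall architecture (triple induction on $s$, $k$, $m$; Recurrence~\ref{recurrence_psi}; reading off the coefficient recurrences (\ref{eq_Psk_Qsk})) matches the paper, but the step you defer as ``mechanical constant-chasing'' is precisely where the argument, as you set it up, does not close. The two calibration requirements you list are incompatible if you work with the true $\alpha_k$. If you take each $m_i$ ``just small enough to decrease $\alpha_k$ by one'' (via (\ref{eq_rec_alpha_k}) this means pieces of size about $\alpha_{k-1}(m)$, so $b\approx m/\alpha_{k-1}(m)$), then the cross-term you correctly identify is about $Q_{s-2,k}\,P_{s,k-1}\,b\,\bigl(\alpha_{k-1}(b)+c_s\bigr)^{s-2}\approx Q_{s-2,k}\,P_{s,k-1}\,m\,\alpha_{k-1}(m)^{s-3}$, and for $s\ge 4$ this cannot be dominated by $P_{s,k}\,m\,(s-2)\bigl(\alpha_k(m)+c_s-1\bigr)^{s-3}$ for any constants $c_s,d_s,d'_s$ depending only on $s$, because $\alpha_{k-1}(m)/\alpha_k(m)\to\infty$ as $m\to\infty$ for fixed $k\ge 3$ while all the coefficients are independent of $m$. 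Conversely, to make the cross-term $O(m)$ you must take pieces of size about $\alpha_{k-1}(m)^{s-2}$ (so that $b\,\alpha_{k-1}(b)^{s-2}\le m$), but such pieces are in general far too large to guarantee $\alpha_k(m_i)\le\alpha_k(m)-1$: in the worst case $m=A_k(j)$ one has $\alpha_{k-1}(m)=A_k(j-1)$, so $\alpha_{k-1}(m)^{s-2}\gg A_k(j-1)$ and the inner induction on $m$ loses its decrement. No choice of the additive constant $c_s$ repairs this, since shifting both sides by $c_s$ does not create the needed drop of $1$.

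The missing idea is the paper's device for escaping this tension: it does not run the induction with $\alpha_k$ at all, but with a modified hierarchy $\widehat\alpha_k$ defined by the recursion $\widehat\alpha_k(x)=1+\widehat\alpha_k\bigl(1+2\widehat\alpha_{k-1}(x)^{s-2}\bigr)$ (see (\ref{eq_widehat_alpha_k})), whose step is tailored exactly to the block size $m_i\le 1+2\widehat\alpha_{k-1}(m)^{s-2}$ forced by the choice $b=\lfloor m/\widehat\alpha_{k-1}(m)^{s-2}\rfloor$ in (\ref{eq_b}); with this choice one gets both $\widehat\alpha_k(m_i)\le\widehat\alpha_k(m)-1$ and $b\,\widehat\alpha_{k-1}(b)^{s-2}\le m$ simultaneously. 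The bound is proved in the form (\ref{eq_psi_widehat_alphak}) with $\widehat\alpha_k$, and only afterwards translated back via the separate (Appendix~\ref{app_ack_like}-style) comparison $|\widehat\alpha_k(x)-\alpha_k(x)|\le c_s$ --- which is the actual origin of the constant $c_s$ in the statement, and also of the constants $d_s,d'_s$ (they convert the inductive bounds for $\psi_{s-1},\psi_{s-2}$, stated with $\alpha_k$, into bounds with $\widehat\alpha_k$). So you need either to introduce such a shifted inverse-Ackermann variant (or an equivalent strengthening of the induction hypothesis); as written, your partition and your constants cannot be calibrated to make the induction step go through for $s\ge 4$.
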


The proof is similar to the proof of
Lemma~\ref{lemma_psi_log_bound}, though more complex, since we
proceed by induction on $k$ for each $s$. Before delving into the
actual details, we give a brief sketch of the proof. For the
purposes of this sketch, denote the right-hand side of
(\ref{eq_psi_alphak_bound}) by $\Gamma_{s,k}(m,n)$. Now refer to
equation (\ref{eq_recurrence}) in Recurrence~\ref{recurrence_psi}.

The proof proceeds as follows. We bound the term $\psi_{s-1}(m,n^*)$
by $\Gamma_{s-1,k}(m,n^*)$. We bound the terms
$\psi_{s-2}(m_i,n^*_i)$ by $\Gamma_{s-2,k}(m_i,n^*_i)$; this
produces the term $Q_{s-2,k}\sum n^*_i$, on which we apply
(\ref{eq_n*i}). We bound the resulting term $\psi_s(b,n^*)$ by
$\Gamma_{s,k-1}(b,n^*)$ (here is where we use induction on $k$).
Finally, we bound the terms $\psi_s(m_i,n_i)$ by
$\Gamma_{s,k}(m_i,n_i)$ by induction on $m$ (since $m_i < m$ for
every $i$).

\begin{proof}[Proof of Lemma~\ref{lemma_psi_alphak_bound}]
By induction on $s$. As before, the claim is easily established for
$s = 1,2$, so assume $s\ge 3$ is fixed.

For each $s$ we proceed by induction on $k$. If $k=2$ then the claim
reduces to Lemma~\ref{lemma_psi_log_bound}, so assume $k \ge 3$.

By our induction assumption on $s$, we have
\begin{equation}\label{eq_ind_s-1_s-2_alpha}
\begin{split}
\psi_{s-1}(m,n) &\le P_{s-1,k}\, m(\alpha_k(m) + c_{s-1})^{s-3} +
Q_{s-1,k}\, n,\\
\psi_{s-2}(m,n) &\le P_{s-2,k}\, m(\alpha_k(m) + c_{s-2})^{s-4} +
Q_{s-2,k}\, n,
\end{split}
\end{equation}
for all $m$ and $n$.

Here it is convenient to work with a slight variant $\widehat
\alpha_k(x)$ of the inverse Ackermann hierarchy. Define $\widehat
\alpha_k(x)$ for $k\ge 2$, $x\ge 0$ by $\widehat \alpha_2(x) =
\alpha_2(x) = \lceil \log_2 x\rceil$, and for $k\ge 3$ by the
recurrence
\begin{equation}\label{eq_widehat_alpha_k}
\widehat\alpha_k(x) =
\begin{cases}
1, & \text{if $x\le m_0(s)$};\\
1 + \widehat\alpha_k{\left(1 + 2\widehat\alpha_{k-1}(x)^{s-2}
\right)}, & \text{otherwise};
\end{cases}
\end{equation}
with $m_0(s)$ as given in (\ref{eq_m0s}). (Compare
(\ref{eq_widehat_alpha_k}) to the definition (\ref{eq_rec_alpha_k})
of $\alpha_k(x)$; our choice of $m_0(s)$ guarantees that $\widehat
\alpha_k(x)$ is well-defined for all $k$ and $x$.)

The functions $\widehat\alpha_k(x)$ are almost equivalent to the
usual inverse Ackermann functions $\alpha_k(x)$. In fact, there
exists a constant $c_s$, depending only on $s$, such that
$|\widehat\alpha_k(x) - \alpha_k(x)| \le c_s$ for all $k$ and $x$.
See Appendix~B of \cite{interval_chains} for a general technique for
proving bounds of this type (or see Appendix~\ref{app_ack_like} in
this paper).

We will show that
\begin{equation}\label{eq_psi_widehat_alphak}
\psi_s(m,n) \le P_{s,k}\, m\widehat \alpha_k(m)^{s-2} + Q_{s,k}\, n
\end{equation}
for all $n$, $m$, and $k$. We will do this by induction on $k$, and
for each $k$ by induction on $m$. Then our claim will follow.

If $m\le m_0(s)$, then $\psi_s(m,n) \le m_0(s) n \le Q_{s,2}\, n \le
Q_{s,k}\, n$, and we are done. So assume $m > m_0(s)$.

We want to translate the bounds (\ref{eq_ind_s-1_s-2_alpha}) into
bounds involving $\widehat\alpha_k$. Since $\alpha_k(m) \le
\widehat\alpha_k(m) + c_s$ and $\widehat\alpha_k(m) \ge 1$, it
follows (being somewhat slack) that there exist multiplicative
constants $d_s$, $d'_s$ such that
\begin{align}
\psi_{s-1}(m,n) &\le  d_s\, P_{s-1,k}\, m\widehat\alpha_k(m)^{s-3} +
Q_{s-1,k}\, n,\label{eq_ind_s-1}\\
\psi_{s-2}(m,n) &\le  d'_s\, P_{s-2,k}\, m\widehat\alpha_k(m)^{s-4}
+ Q_{s-2,k}\, n,\label{eq_ind_s-2}
\end{align}
for all $n$ and $m$.

Assume by induction on $k$ that (\ref{eq_psi_widehat_alphak}) holds
for $k-1$. Choose
\begin{equation}\label{eq_b}
b = \left\lfloor {m \over \widehat\alpha_{k-1}(m)^{s-2} }
\right\rfloor.
\end{equation}
Let $m_i = \lfloor m/b \rfloor$ or $\lceil m/b \rceil$ for each $i$,
such that $\sum m_i = m$. We claim that
\begin{equation}\label{eq_bd_mi}
m_i \le 1 + 2 \widehat \alpha_{k-1}(m)^{s-2}, \qquad \text{for all }
1\le i\le b.
\end{equation}
Indeed, by our choice of $m_0(s)$ as given in (\ref{eq_m0s}), we
have $\widehat \alpha_{k-1}(m)^{s-2} \le \lceil \log_2 m\rceil
^{s-2} \le m/2$ for all $m\ge m_0(s)$. Thus,
\begin{align*}
m_i \le 1 + {m\over b } &\le 1 + {m \over m/ \widehat
\alpha_{k-1}(m)^{s-2} - 1} \\&= 1 + {m \widehat
\alpha_{k-1}(m)^{s-2} \over m  - \widehat \alpha_{k-1}(m)^{s-2}} \le
1 + {m \widehat \alpha_{k-1}(m)^{s-2} \over m - m/2} = 1 + 2
\widehat \alpha_{k-1}(m)^{s-2}.
\end{align*}

Let us bound each term in the right-hand side of
(\ref{eq_recurrence}). We first bound the term $2\psi_{s-1}(m,n^*)$
using (\ref{eq_ind_s-1}), and we obtain
\begin{equation*}
2\psi_{s-1}(m,n^*) \le 2d_s \, P_{s-1,k}\, m \widehat
\alpha_k(m)^{s-3} + 2 Q_{s-1,k}\, n^*.
\end{equation*}

Next we bound $\sum_{i=1}^b \psi_{s-2}(m_i,n^*_i)$ using
(\ref{eq_ind_s-2}). Observing that $\widehat\alpha_k(m_i) \le
\widehat \alpha_k(m)$,
\begin{align}
\sum_{i=1}^b \psi_{s-2}(m_i,n^*_i) &\le \sum_{i=1}^b \left( d'_s\,
P_{s-2,k}\, m_i \widehat \alpha_k(m_i)^{s-4} + Q_{s-2,k}\, n^*_i
\right) \nonumber \\
&\le d'_s \, P_{s-2,k} \, m \widehat \alpha_k(m)^{s-4} + Q_{s-2,k}
\sum_{i=1}^b n^*_i. \label{eq_sum_psi_mi_n*i}
\end{align}
Now we apply (\ref{eq_n*i}), and we bound $\psi_s(b,n^*)$ by
(\ref{eq_psi_widehat_alphak}) with $k-1$ in place of $k$.
\begin{equation*}
\sum_{i=1}^b n^*_i \le \psi_s(b,n^*) + b \le P_{s,k-1}\, b \widehat
\alpha_{k-1}(b)^{s-2} + Q_{s,k-1}\, n^* + b.
\end{equation*}
By our choice of $b$ in (\ref{eq_b}), we have $\widehat
\alpha_{k-1}(b)^{s-2} \le \widehat \alpha_{k-1}(m)^{s-2} \le m/b$,
so, being somewhat slack,
\begin{equation*}
\sum_{i=1}^b n^*_i \le P_{s,k-1}\, m + Q_{s,k-1}\, n^* + m \le m
\widehat \alpha_k(m)^{s-3} (1 + P_{s,k-1}) + Q_{s,k-1}\, n^*.
\end{equation*}
Substituting this into (\ref{eq_sum_psi_mi_n*i}), and being slack
again, we get
\begin{equation*}
\sum_{i=1}^b \psi_{s-2}(m_i,n^*_i) \le m \widehat \alpha_k(m)^{s-3}
\left( d'_s\, P_{s-2,k} + Q_{s-2,k} (1 + P_{s,k-1}) \right) +
Q_{s-2,k}\, Q_{s,k-1}\, n^*.
\end{equation*}

Next we bound $\sum_{i=1}^b \psi_s(m_i,n_i)$, applying
(\ref{eq_psi_widehat_alphak}) by induction on $m$ (since $m_i < m$):
\begin{equation*}
\sum_{i=1}^b \psi_s(m_i,n_i) \le \sum_{i=1}^b \left( P_{s,k}\, m_i
\widehat\alpha_k(m_i)^{s-2} + Q_{s,k}\, n_i \right).
\end{equation*}
But by (\ref{eq_bd_mi}) and (\ref{eq_widehat_alpha_k}),
\begin{equation*}
\widehat\alpha_k(m_i) \le \widehat\alpha_k{\left( 1 +
2\widehat\alpha_{k-1}(m)^{s-2} \right)} = \widehat\alpha_k(m) -1.
\end{equation*}
Further, we have $(x-1)^{s-2} \le x^{s-2} - x^{s-3}$ for all $x\ge
1$. Therefore,
\begin{equation*}
\sum_{i=1}^b \psi_s(m_i,n_i) \le P_{s,k} \, m
\left(\widehat\alpha_k(m)^{s-2} - \widehat\alpha_k(m)^{s-3}\right) +
Q_{s,k}(n - n^*).
\end{equation*}

Finally, we bound $4m$ very loosely by
$4m\widehat\alpha_k(m)^{s-3}$. Putting everything together, we get
\begin{align*}
\psi_s(m,n) &\le P_{s,k}\, m \widehat\alpha_k(m)^{s-2} + Q_{s,k}\, n\\
& \qquad {}+ m\widehat\alpha_k(m)^{s-3} \left( 2d_s\, P_{s-1,k} +
d'_s\, P_{s-2,k} + Q_{s-2,k}(1 + P_{s,k-1}) + 4 -
P_{s,k}\right) \\
& \qquad {} + (2Q_{s-1,k} + Q_{s-2,k}\, Q_{s,k-1} - Q_{s,k}) n^*.
\end{align*}
By the definition of $P_{s,k}$ and $Q_{s,k}$ in (\ref{eq_Psk_Qsk}),
the last two lines equal zero, and we get
\begin{equation*}
\psi_s(m,n) \le P_{s,k}\, m\widehat \alpha_k(m)^{s-2} + Q_{s,k}\, n.
\qedhere
\end{equation*}
\end{proof}

All that remains is to analyze the asymptotic growth of $P_{s,k}$,
$Q_{s,k}$ in $k$ for fixed $s$. We have
\begin{equation*}
P_{3,k}, Q_{3,k} = \Theta(k), \qquad P_{4,k}, Q_{4,k} =
\Theta\bigl(2^k \bigr),
\end{equation*}
and, in general, letting $t = \lfloor (s-2)/2 \rfloor$,
\begin{equation}\label{eq_bd_PQ}
P_{s,k}, Q_{s,k} =
\begin{cases}
2^{(1/t!) k ^t \pm O{\left( k^{t-1} \right)}}, & \text{$s\ge 4$ even};\\
2^{(1/t!) k ^t \log_2 k \pm O{\left( k^t \right)}}, & \text{$s\ge 3$
odd}
\end{cases}
\end{equation}
(see Appendix~\ref{app_growth_constants} for the proof). Thus,
Lemma~\ref{lemma_psi_alphak_bound} is equivalent to
Lemma~\ref{lemma_gral_psi_alphak_bound}.

\begin{remark}
The investment we made in using a more complicated recurrence
(Recurrence~\ref{recurrence_psi} instead of the one used by Agarwal
et al.~\cite{ASS89, DS_book}) paid off in
Lemma~\ref{lemma_psi_alphak_bound}. Besides being tighter, the lemma
also has a simpler form. The corresponding bound in
\cite{ASS89,DS_book} is of the form
\begin{equation*}
\psi_s(m,n) \le \mathcal F_{s,k}(n) \cdot m \alpha_k(m) + \mathcal
G_{s,k}(n) \cdot n,
\end{equation*}
where $\mathcal F_{s,k}(n)$ and $\mathcal G_{s,k}(n)$ are functions
of $\alpha(n)$. Our constants $P_{s,k}$, $Q_{s,k}$, in contrast, do
not depend on $n$.
\end{remark}

\section{A new technique for bounding $\psi_s(m,n)$}\label{sec_psi_new}

We now present an alternative technique for bounding $\psi_s(m,n)$.
Our new technique is based on a variant of Davenport--Schinzel
sequences, in which we turn the problem around, in a sense. We call
our variant sequences \emph{almost-DS sequences}.

An \emph{almost-DS sequence of order $s$ with multiplicity $k$ and
$m$ blocks} (or an $\ADS^s_k(m)$-sequence, for short) is a sequence
that satisfies the following properties:
\begin{itemize}
\item
It is a concatenation of $m$ blocks, each block containing only
distinct symbols.

\item
Each symbol appears at least $k$ times (in different blocks, so we
must have $m\ge k$ for there to be any symbols at all).

\item
The sequence contains no alternation $abab\ldots$ of length $s+2$.
\end{itemize}

Note that we do allow repetitions at the interface between adjacent
blocks (this simplifies matters). This is why these are
\emph{almost} Davenport--Schinzel sequences.

We now pose a different problem: We ask for \emph{maximizing the
number of distinct symbols}. Let $\N^s_k(m)$ denote the maximum
number of distinct symbols in an $\ADS^s_k(m)$-sequence. (Note that
$\N^s_k(m)=0$ for $m<k$.)

The connection between $\psi_s(m,n)$ and $\N^s_k(m)$ is based on the
following lemma:

\begin{lemma}\label{lemma_psi_to_ADS}
For all $s$, $n$, $m$, and $k$ we have $\psi_s(m,n) \le k \bigl(
\N^s_k(m) + n \bigr)$.
\end{lemma}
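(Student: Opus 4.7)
The plan is to take a sequence $S$ realizing $\psi_s(m,n)$---a Davenport--Schinzel sequence of order $s$ on $n$ distinct symbols, partitioned into $m$ blocks of distinct symbols---and to split its alphabet into symbols that occur rarely versus often, bounding the contribution of each class separately. Light symbols are handled by brute count, and heavy symbols are fed into the almost-DS machinery by a relabeling trick.

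More concretely, I would call a symbol \emph{light} if it appears fewer than $k$ times in $S$ and \emph{heavy} otherwise. For each heavy symbol $a$, I would then list its occurrences in their order along $S$ and group them into consecutive batches of exactly $k$ occurrences, assigning to each batch a brand-new distinct label and discarding the at most $k-1$ leftover occurrences that do not fill a full batch. Let $S'$ be the sequence, on this refined alphabet, obtained by performing these relabelings and deletions. Then $|S'|=kN$, where $N$ is the total number of new labels introduced, and $|S|-|S'|\le (k-1)n$, because each of the $n$ original symbols (light or heavy) loses at most $k-1$ occurrences.

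The main content of the proof is then the claim that $S'$ is an $\ADS^s_k(m)$-sequence. It inherits the partition into $m$ blocks from $S$, and each block still has distinct symbols, since distinct occurrences of any single heavy symbol in $S$ lie in distinct blocks of $S$, so renaming them cannot introduce a repetition inside a block. By construction every new label appears exactly $k$ times. Finally, $S'$ is (up to an injective relabeling) a subsequence of $S$, so it contains no alternation of length $s+2$. Therefore $N\le \N^s_k(m)$, and hence $|S'|\le k\N^s_k(m)$. Combining with the bound on discarded occurrences yields
\begin{equation*}
\psi_s(m,n)=|S|\le k\N^s_k(m)+(k-1)n\le k\bigl(\N^s_k(m)+n\bigr),
\end{equation*}
as desired.

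The only step I would check carefully---and what I expect to be the main (minor) obstacle---is the verification that the relabeling genuinely produces an $\ADS^s_k(m)$-sequence: one must confirm simultaneously that the block-wise distinctness is preserved (via the observation that any two occurrences of the same heavy symbol sit in different blocks of $S$) and that no $(s+2)$-alternation is created (which follows because the relabeling is an injection on the kept positions, so any $abab\ldots$ in $S'$ would lift to one in $S$). Both facts follow immediately from the definitions, so the calculation above should go through without further surprises.
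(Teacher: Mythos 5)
Your proposal is correct and is essentially the paper's own proof: both cluster the occurrences of each symbol into groups of $k$, relabel each cluster with a fresh symbol, discard the at most $k-1$ leftovers per symbol, and observe that the result is an $\ADS^s_k(m)$-sequence (clusters of one original symbol do not interleave, so no forbidden alternation is created, and block-wise distinctness is preserved). Your accounting of $(k-1)n$ discarded occurrences is even marginally tighter than the paper's $kn$, but the final bound is the same.
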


\begin{proof}
Let $S$ be a maximum-length Davenport--Schinzel sequence of order
$s$ on $n$ distinct symbols that is partitionable into $m$ blocks,
each of distinct symbols. Thus, $|S| = \psi_s(m,n)$. Let $k\ge 1$ be
a parameter.

We transform $S$ into another sequence $S'$ in which every symbol
appears exactly $k$ times as follows:\footnote{A similar argument
has been used by Sundar \cite[Lemma~9]{sundar} for a different
problem.} For each symbol $a$, group the occurrences of $a$ in $S$
from left to right into ``clusters" of size $k$, deleting the last
remaining $\le k-1$ occurrences of $a$. Make the occurrences of $a$
in different clusters different, by replacing each $a$ in the $i$-th
cluster by a new symbol $a_i$.

We deleted at most $kn$ symbols from $S$, so $|S'| \ge |S| - kn$. On
the other hand, $S'$ is clearly an $\ADS^s_k(m)$-sequence (the
symbol deletions might have created repetitions at the interface
between blocks, but these are permitted in almost-DS sequences; on
the other hand, the symbol replacements do not introduce any
forbidden alternations). Thus, $S'$ contains at most $\N^s_k(m)$
distinct symbols. Since each symbol appears exactly $k$ times, we
have $|S'| \le k \cdot \N^s_k(m)$. The claim follows.
\end{proof}

Thus, the problem of bounding $\psi_s(m,n)$ reduces to bounding
$\N^s_k(m)$.

\subsection{Bounding the number of symbols in ADS sequences}

We first derive some basic results: For every $s\ge 1$, if we take
$k\le s$ then $\N^s_k(m) = \infty$, but if we take $k = s+1$ then
$\N^s_k(m)$ is already finite.

\begin{lemma}\label{lemma_nss_infty}
For all $s \ge 1$, $m\ge s$ we have $\N^s_s(m) = \infty$.
\end{lemma}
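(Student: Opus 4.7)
The plan is to exhibit, for every $n \ge 1$, an $\ADS^s_s(s)$-sequence on $n$ distinct symbols. This shows $\N^s_s(s) = \infty$, and monotonicity in $m$ (obtained by padding with empty blocks, which trivially satisfy the ``distinct symbols'' requirement) then gives $\N^s_s(m) = \infty$ for all $m \ge s$. For the main construction, I would take $s$ blocks with alternating permutations: block $i$ equals $(1, 2, \ldots, n)$ when $i$ is odd and $(n, n-1, \ldots, 1)$ when $i$ is even. The resulting sequence $S$ has $s$ blocks of distinct symbols, and each of its $n$ symbols appears exactly $s$ times.

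The verification of the forbidden-alternation property is the key step. For any two symbols $a < b$, the subsequence of $S$ restricted to $\{a,b\}$ is the $2s$-letter string $a, b, b, a, a, b, b, a, \ldots$, obtained by concatenating $ab$ (from odd-indexed blocks) and $ba$ (from even-indexed blocks). I would then observe that this string decomposes into exactly $s+1$ maximal runs of equal consecutive letters: an initial singleton $a$, then $s-1$ runs of length $2$ (alternating $bb, aa, bb, aa, \ldots$), and a final singleton (which is $b$ if $s$ is odd and $a$ if $s$ is even). Since any alternating subsequence picks at most one letter from each maximal run (two letters from the same run are equal and cannot alternate), the longest alternation between $a$ and $b$ in $S$ has length at most $s+1 < s+2$, so no forbidden pattern occurs.

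The main obstacle---quite mild---is just this run-counting argument, which is uniform in $s$ once one checks the degenerate case $s = 1$ (where the construction reduces to the single block $A = (1, \ldots, n)$ with no repetitions, and the claim is immediate). Once the $\ADS^s_s(s)$-sequence is in hand, extending from $s$ blocks to $m \ge s$ blocks by empty-block padding completes the proof. If one wished to avoid empty blocks, an alternative is to partition the $m$ blocks into $\lfloor m/s \rfloor$ disjoint groups of $s$ consecutive blocks and place an independent copy of the construction on a fresh set of $n$ symbols in each group, which still yields arbitrarily many distinct symbols.
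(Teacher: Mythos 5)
Your proposal is correct and is essentially the paper's own proof: the same $s$-block construction alternating $1,2,\ldots,n$ with its reversal, where each symbol appears $s$ times and the longest alternation between any two symbols has length $s+1$. You merely make explicit two details the paper leaves implicit, namely the run-counting verification of the $s+1$ bound and the padding from $s$ blocks up to $m\ge s$ blocks.
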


\begin{proof}
Take the sequence
\begin{equation*}
abc\ldots\ \ldots cba\ abc\ldots\ \ldots
\end{equation*}
with $s$ blocks, with arbitrarily many symbols in each block. Each
symbol appears $s$ times, and the maximum alternation is of length
$s+1$.
\end{proof}

\begin{lemma}\label{lemma_n12}
We have $\N^1_2(m) = m - 1$.
\end{lemma}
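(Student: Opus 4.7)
The plan is to establish the equality $\N^1_2(m) = m-1$ by proving matching upper and lower bounds.

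For the lower bound, I will exhibit an explicit $\ADS^1_2(m)$-sequence on $m-1$ symbols, namely
\begin{equation*}
S \;=\; a_1 \mid a_1 a_2 \mid a_2 a_3 \mid \cdots \mid a_{m-2} a_{m-1} \mid a_{m-1},
\end{equation*}
partitioned into $m$ blocks as indicated. Each block contains only distinct symbols, each $a_i$ occurs exactly twice (in consecutive blocks), and since the two copies of any $a_i$ are textually adjacent, no subsequence of the form $a\,b\,a$ can arise. Hence $\N^1_2(m) \ge m-1$.

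For the upper bound, let $S$ be any $\ADS^1_2(m)$-sequence on $n$ distinct symbols; the goal is $n \le m-1$. The key observation is that a forbidden alternation of length $3$ is exactly the pattern $aba$, so for every symbol $a$, no symbol different from $a$ may appear between two occurrences of $a$. Therefore all occurrences of $a$ in $S$ form a contiguous run. Since each block contains only distinct symbols, the run uses at most one copy of $a$ per block, so a symbol appearing $c_a \ge 2$ times occupies $c_a$ consecutive blocks, straddling exactly $c_a - 1 \ge 1$ block boundaries (with $a$ being the last symbol of one block and the first symbol of the next across each such boundary).

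It remains to argue that distinct symbols use distinct boundaries: if both $a$ and $b$ straddled the boundary between blocks $i$ and $i+1$, each would have to be the last symbol of block $i$, which is impossible. Consequently the sum $\sum_a (c_a - 1)$ is at most the total number of boundaries, $m-1$, and since each term is at least $1$ we conclude $n \le \sum_a (c_a - 1) \le m-1$. The only step requiring any care is the ``distinct boundaries'' observation, and it follows directly from the fact that the last symbol of a block is unique; the rest is bookkeeping. Combining the two bounds yields $\N^1_2(m) = m-1$.
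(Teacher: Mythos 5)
Your proof is correct and follows essentially the same approach as the paper: the key observation in both is that forbidding $aba$ forces each symbol's occurrences to be contiguous, and your lower-bound sequence is exactly the paper's extremal example $1\ 12\ 23\ \ldots\ (m-2)(m-1)\ (m-1)$. The only difference is that you spell out the upper-bound count explicitly (injectively charging each symbol to a block boundary via the uniqueness of a block's last symbol), a step the paper leaves implicit, and this counting is sound (if one worries about empty blocks, they can be discarded without loss of generality since that only decreases $m$).
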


\begin{proof}
Let $S$ be an $\ADS^1_2(m)$-sequence. Since $S$ cannot contain an
alternation $aba$, each symbol must have all its occurrences
contiguous. Given that $S$ contains $m$ blocks, the sequence that
maximizes the number of distinct symbols is
\begin{equation*}
1\ 12\ 23\ \ldots\ (m-2)(m-1)\ (m-1),
\end{equation*}
with $m-1$ distinct symbols.
\end{proof}

\begin{lemma}\label{lemma_ns_s1}
For all $s\ge 2$ we have $\N^s_{s+1}(m) \le {m-2\choose s-1} =
O{\left( m^{s-1} \right)}$.
\end{lemma}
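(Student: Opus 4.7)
The plan is to exhibit an injection $\sigma$ from the alphabet of any $\ADS^s_{s+1}(m)$-sequence $S$ into the $(s-1)$-element subsets of $\{2,\ldots,m-1\}$; the bound $\binom{m-2}{s-1}$ then follows immediately. For each symbol $a$ I let $b_1(a) < b_2(a) < \cdots$ be the blocks of $S$ containing an occurrence of $a$. Since each block has pairwise distinct symbols and $a$ occurs at least $s+1$ times, there are at least $s+1$ such blocks. I then set
\[
\sigma(a) = \{\,b_2(a),\,b_3(a),\,\ldots,\,b_s(a)\,\},
\]
a set of $s-1$ blocks lying strictly between $b_1(a)\ge 1$ and $b_{s+1}(a)\le m$, so $\sigma(a)\subseteq\{2,\ldots,m-1\}$ as required.

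The heart of the argument is injectivity. Assuming $\sigma(a)=\sigma(a')=\{c_1<\cdots<c_{s-1}\}$ for distinct symbols $a\ne a'$, I will derive an alternation $abab\ldots$ of length $s+2$ in $S$, contradicting the Davenport--Schinzel condition. In this configuration both $a$ and $a'$ occur in every $c_j$, and each of them has an outer occurrence at a block strictly before $c_1$ and strictly after $c_{s-1}$. I will form the subsequence $T$ of $S$ consisting of the $2(s-1)+4=2s+2$ distinguished occurrences (two per middle block plus one each from $b_1(a),b_1(a'),b_{s+1}(a),b_{s+1}(a')$). Since $T$ is over the two-letter alphabet $\{a,a'\}$, its longest alternating subsequence has length $|T|$ minus the number of adjacent-equal positions in $T$. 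Thus it suffices to show that this duplicate count is at most $s$, producing an alternation in $S$ of length at least $2s+2-s=s+2$.

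To count duplicates I will encode the order within each middle block $c_j$ by a bit $\delta_j\in\{0,1\}$ ($\delta_j=0$ if $a$ precedes $a'$ in $c_j$, otherwise $\delta_j=1$). Within each $c_j$ the two distinguished elements are distinct, contributing no internal duplicate; between $c_j$ and $c_{j+1}$ a duplicate occurs precisely when $\delta_j\ne\delta_{j+1}$. The relative order of $b_1(a)$ and $b_1(a')$ (together with their intra-block order when these coincide) determines a ``boundary bit'' $p_L$ encoding the last outer symbol on the left, and symmetrically the order of $b_{s+1}(a)$ and $b_{s+1}(a')$ determines $p_R$; with the right choice of encoding, the total duplicate count of $T$ equals the number of transitions in the binary word $(p_L,\delta_1,\ldots,\delta_{s-1},p_R)$ of length $s+1$, which is trivially at most $s$. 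The main obstacle will be the small case analysis required to pin down the values of $p_L$ and $p_R$ in each of the four orderings on the two sides (including the degenerate case where two outer occurrences share a block); the bookkeeping is mechanical but needs to be carried out carefully to verify that the uniform bound of $s$ transitions holds in every case.
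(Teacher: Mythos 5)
Your proof is correct and uses the same pigeonhole idea as the paper: map each symbol to the $(s-1)$-element set of ``internal'' blocks holding its 2nd through $s$-th occurrences, and show that a collision between two symbols forces a forbidden alternation of length $s+2$. Where the paper argues the alternation length somewhat informally (``these internal occurrences create an alternation of length at least $s$,'' then extend by the outer occurrences), you make the count precise by bounding the adjacent duplicates in the $(2s+2)$-element subsequence via transitions in the binary word $(p_L,\delta_1,\ldots,\delta_{s-1},p_R)$ -- a clean formalization of the same argument.
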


\begin{proof}
Suppose for a contradiction that there exists an
$\ADS^s_{s+1}(m)$-sequence $S$ with $n = 1 + {m-2\choose s-1}$
distinct symbols. Thus, each symbol appears in at least $s+1$ out of
$m$ different blocks. For each symbol $a$, consider the $s-1$
``internal" occurrences of $a$, meaning, all occurrences except the
first and the last. These internal occurrences can fall in any of
the $m-2$ ``internal" blocks of $S$ (excluding the first and last
blocks).

By our choice of $n$, there must be two symbols $a$, $b$ whose
internal occurrences fall in the same $s-1$ out of $m-2$ internal
blocks. These internal occurrences create an alternation of length
at least $s$ (in the best case, they form the subsequence $ab\ ba\
ab\ \ldots$). Since both $a$ and $b$ also appear before and after
this subsequence, $S$ contains an alternation of length $s+2$, a
contradiction.
\end{proof}

We now bound $\N^s_k(m)$ by deriving recurrences and solving them,
in a manner almost entirely analogous to \cite{interval_chains}. We
begin with the following recurrence and corollary, which are
analogous to Lemma 3.2 in \cite{interval_chains}:

\begin{recurrence}\label{rec_log}
For every $s\ge 3$ and every $k$ and $m$ we have
\begin{equation*}
\N^s_{2k-1}(2m) \le 2 \N^s_{2k-1}(m) + 2 \N^{s-1}_k(m).
\end{equation*}
\end{recurrence}

\begin{proof}
Given an $\ADS^s_{2k-1}(2m)$-sequence $S$, partition the $2m$ blocks
of $S$ into a ``left half" and a ``right half" of $m$ blocks each.
The symbols of $S$ fall into four categories:
\begin{itemize}
\item
Symbols that appear only in the left half. Taking just these symbols
produces an $\ADS^s_{2k-1}(m)$-sequence, so there are at most
$\N^s_{2k-1}(m)$ such symbols.

\item
Symbols that appear only in the right half. There are also at most
$\N^s_{2k-1}(m)$ such symbols.

\item
Symbols that appear in both halves, but appear at least $k$ times in
the left half. Taking just these symbols, and just their left-half
occurrences, produces an $\ADS^{s-1}_k(m)$-sequence $S'$. (An
alternation $abab\ldots$ of length $s+1$ in $S'$ would be extended
to length $s+2$ by an $a$ or $b$ that appears in the right half.)
Thus, there are at most $\N^{s-1}_k(m)$ of these symbols.

\item
Symbols that appear in both halves, but appear at least $k$ times in
the right half. There are also at most $\N^{s-1}_k(m)$ such symbols.
\qedhere
\end{itemize}
\end{proof}

\begin{corollary}\label{cor_log}
For every fixed $s\ge 2$, if we let $k = 2^{s-1} + 1$, then
\begin{equation*}
\N^s_k(m) = O{\left( m (\log m)^{s-2} \right)}
\end{equation*}
(where the constant implicit in the $O$ notation might depend on
$s$).
\end{corollary}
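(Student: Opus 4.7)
The plan is to prove the corollary by induction on $s$, using Recurrence~\ref{rec_log} as the inductive step and Lemma~\ref{lemma_ns_s1} as the base case.

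For the base case $s = 2$, we have $k = 2^{s-1}+1 = 3 = s+1$, so Lemma~\ref{lemma_ns_s1} gives $\N^2_3(m) = O(m) = O{\bigl(m(\log m)^0\bigr)}$, as required. For the inductive step, assume the claim for $s-1$: with $k' = 2^{s-2}+1$ we have $\N^{s-1}_{k'}(m) = O{\bigl(m(\log m)^{s-3}\bigr)}$.

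The key algebraic observation is that $2k' - 1 = 2(2^{s-2}+1) - 1 = 2^{s-1}+1 = k$. Therefore Recurrence~\ref{rec_log}, applied with this $k'$, yields
\begin{equation*}
\N^s_k(2m) \le 2\,\N^s_k(m) + 2\,\N^{s-1}_{k'}(m) \le 2\,\N^s_k(m) + C\, m (\log m)^{s-3}
\end{equation*}
for some constant $C$ depending on $s$. This is a standard divide-and-conquer recurrence: unfolding it $O(\log m)$ times contributes $O{\bigl(m(\log m)^{s-3}\bigr)}$ per level and $O(\log m)$ levels in total, giving the desired bound $\N^s_k(m) = O{\bigl(m(\log m)^{s-2}\bigr)}$.

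There is no real obstacle here beyond the routine verification that $2k'-1 = k$ threads the induction correctly, and the observation that the recurrence $f(2m) \le 2f(m) + g(m)$ with $g(m) = O{\bigl(m(\log m)^{s-3}\bigr)}$ has solution $f(m) = O{\bigl(m(\log m)^{s-2}\bigr)}$ (which can be checked by plugging in the ansatz). The only minor care needed is to keep track of the constants hidden in $O(\cdot)$ across the induction on $s$, but since $s$ is fixed these do not cause any difficulty.
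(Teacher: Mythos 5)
Your proof is correct and follows exactly the approach the paper intends: the paper's one-sentence proof ("apply Recurrence~\ref{rec_log} using induction on $s$, using Lemma~\ref{lemma_ns_s1} as base case for $s=2$") is precisely what you have spelled out, including the key check that $2k'-1 = k$ when $k' = 2^{s-2}+1$ and $k = 2^{s-1}+1$. Your unfolding of the divide-and-conquer recurrence $f(2m)\le 2f(m)+O(m(\log m)^{s-3})$ to get $f(m)=O(m(\log m)^{s-2})$ is the routine step the paper leaves implicit.
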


\begin{proof}
Apply Recurrence~\ref{rec_log} using induction on $s$, using
Lemma~\ref{lemma_ns_s1} as base case for $s = 2$.
\end{proof}

The following recurrence and corollary for $\N^3_k(m)$ are analogous
to Recurrence~3.3 and Lemma~3.5 in \cite{interval_chains}:

\begin{recurrence}\label{rec_N3km}
Let $t$ be an integer parameter, with $t\le \sqrt m$. Then,
\begin{equation*}
\N^3_k(m) \le \left( 1+ {m\over t} \right) \N^3_k(t) +
\N^3_{k-2}{\left( 1 + {m\over t} \right)} + 3m.
\end{equation*}
\end{recurrence}

\begin{proof}
Take a sequence $S$ that maximizes $\N^3_k(m)$. Let $b = \lceil m/t
\rceil \le 1 + m/t$. Partition the $m$ blocks of $S$ from left to
right into $b$ \emph{layers} $L_1,\ldots, L_b$ of at most $t$ blocks
each.

We classify the symbols of $S$ into different types. A symbol is
\emph{local} for layer $L_i$ if it only appears in $L_i$. Taking
just the symbols local to $L_i$ produces an $\ADS^3_k(t)$-sequence.
Therefore, the number of local symbols is at most $\N^3_k(t)$ per
layer, or at most $b\N^3_k(t) \le \left( 1 + {m\over t} \right)
\N^3_k(t)$ altogether.

Symbols which appear in at least two layers are called \emph{global
symbols}.

Call a global symbol \emph{left-concentrated} for layer $L_i$ if it
makes its first appearance in $L_i$, and it appears at least three
times in $L_i$. Given a layer $L_i$, take just the left-concentrated
symbols for $L_i$, and just their occurrences within $L_i$. The
resulting sequence $S'_i$ cannot contain an alternation $abab$, or
else $S$ would contain $ababa$. Therefore, $S'_i$ is an
$\ADS^2_3(t)$-sequence, so by Lemma~\ref{lemma_ns_s1} it has at most
$t-2$ different symbols. Thus, there are at most $b(t-2)\le \bigl(1
+ {m\over t} \bigr) (t-2) \le m$ left-concentrated symbols
altogether (since $t\le \sqrt m$).

Similarly, there are at most $m$ \emph{right-concentrated} symbols.

Next, call a global symbol \emph{middle-concentrated} for layer
$L_i$ if it appears at least twice in $L_i$, and it also appears
before $L_i$ and after $L_i$.

Given $L_i$, take just the middle-concentrated symbols for $L_i$,
and just their occurrences within $L_i$. The resulting sequence
$S''_i$ cannot contain an alternation $aba$, so $S''_i$ is an
$\ADS^1_2(t)$-sequence, and so by Lemma~\ref{lemma_n12} it contains
at most $t-1$ different symbols. Therefore, there are at most
$b(t-1) \le m$ middle-concentrated symbols. (Note that we might have
counted the same middle-concentrated symbol more than once.)

Finally, take all the global symbols we have not accounted for so
far---the \emph{scattered symbols}. Each of these symbols must
appear in at least $k-2$ different layers. Build a subsequence of
$S$ by taking just the scattered symbols, and for each scattered
symbol, just one occurrence per layer. Each layer becomes a block,
and no new forbidden alternation can arise. Hence, we get an
$\ADS^3_{k-2}(b)$-sequence, which can have at most
$\N^3_{k-2}{\left( 1 + {m\over t}\right)}$ different symbols.
\end{proof}

\begin{corollary}\label{cor_N3km_upper}
There exists an absolute constant $c$ such that, for every $k\ge 2$,
we have
\begin{equation*}
\N^3_{2k+1}(m) \le cm \alpha_k(m) \qquad \text{for all $m$}.
\end{equation*}
\end{corollary}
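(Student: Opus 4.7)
The proof proceeds by induction on $k$, with an inner induction on $m$ for each fixed $k$. The base case $k=2$ is immediate from Corollary~\ref{cor_log} (taking $s=3$), which gives $\N^3_5(m)=O(m\log m)=O(m\alpha_2(m))$; this fixes the constant $c$ that we will use throughout.

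For the inductive step, assume $\N^3_{2k-1}(x)\le cx\alpha_{k-1}(x)$ for all $x$, and prove $\N^3_{2k+1}(m)\le cm\alpha_k(m)$ for all $m$. The strategic move is to apply Recurrence~\ref{rec_N3km} (with $k$ replaced by $2k+1$) using the choice $t=\alpha_{k-1}(m)$. Two features justify this choice: first, by the defining relation $\alpha_k(x)=1+\alpha_k(\alpha_{k-1}(x))$, we have $\alpha_k(t)=\alpha_k(m)-1$, saving exactly one level in the inverse Ackermann hierarchy; second, $\alpha_{k-1}(m)\le\sqrt m$ for all sufficiently large $m$, so the hypothesis $t\le\sqrt m$ of the recurrence is satisfied.

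Writing $B=1+m/t=1+m/\alpha_{k-1}(m)$, the recurrence yields
\[
\N^3_{2k+1}(m)\le (1+m/t)\,\N^3_{2k+1}(t)+\N^3_{2k-1}(B)+3m.
\]
The inner induction on $m$ bounds $\N^3_{2k+1}(t)\le ct\alpha_k(t)=c\alpha_{k-1}(m)(\alpha_k(m)-1)$; multiplying by $1+m/t$ and using $\alpha_{k-1}(m)\alpha_k(m)=o(m)$ gives the main contribution $cm\alpha_k(m)-cm$ plus negligible terms. The outer induction on $k$ bounds $\N^3_{2k-1}(B)\le cB\alpha_{k-1}(B)$, and since $B\le m$ implies $\alpha_{k-1}(B)\le\alpha_{k-1}(m)$, this is at most $c\alpha_{k-1}(m)+cm$, essentially $cm$ for large $m$. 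Summing the three pieces, the saved $-cm$ from the first term roughly cancels the $+cm$ cost of the second, leaving $\N^3_{2k+1}(m)\le cm\alpha_k(m)+O(m)$.

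The main obstacle is keeping the constant $c$ truly \emph{absolute} through the induction on $k$: the residual $O(m)$ slack, comparable to the additive $3m$ in the recurrence, threatens to accumulate and force $c$ to grow with $k$. I plan to handle this by treating the regime where $\alpha_k(m)$ is small (where $m$ is bounded in terms of $k$, so a trivial bound applies) as a base case of the inner induction, so that the inductive step only needs to control $m$ in the range where $\alpha_k(m)$ is large enough for the $O(m)$ term to be absorbed into the leading $cm\alpha_k(m)$; an alternative is to carry a slightly strengthened hypothesis with an additive linear correction that is preserved by the recurrence. With the constants fixed carefully at the base case $k=2$, either approach yields the same absolute $c$ valid for all $k\ge 2$.
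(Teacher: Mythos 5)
Your overall architecture (double induction, Recurrence~\ref{rec_N3km} with $t$ of the order of $\alpha_{k-1}(m)$, base case from Corollary~\ref{cor_log}) is the right one, but there is a genuine gap in how the constants balance, and neither of your proposed repairs fixes it. With your choice $t=\alpha_{k-1}(m)$, the inner induction gives $(1+m/t)\,\N^3_{2k+1}(t)\le cm\alpha_k(m)-cm+o(m)$, while the scattered term costs $\N^3_{2k-1}(1+m/t)\le c\bigl(1+\tfrac{m}{\alpha_{k-1}(m)}\bigr)\alpha_{k-1}(m)\approx cm$; so the entire saving of $cm$ coming from $\alpha_k(t)=\alpha_k(m)-1$ is consumed by the scattered symbols, and the additive $3m$ of the recurrence survives intact. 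The induction step therefore produces $cm\alpha_k(m)+3m+o(m)$, exceeding the target by about $3m$ \emph{no matter how large $\alpha_k(m)$ is} -- the surplus sits on top of a target you have already spent in full, so restricting the step to the regime of large $\alpha_k(m)$ does not absorb it. Carrying a strengthened hypothesis $\N^3_{2k+1}(x)\le cx\alpha_k(x)\pm ax$ does not help either: the correction term passes through the factor $(1+m/t)$ as $\pm am+o(m)$ on one side and appears as $\pm am$ in the target on the other, cancelling and leaving the same $3m$ deficit (the correction contributed by the scattered term is only of order $m/\alpha_{k-1}(m)$, too small to matter). Unwinding the recursion, this per-step deficit forces the constant to grow by an additive amount at every level of $k$ (roughly $c\mapsto c+4$), which is exactly the non-absolute behavior you were trying to avoid.

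The missing idea is to make the scattered term cost a proper \emph{fraction} of $cm$ rather than essentially all of it. The paper takes $t=3\widehat\alpha_{k-1}(m)$, where $\widehat\alpha_k$ is a modified hierarchy defined by $\widehat\alpha_k(x)=1+\widehat\alpha_k\bigl(3\widehat\alpha_{k-1}(x)\bigr)$ (with cutoff at a constant $m_0$), so that the identity $\widehat\alpha_k(t)=\widehat\alpha_k(m)-1$ still holds exactly for this larger $t$; then the scattered term is at most $\tfrac{2}{3}c_1m$, the local term is handled crudely via Lemma~\ref{lemma_ns_s1} ($\N^3_{2k+1}(t)\le t^2\le m$), and the per-step budget $\tfrac{2}{3}c_1m+4m\le c_1m$ closes once $c_1\ge 12$, with the same constant at every level of $k$. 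One then converts back using $|\widehat\alpha_k(x)-\alpha_k(x)|\le c_0$ (Appendix~\ref{app_ack_like}), which only changes the absolute constant. Without some device of this kind -- a constant-factor-inflated $t$ together with a correspondingly modified inverse-Ackermann hierarchy (or an equivalent bookkeeping that leaves genuine slack in the leading term) -- your induction as written does not yield an absolute constant $c$.
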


\begin{proof}
Let $m_0$ be a constant large enough that
\begin{equation*}
m\ge 1+ 9\lceil \log_2 m\rceil^2 \qquad \text{for all $m\ge m_0$}.
\end{equation*}
We will work with a slight variant of the inverse Ackermann
function. For this proof, let $\widehat \alpha_k(x)$, $k\ge 2$, be
given by $\widehat \alpha_2(x) = \alpha_2(x) = \lceil \log_2
x\rceil$, and, for $k\ge 3$, by the recurrence
\begin{equation*}
\widehat\alpha_k(x) =
\begin{cases}
1, & \text{if $x\le m_0$};\\
1+ \widehat\alpha_k{\bigl( 3 \widehat\alpha_{k-1}(x) \bigr)}, &
\text{otherwise}.
\end{cases}
\end{equation*}
Note that $\widehat\alpha_k(x)$ is well-defined by our choice of
$m_0$. Furthermore, there exists a constant $c_0$ such that
$|\widehat\alpha_k(x) - \alpha_k(x)| \le c_0$ for all $k$ and $x$
(see Appendix~B of~\cite{interval_chains}, or
Appendix~\ref{app_ack_like} in this paper).

We will prove by induction on $k\ge 2$ that
\begin{equation*}
\N^3_{2k+1}(m) \le c_1 m \widehat \alpha_k(m) \qquad \text{for all
$m$},
\end{equation*}
for some absolute constant $c_1$; this implies our claim. By
Corollary~\ref{cor_log} we have $\N^3_5(m) = O(m\log m)$, so the
base case $k=2$ follows by choosing $c_1$ sufficiently large. We
choose $c_1$ large enough so that it also satisfies $c_1 \ge 12$ and
that $c_1 \ge \N^3_7(m) / m$ for all $m\le m_0$.

Now, let $k\ge 3$, and assume the bound holds for $k-1$. To
establish the bound for $k$, first let $m\le m_0$. Then we have
\begin{equation*}
\N^3_{2k+1}(m) \le \N^3_7(m) \le c_1 m = c_1 m \widehat\alpha_k(m),
\end{equation*}
since $\N^3_k(m)$ is nonincreasing in $k$ for fixed $m$. Thus, let
$m
> m_0$. We apply Recurrence~\ref{rec_N3km} with $t = 3 \widehat
\alpha_{k-1}(m)$. (Note that $t\le \sqrt m$ for $m>m_0$ by our
choice of $m_0$.) By the induction assumption for $k-1$ we have
\begin{equation*}
\N^3_{2k-1}{\left( 1 + {m\over t} \right)} \le \N^3_{2k-1}{\left(
{2m\over t} \right)} \le {2c_1m\over t} \widehat\alpha_{k-1}{\left(
{2m\over t} \right)} \le {2c_1m\over t} \widehat\alpha_{k-1}(m) =
{2c_1m \over 3}.
\end{equation*}
Substituting into Recurrence~\ref{rec_N3km}, and letting
$\N^3_{2k+1}(m) = mg(m)$,
\begin{align*}
g(m) &\le g(t) + {\N^3_{2k+1}(t)\over m} + {2c_1\over 3} + 3\\
&\le g(t) + {2c_1\over 3} + 4 &&\text{(since, by
Lemma~\ref{lemma_ns_s1}, $\N^3_{2k+1}(t) \le t^2
\le m$)}\\
&\le g(t) + c_1 && \text{(since $c_1\ge 12$)}.
\end{align*}
Since $\widehat \alpha_k(t) = \widehat \alpha_k(m) - 1$, it follows
by induction on $m$ (with base case $m\le m_0$) that
\begin{equation*}
g(m) \le c_1 \widehat \alpha_k(m) \qquad \text{for all $m$}.
\end{equation*}
Therefore,
\begin{equation*}
\N^3_{2k+1}(m) \le c_1 m \widehat\alpha_k(m) \qquad \text{for all
$m$}. \qedhere
\end{equation*}
\end{proof}

The bound for $\psi_3(m,n)$ in
Lemma~\ref{lemma_gral_psi_alphak_bound} now follows from
Corollary~\ref{cor_N3km_upper} and Lemma~\ref{lemma_psi_to_ADS}.

\subsection{Obtaining Klazar's improved upper bound for $\lambda_3(n)$}

Klazar's tighter upper bound (\ref{eq_lambda_3_old_sandwich}) for
$\lambda_3(n)$ follows by using the following relation between
$\lambda_3(n)$ and $\psi_3(m,n)$, instead of
Lemma~\ref{lemma_lambda_to_psi}:

\begin{lemma}[Klazar~\cite{klazar99}]\label{lemma_lamda_3_to_psi_klazar}
We have $\lambda_3(n) \le \psi_3(1 + 2n/\ell, n) + 3n\ell$, where
$\ell\le n$ is a free parameter.
\end{lemma}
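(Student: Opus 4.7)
The plan is to refine the greedy partition argument from the proof of Lemma~\ref{lemma_lambda_to_psi} (specialized to $s=3$) by introducing the parameter $\ell$ to trade between the number of blocks in the $\psi_3$ term and an additive overhead. Let $S$ be a DS3 sequence on $n$ symbols with $|S|=\lambda_3(n)$. I aim to construct a subsequence $S^*$ of $S$ that is partitionable into at most $m'=1+2n/\ell$ blocks of distinct symbols and satisfies $|S|-|S^*|\le 3n\ell$. Since $S^*$, being a subsequence of $S$, is automatically DS3, we would obtain $|S^*|\le\psi_3(m',n)$, and the lemma would follow.

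First, I would apply the greedy partition of Lemma~\ref{lemma_lambda_to_psi} to split $S$ into at most $m\le 2n$ order-$1$ blocks $B_1,\dots,B_m$, each of distinct symbols, with each non-last block containing either the first or the last occurrence in $S$ of some ``witness'' symbol. Next, I would coarsen this partition by grouping consecutive blocks into super-blocks $G_1,\dots,G_{m'}$ of at most $\ell$ original blocks each, so that $m'\le\lceil m/\ell\rceil\le 1+2n/\ell$. Then, within each super-block, I would delete all but one occurrence (say the first) of each symbol appearing in it, making each super-block into a block of distinct symbols; the concatenation is $S^*$. A few further symbols may need to be deleted at super-block interfaces to suppress adjacent repeats, contributing $O(m')$ to the overhead.

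The main obstacle is bounding the deletions $|S|-|S^*|$ by $3n\ell$. Within any super-block $G_i$ consisting of $\le\ell$ order-$1$ sub-blocks, each symbol appears at most $\ell$ times and hence contributes at most $\ell-1$ deletions per super-block in which it appears. Naively this gives $(\ell-1)\sum_a d_a$ deletions, where $d_a$ is the number of super-blocks containing symbol $a$; this need not be bounded by $3n\ell$, since $\sum_a d_a$ can be as large as $nm'$. A refinement using the DS3 condition (no $ababa$ in $S$) is required: since any pair of symbols admits at most four alternations in $S$, the occurrences of each symbol $a$ inside its ``middle'' super-blocks (those in which $a$ is neither making its first nor its last appearance in $S$) are tightly constrained. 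A charging argument then classifies each super-block $G_i$ for each symbol $a$ into one of three categories---opening super-block (containing $a$'s first occurrence in $S$), closing super-block (containing $a$'s last occurrence), or middle super-block---and bounds the excess $a$-occurrences in each category separately; summed over all symbols each category contributes $O(n\ell)$, and the constant $3$ in $3n\ell$ comes precisely from these three categories.

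I expect the charging step to be the main technical difficulty, since the partition itself is a routine coarsening of the standard greedy construction. The challenge lies in translating the DS3 forbidden pattern $ababa$ into a quantitative bound on how often a single symbol can recur inside its middle super-blocks; once that is done, the opening/closing categories contribute at most $n(\ell-1)$ deletions each (each symbol has a single opening and closing super-block in which it can contribute up to $\ell-1$ deletions), and the middle category yields the remaining $\approx n\ell$ via the alternation bound.
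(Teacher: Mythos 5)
Your partition (grouping $\ell$ consecutive greedy order-$1$ blocks into super-blocks) is a workable variant of Klazar's, and the opening/closing part of your accounting is fine: each symbol has one opening and one closing super-block and appears at most $\ell$ times in each (once per greedy block), giving at most $2n(\ell-1)$ deletions from those two categories. But the decisive step---bounding the deletions inside \emph{middle} super-blocks---is exactly what you leave unproven, and the hint you offer (``any pair of symbols admits at most four alternations'') does not by itself produce the bound; as you yourself note, the incidence count $\sum_a d_a$ can be of order $nm'$, and nothing in your sketch explains why the middle excess is $O(n\ell)$. This is not a routine verification: it is the heart of the lemma, and the paper's proof is essentially devoted to it.

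Here is how the gap must be filled (it is the same pair of arguments the paper uses in Appendix~\ref{app_lambda_3_klazar}, with a different partition). Fix a super-block $G_i$ and two symbols $a\neq b$ that are middle for $G_i$, i.e.\ each occurs both before and after $G_i$. Then $G_i$ cannot contain $aba$, else $S$ contains $babab$; so the middle symbols of $G_i$ do not intermingle, and the gaps between consecutive occurrences of a middle symbol inside $G_i$ are interior-disjoint, nonempty intervals (nonempty by $2$-sparsity). Moreover each such gap must contain in its interior a \emph{terminal} occurrence (the first or last occurrence in $S$ of some symbol): otherwise any symbol $c$ strictly inside the gap has its first occurrence before the left bounding $a$ and its last occurrence after the right bounding $a$, so $S$ contains $cacac$. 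Since the gaps are interior-disjoint, the middle-category deletions in $G_i$ are at most the number of terminal occurrences lying in $G_i$, hence at most $2n$ in total over all super-blocks---$O(n)$, not the $\approx n\ell$ you predicted; the constant $3$ comes from slack over $2n(\ell-1)+2n+O(n/\ell)$, not from three categories each of size $n\ell$. For comparison, the paper's proof chooses blocks containing exactly $\ell$ terminal occurrences each, which makes the per-block overhead $\ell^2-1$ immediate, but then it needs the same two $babab$ arguments: once to show each symbol occurs at most $\ell$ times per block (automatic in your partition) and once for the nonterminal symbols. So your route can be completed and is a legitimate minor variant, but as submitted it is missing its central argument.
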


(Klazar actually proved this relation under a stricter definition of
$\psi_3(m,n)$.) For completeness, we prove
Lemma~\ref{lemma_lamda_3_to_psi_klazar} in
Appendix~\ref{app_lambda_3_klazar}.

\begin{corollary}
$\lambda_3(n) \le 2n\alpha(n) + O{\bigl( n\sqrt{\alpha(n)} \bigr)}$.
\end{corollary}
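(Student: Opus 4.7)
The plan is to combine Klazar's Lemma~\ref{lemma_lamda_3_to_psi_klazar} with the $\psi_3$ bound derivable from Corollary~\ref{cor_N3km_upper} via Lemma~\ref{lemma_psi_to_ADS}, while carefully tracking the leading constant so that it comes out to exactly $2$.

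First I would apply Lemma~\ref{lemma_psi_to_ADS} with multiplicity parameter $2k+1$, and then Corollary~\ref{cor_N3km_upper}, to get
\begin{equation*}
\psi_3(m,n) \;\le\; (2k+1)\bigl(\N^3_{2k+1}(m)+n\bigr) \;\le\; (2k+1)\bigl(c\,m\,\alpha_k(m)+n\bigr)
\end{equation*}
for every $k\ge 2$, where $c$ is the absolute constant of Corollary~\ref{cor_N3km_upper}. The crucial observation is that the coefficient of $n$ on the right-hand side is \emph{exactly} $2k+1$, while the potentially large constant $c$ multiplies only the $m\,\alpha_k(m)$ term. Taking $k=\alpha(m)$ makes $\alpha_k(m)\le 3$ by definition, so
\begin{equation*}
\psi_3(m,n) \;\le\; 3c\,(2\alpha(m)+1)\,m \;+\; (2\alpha(m)+1)\,n.
\end{equation*}

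Next I would plug $m=1+2n/\ell$ into Klazar's lemma. Since the inverse Ackermann function is essentially constant under polynomial scaling of its argument, for any $1\le \ell\le n$ we have $\alpha(1+2n/\ell)=\alpha(n)+O(1)$, so
\begin{equation*}
\lambda_3(n) \;\le\; \psi_3(1+2n/\ell,\,n) + 3n\ell \;\le\; 2n\alpha(n) + O(n) + O\!\bigl(n\alpha(n)/\ell\bigr) + 3n\ell.
\end{equation*}
Finally, choosing $\ell=\lceil\sqrt{\alpha(n)}\,\rceil$ balances the two error terms at $O(n\sqrt{\alpha(n)})$, yielding $\lambda_3(n)\le 2n\alpha(n)+O(n\sqrt{\alpha(n)})$.

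The only delicate point—and the step I would be most careful about—is ensuring that the tight leading coefficient $2$ really arises from the $(2k+1)n$ contribution alone, and is not spoiled by the unspecified absolute constant $c$. This works out cleanly: after the substitution $m=1+2n/\ell$, the $c$-term becomes $O(n\alpha(n)/\ell)$, and once $\ell$ is chosen on the order of $\sqrt{\alpha(n)}$, that contribution is absorbed into the $O(n\sqrt{\alpha(n)})$ error along with $3n\ell$. No genuinely hard step remains; the argument reduces to verifying that the choices $k=\alpha(m)$ and $\ell=\lceil\sqrt{\alpha(n)}\,\rceil$ produce exactly the claimed leading and error terms.
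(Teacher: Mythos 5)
Your proposal is correct and follows essentially the same route as the paper: apply Lemma~\ref{lemma_psi_to_ADS} with multiplicity $2\alpha(m)+1$, bound $\N^3_{2\alpha(m)+1}(m)$ by Corollary~\ref{cor_N3km_upper} to get $\psi_3(m,n) \le 2n\alpha(m) + n + O\bigl(m\alpha(m)\bigr)$, and then invoke Klazar's Lemma~\ref{lemma_lamda_3_to_psi_klazar} with $\ell$ on the order of $\sqrt{\alpha(n)}$. Your additional bookkeeping (tracking that the constant $c$ only hits the $m\alpha_k(m)$ term, and that $\alpha(1+2n/\ell) \le \alpha(n)+O(1)$) matches what the paper leaves implicit.
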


\begin{proof}
Taking $s=3$ and $k = 2\alpha(m) + 1$ in
Lemma~\ref{lemma_psi_to_ADS}, and bounding $\N^3_{2\alpha(m) +
1}(m)$ by Corollary~\ref{cor_N3km_upper}, we get
\begin{equation*}
\psi_3(m,n) \le \bigl(2\alpha(m) + 1\bigr)
\bigl(cm\alpha_{\alpha(m)}(m) + n\bigr) = 2n\alpha(m) + n+
O{\bigl(m\alpha(m) \bigr)}.
\end{equation*}
We now apply Lemma~\ref{lemma_lamda_3_to_psi_klazar} with $\ell =
\sqrt{\alpha(n)}$.
\end{proof}

\subsection{Bounding $\N^s_k(m)$ for general $s$}

The following recurrence and corollary for $\N^s_k(m)$ are analogous
to Recurrence~3.6 and Lemma~3.8 in \cite{interval_chains}:

\begin{recurrence}\label{rec_nskm}
Let $s\ge 3$ be fixed. Let $k_1$, $k_2$, $k_3$ be integers, and put
$k = k_2 k_3 + 2k_1 - 3k_2 - k_3 + 2$. Then,
\begin{equation*}
\N^s_k(m) \le \left(1+ {m\over t}\right) \left( \N^s_k(t) +
2\N^{s-1}_{k_1}(t) + \N^{s-2}_{k_2}(t) \right) + \N^s_{k_3}{\left(
1+{m\over t} \right)},
\end{equation*}
where $t$ is a free parameter.
\end{recurrence}

\begin{proof}
Take a sequence $S$ that maximizes $\N^s_k(m)$. Again partition the
$m$ blocks of $S$ into $b = \lceil m/t \rceil \le 1 + m/t$ layers
$L_1,\ldots, L_b$, with at most $t$ blocks per layer.

We again classify the symbols of $S$ into \emph{local} (if the
symbol appears in only one layer), or \emph{global}. As before,
there are at most $\left( 1 + {m\over t} \right) \N^s_k(t)$ local
symbols.

And we again classify the global symbols into
\emph{left-concentrated}, \emph{right-concentrated},
\emph{middle-concentrated}, and \emph{scattered}. This time we do
this as follows:

A global symbol is \emph{left-concentrated} for layer $L_i$ if its
first $k_1$ occurrences fall in $L_i$. The overall number of
left-concentrated symbols is at most $\left( 1 + {m\over t} \right)
\N^{s-1}_{k_1}(t)$. \emph{Right-concentrated} symbols are defined
and handled analogously.

A global symbol is \emph{middle-concentrated} for layer $L_i$ if it
appears at least $k_2$ times in $L_i$, and it also appears before
$L_i$ and after $L_i$. There are at most $\left( 1 + {m\over t}
\right) \N^{s-2}_{k_2}(t)$ middle-concentrated symbols altogether.

Finally, a global symbol is \emph{scattered} if it appears in at
least $k_3$ different layers. Taking just these symbols, and for
each symbol, just one occurrence per layer, we obtain an
$\ADS^s_{k_3}{\left( b \right)}$-sequence. Thus, there are at most
$\N^s_{k_3}(b) \le \N^s_{k_3}{\left(1+ {m\over t}\right)}$ scattered
symbols.

All that remains is to show that we did not miss any global symbol.
Suppose a global symbol is neither \hbox{left-,} middle-, nor
right-concentrated, nor scattered. Then the symbol appears at most
$2(k_1 - 1) + (k_3 - 3)(k_2 - 1) = k - 1$ times in $S$, a
contradiction.
\end{proof}

The only significant difference between Recurrence \ref{rec_nskm}
above and Recurrence 3.6 in \cite{interval_chains} lies in the
formula for $k$ in terms of $k_1$, $k_2$, and $k_3$. (The formula
there is $k = k_2 k_3 + 2k_1 - 2k_2$.) But in both cases we get the
same asymptotic behavior:

\begin{corollary}\label{cor_nskm}
Define $R_s(d)$ for $s\ge1$, $d\ge2$ by $R_1(d) = 2$, $R_2(d) = 3$,
and for $s\ge 3$ by
\begin{align*}
R_s(2) &= 2^{s-1} + 1,\\
R_s(d) &= R_s(d-1) R_{s-2}(d) + 2 R_{s-1}(d) - 3R_{s-2}(d)\\
&\qquad - R_s(d-1) + 2, \qquad \text{for $d\ge 3$}.
\end{align*}
Then, for every $s\ge 2$ and $d\ge 2$, if $k\ge R_s(d)$ then
\begin{equation*}
\N^{s}_k(m) \le c m \alpha_d(m)^{s-2} \qquad \text{for all $m$}.
\end{equation*}
Here $c = c(s)$ is a constant that depends only on $s$.
\end{corollary}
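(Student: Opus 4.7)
The plan is to establish the bound by a double induction on $s$ and $d$, with the outer induction on $d$ and a nested induction on $s$, and a further induction on $m$ inside the inductive step. The two base cases to handle separately are $s=2$ and $d=2$. For $s=2$ and any $d$, note $R_2(d)=3$, and by Lemma~\ref{lemma_ns_s1} we have $\N^2_3(m)\le m-2=O(m)=O\bigl(m\alpha_d(m)^{0}\bigr)$, matching the claimed bound with exponent $s-2=0$. For $d=2$, $R_s(2)=2^{s-1}+1$ and Corollary~\ref{cor_log} gives $\N^s_{2^{s-1}+1}(m)=O\bigl(m(\log m)^{s-2}\bigr)=O\bigl(m\alpha_2(m)^{s-2}\bigr)$. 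Because $\N^s_k(m)$ is non-increasing in $k$ for fixed $s$ and $m$, the bound for larger $k$ follows automatically.

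For the inductive step with $s\ge 3$, $d\ge 3$, assume the corollary for the pairs $(s-1,d)$, $(s-2,d)$, and $(s,d-1)$. I would apply Recurrence~\ref{rec_nskm} with the choices $k_1=R_{s-1}(d)$, $k_2=R_{s-2}(d)$, $k_3=R_s(d-1)$; the definition of $R_s(d)$ ensures $k_2k_3+2k_1-3k_2-k_3+2=R_s(d)$, so the recurrence applies whenever $k\ge R_s(d)$. As in the proofs of Corollaries~\ref{cor_N3km_upper} and Lemma~\ref{lemma_psi_alphak_bound}, I would replace $\alpha_d$ by a tailored variant $\widehat\alpha_d$ satisfying $\widehat\alpha_d(x)=1+\widehat\alpha_d\bigl(C\widehat\alpha_{d-1}(x)^{s-2}\bigr)$ above a threshold, so that choosing $t=C\widehat\alpha_{d-1}(m)^{s-2}$ yields $\widehat\alpha_d(t)=\widehat\alpha_d(m)-1$; the standard argument in Appendix~\ref{app_ack_like} shows $|\widehat\alpha_d-\alpha_d|$ is bounded by a constant depending on $s$, so the final bound in terms of $\alpha_d$ is unaffected.

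With this choice of $t$, the induction hypothesis on $(s-1,d)$ and $(s-2,d)$ gives $2\N^{s-1}_{k_1}(t)+\N^{s-2}_{k_2}(t)\le c't\widehat\alpha_d(t)^{s-3}$, contributing at most $c'm\widehat\alpha_d(m)^{s-3}$ after multiplication by the $(1+m/t)$ factor. The induction hypothesis on $(s,d-1)$ bounds the scattered term by $c(1+m/t)\alpha_{d-1}(1+m/t)^{s-2}=O\bigl(m\widehat\alpha_d(m)^{s-3}\bigr)$, where the crucial gain comes from dividing $m$ by $t\approx\widehat\alpha_{d-1}(m)^{s-2}$. Writing $g(m)=\N^s_k(m)/m$, the recurrence then takes the form $g(m)\le g(t)+O(1)$ (where the $O(1)$ absorbs $c'$, the implicit constants, and a multiplicative slack for the $m/t$ versus $1+m/t$ error). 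Induction on $m$, with the threshold case supplied by the $d=2$ bound, yields $g(m)\le c\widehat\alpha_d(m)^{s-2}$ after choosing $c$ large enough to dominate the $O(1)$ per level and the constants from the smaller $s$.

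The main obstacle is purely bookkeeping: getting a single constant $c=c(s)$ that simultaneously handles the base case $m\le m_0$, absorbs $c'$ arising from the $(s-1)$- and $(s-2)$-bounds, and dominates the additive $O(1)$ contributed at each step of the $m$-recursion. This is routine but delicate, and is handled exactly as in the proofs of Corollary~\ref{cor_N3km_upper} and Lemma~\ref{lemma_psi_alphak_bound}: one first fixes the smaller-$s$ constants, then picks the threshold $m_0$ and the global constant $c$ large enough that the recursion closes. Once the bound in terms of $\widehat\alpha_d$ is in hand, the boundedness of $\widehat\alpha_d-\alpha_d$ translates it into the desired bound $\N^s_k(m)\le cm\alpha_d(m)^{s-2}$.
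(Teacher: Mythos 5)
Your proposal is correct and follows essentially the same route as the paper's proof: the same double induction with base cases $s=2$ (Lemma~\ref{lemma_ns_s1}) and $d=2$ (Corollary~\ref{cor_log}), the same application of Recurrence~\ref{rec_nskm} with $k_1=R_{s-1}(d)$, $k_2=R_{s-2}(d)$, $k_3=R_s(d-1)$, the same tailored $\widehat\alpha_d$ with $t$ a constant multiple of $\widehat\alpha_{d-1}(m)^{s-2}$ so that $\widehat\alpha_d(t)=\widehat\alpha_d(m)-1$, and the same constant bookkeeping (the paper takes $t=12\widehat\alpha_{d-1}(m)^{s-2}$, treats $s=3$ as a base case via Corollary~\ref{cor_N3km_upper}, and disposes of the leftover local term via $\N^s_k(t)\le t^{s-1}\le m$ from Lemma~\ref{lemma_ns_s1}). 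Two cosmetic slips: for $s\ge 4$ the reduced recurrence is $g(m)\le g(t)+O\bigl(\widehat\alpha_d(m)^{s-3}\bigr)$ rather than $g(m)\le g(t)+O(1)$ (your own preceding estimates give the correct form, and telescoping then yields the stated $c\,\widehat\alpha_d(m)^{s-2}$), and the recursion closes not by taking $c$ large against the scattered term---which is itself proportional to $c$---but because the constant inside $t$ makes that term a fraction (e.g.\ $c/6$) of $c$, exactly as in the proofs you cite.
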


\begin{proof}
By induction on $s$, and on $d$ for each $s$. (Recall that
$\N^s_k(m)$ is nonincreasing in $k$ for fixed $s$ and $m$.) The base
case $s=2$ is given by Lemma~\ref{lemma_ns_s1}. For $s = 3$ we have
$R_3(d) = 2d+1$, and the claim is equivalent to
Corollary~\ref{cor_N3km_upper}. Therefore, let $s\ge 4$ be fixed,
and assume the claim holds for $s'<s$.

Let $m_0 = m_0(s)$ be a constant large enough so that\footnote{The
dependence of $m_0$ on $s$ here could be greatly improved with a
slightly more careful analysis.}
\begin{equation*}
m \ge 1+12^s \lceil \log_2 m \rceil^{s^2} \qquad\text{ for all $m\ge
m_0$}.
\end{equation*}
We again work with a slight variant of the inverse Ackermann
function. For this proof define $\widehat \alpha_d(x)$, $d\ge 2$, by
$\widehat\alpha_2(x) = \alpha_2(x) = \lceil \log_2 x \rceil$, and
for $d\ge 3$ by the recurrence
\begin{equation*}
\widehat \alpha_d(x) =
\begin{cases}
1, & \text{if $x\le m_0$};\\
1 + \widehat \alpha_d{\bigl( 12 \widehat \alpha_{d-1}(x)^{s-2}
\bigr)}, & \text{otherwise}.
\end{cases}
\end{equation*}
The functions $\widehat\alpha_d(x)$ are well defined by our choice
of $m_0$. And as before, there exists a constant $c_0$ (depending
only on $s$) such that $| \widehat\alpha_d(x) - \alpha_d(x)|\le c_0$
for all $d$ and $x$.

We will show, by induction on $d$, that there exists a constant
$c_1$ (depending only on $s$) such that, for all $d\ge 2$ and all
$m$, we have
\begin{equation}\label{eq_N_ind_claim}
\N^s_k(m) \le c_1 m \widehat \alpha_d(m)^{s-2} \qquad \text{for
$k\ge R_s(d)$}.
\end{equation}
This is easily seen to imply the claim.

The base case $d = 2$ follows from Corollary~\ref{cor_log}, provided
$c_1$ is chosen large enough. Further, by induction on $s$ we know
there exist constants $c_2$, $c_3$ (depending on $s$), such that
\begin{align*}
\N^{s-1}_k(m) \le c_2 m \widehat \alpha_d(m)^{s-3} \qquad \text{for
$k \ge R_{s-1}(d)$},\\
\N^{s-2}_k(m) \le c_3 m \widehat \alpha_d(m)^{s-4} \qquad \text{for
$k \ge R_{s-2}(d)$},
\end{align*}
for all $d\ge 3$ and all $m$. We choose $c_1$ large enough so that
it also satisfies $c_1 \ge 6c_2$, $c_1 \ge 6c_3$, and
\begin{equation}\label{eq_c1_condition_m0}
c_1 \ge \N^s_{R_s(3)}(m) / m, \qquad \text{for all $m\le m_0$}.
\end{equation}

Now, let $d\ge 3$, and suppose (\ref{eq_N_ind_claim}) holds for
$d-1$. To establish (\ref{eq_N_ind_claim}) for $d$, assume first
that $m\le m_0$. Then, by (\ref{eq_c1_condition_m0}), for all $k\ge
R_s(d)$ we have
\begin{equation*}
\N^s_k(m) \le \N^s_{R_s(3)}(m) \le c_1 m = c_1 m \widehat
\alpha_d(m)^{s-2}.
\end{equation*}
Thus, let $m > m_0$. Apply Recurrence~\ref{rec_nskm} with the
following parameters:
\begin{gather*}
k_1 = R_{s-1}(d), \quad k_2 = R_{s-2}(d), \quad k_3 = R_s(d-1),\\
k = R_s(d), \quad t = 12\widehat\alpha_{d-1}(m)^{s-2}.
\end{gather*}
The last three terms in Recurrence~\ref{rec_nskm} can be bounded as
follows.
\begin{align*}
2\N^{s-1}_{k_1}(t) &\le 2c_2 t \widehat\alpha_d(t)^{s-3} \le
{c_1\over 3} t \widehat \alpha_d(m)^{s-3},\\
\N^{s-2}_{k_2}(t) &\le c_3 t \widehat\alpha_d(t)^{s-4}
\le {c_1\over 6} t \widehat \alpha_d(m)^{s-3},\\
\N^s_{k_3}{\left( 1+ {m\over t}\right)} &\le
\N^s_{k_3}{\left({2m\over t}\right)} \le {2 c_1 m \over t} \widehat
\alpha_{d-1}(m)^{s-2} = {c_1\over 6} m \le {c_1\over 6} m
\widehat\alpha_d(m)^{s-3}.
\end{align*}
Substituting into Recurrence~\ref{rec_nskm} we get
\begin{equation*}
\N^s_k(m) \le {m\over t} \N^s_k(t) + {2 c_1\over 3} m \widehat
\alpha^d(m)^{s-3} + \N^s_k(t) + {c_1\over 2} t \widehat
\alpha_d(m)^{s-3}.
\end{equation*}
But by Lemma~\ref{lemma_ns_s1} we have $\N^s_k(t) \le t^{s-1} \le
\bigl( 12 \lceil \log_2 m\rceil^{s-2} \bigr)^{s-1}$, which is at
most $m$ for $m > m_0$ by our choice of $m_0$. In turn, $m$ is at
most $c_1m/6$, since $c_1\ge 6$.

Similarly, we have ${c_1\over 2} t \widehat \alpha_d(m)^{s-3} \le
c_1 m /6$ for $m>m_0$ by our choice of $m_0$. Thus,
\begin{equation*}
\N^s_k(m) \le {m\over t} \N^s_k(t) + c_1 m \widehat \alpha(m)^{s-3}
\end{equation*}
Letting $\N^s_k(m) = mg(m)$, we get
\begin{equation*}
g(m) \le g(t) + c_1 \widehat\alpha_d(m)^{s-3}
\end{equation*}
Since $\widehat\alpha_d(t) = \widehat\alpha_d(m)-1$, it follows by
induction on $m$ that
\begin{equation*}
g(m) \le c_1 \widehat\alpha_d(m)^{s-2}\qquad \text{for all $m$}.
\end{equation*}
(The base case $m\le m_0$ follows from (\ref{eq_c1_condition_m0}),
and for the induction on $m$ we apply $\bigl( \widehat\alpha_d(m) -
1 \bigr)^{s-2} \le \bigl(\widehat \alpha_d(m) - 1 \bigr) \widehat
\alpha_d(m)^{s-3}$.) Therefore,
\begin{equation*}
\N^s_k(m) \le c_1 m\widehat\alpha_d(m)^{s-2}\qquad \text{for all
$m$}. \qedhere
\end{equation*}
\end{proof}

Let us now study the asymptotic growth of $R_s(d)$ for fixed $s$. We
have
\begin{equation*}
R_3(d) = 2d + 1, \qquad R_4(d) = 5\cdot 2^d - 4d-3.
\end{equation*}
In general, letting $t = \lfloor (s-2) / 2 \rfloor$, we have
\begin{equation}\label{eq_asymp_R}
R_s(d) =
\begin{cases}
2^{(1/t!) d^t \pm O(d^{t-1})}, & \text{$s$ even};\\
2^{(1/t!) d^t \log_2 t \pm O(d^t)}, & \text{$s$ odd}
\end{cases}
\end{equation}
(see Appendix~\ref{app_growth_constants} again).

Lemma \ref{lemma_gral_psi_alphak_bound} now follows from Lemma
\ref{lemma_psi_to_ADS} by applying Corollary \ref{cor_nskm} with $k
= R_s(d)$.

\section{Bounding formation-free sequences}\label{sec_FF}

We now deal with the generalizations of Davenport--Schinzel
sequences described in the Introduction. Recall that the first step
in bounding $\Ex_u(n)$ is Lemma~\ref{lemma_Ex_to_F}, which claims
that $\Ex_u(n) \le \F_{r,s-r+1}(n)$, where $r = \|u\|$ and $s =
|u|$.

\begin{proof}[Proof of Lemma~\ref{lemma_Ex_to_F}]
Suppose $u = u_1 u_2\ldots u_s$, where $1\le u_i \le r$ for each
$i$. We can assume that the symbols in $u$ make their first
appearances in the order $1, 2, \ldots, r$.

Let $s' = s-r+1$, and let $\ell = \ell_1 \ell_2 \cdots \ell_{s'}$ be
an arbitrary $(r,s')$-formation, where each $\ell_j$ is a
permutation of $\{1, \ldots, r\}$. We want to show that $u \subset
\ell$.

Define a partition $u = B_1 B_2 \ldots B_{s'}$ of $u$ into $s'$
blocks as follows: First let each symbol of $u$ constitute its own
block of length $1$. Then, for each $2\le j\le r$, merge the block
that contains the first occurrence of $j$ in $u$ with the block
containing the immediately preceding symbol. The number of blocks
goes down from $s$ to $s'$.

Here is an example of a sequence thus partitioned:
\begin{equation}\label{eq_u_blocks_example}
u = [1][1][12][134][2][4][1][25][5].
\end{equation}
Clearly, each block $B_j$ is an increasing sequence.

Now we are going to define a permutation $\sigma$ on $\{1, \ldots,
r\}$ such that, for each block $B_j$ with $1\le j\le s'$, its image
$\sigma(B_j)$ is a subsequence of $\ell_j$. We do this by examining
the blocks from right to left, and by defining $\sigma$ in the order
$\sigma(r), \sigma(r-1), \ldots, \sigma(1)$. Note that blocks of
length $1$ can be safely ignored.

Suppose we have already dealt with blocks $B_{s'}, B_{s'-1}, \ldots,
B_{j+1}$, and that now is the turn of block $B_j$, where $|B_j| >
1$. Let $k$ be the last symbol in $B_j$. The symbols preceding $k$
in $B_j$ are $k-1,k-2, \ldots$, up to the second symbol of $B_j$.
All these symbols make their first appearance in $u$ in $B_j$. Call
these the ``new" symbols of $B_j$.

Suppose we have already assigned values to $\sigma(k+1), \ldots,
\sigma(r)$ in such a way that, no matter how we assign
$\sigma(1),\ldots, \sigma(k)$, the images $\sigma(B_{j+1}), \ldots,
\sigma(B_{s'})$ will always be subsequences of $\ell_{j+1}, \ldots,
\ell_{s'}$, respectively.

Now consider the symbols of $\ell_j$. Call a symbol of $\ell_j$
``free" if it has not yet been assigned as image $\sigma(i)$ to any
symbol $i$, for $k+1 \le i \le r$.

We scan $\ell_j$ form right to left, considering only its free
symbols, and we assign in a greedy fashion these free symbols as
images $\sigma(k), \sigma(k-1), \ldots$ to $k, k-1, \ldots$ (the
``new" symbols of $B_j$).

After we are done with these assignments, the only symbol of $B_j$
which has not been assigned an image is the first symbol of
$B_j$---call it $b_j$. But no matter how we define $\sigma(b_j)$
later on, we will always have that $\sigma(B_j)$ is a subsequence of
$\ell_j$ (because of our greedy approach).

At the end, the assignment $\sigma(1)$ of $1$ will be forced.

For example, with $u$ is as in (\ref{eq_u_blocks_example}), suppose
that
\begin{equation*}
\ell = \ell_1\ \ell_2\ 32514 \ 35421\ \ell_5\ \ell_6\ \ell_7\ 35142\
\ell_9
\end{equation*}
(where $\ell_1$, $\ell_2$, $\ell_5$, $\ell_6$, $\ell_7$, $\ell_9$ do
not matter). Then, our algorithm will assign $\sigma(5) = 2$,
$\sigma(4) = 1$, $\sigma(3) = 4$, $\sigma(2) = 5$, and finally
$\sigma(1) = 3$. Then the sequence
\begin{equation*}
\sigma(u) = [3][3][35][341][5][1][3][52][2]
\end{equation*}
is a subsequence of $\ell$, as desired.
\end{proof}

\begin{remark}
Lemma~\ref{lemma_Ex_to_F} is not the last word in finding sequences
in formations. For example, consider the sequence $u = abcabca$.
Lemma~\ref{lemma_Ex_to_F} states that $u$ is contained in every
$(3,5)$-formation, but in fact $u$ is contained in every
$(3,4)$-formation: Let $\ell = \ell_1 \ell_2 \ell_3 \ell_4$ be a
$(3,4)$-formation. Suppose $\ell_1 = abc$. Then, if $u$ itself is
not a subsequence of $\ell$, then $\ell_2$ must have $b$ before $a$,
$\ell_3$ must have $c$ before $b$, and $\ell_4$ must have $a$ before
$c$. But then $\ell$ contains the subsequence $cbacbac$.
\end{remark}

\subsection{Bounding the length of formation-free sequences}

Thus, the problem of bounding $\Ex_u(n)$ reduces to that of bounding
$\F_{r,s}(n)$. For completeness, we start by reproducing some simple
bounds from~\cite{klazar92}. We first prove that $\F_{r,s}(n)$ is
finite.

\begin{lemma}[Klazar~\cite{klazar92}]
We have $\F_{r,s}(n) \le s n^r$ for $n\ge r$.
\end{lemma}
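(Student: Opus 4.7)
The plan is to exploit $r$-sparsity by sliding a window of length $r$ across $S$ and applying pigeonhole. Let $S$ be an $r$-sparse sequence of length $N = |S|$ on $n$ distinct symbols, and for each $i = 1, \ldots, N-r+1$ consider the window $W_i = S_i S_{i+1} \cdots S_{i+r-1}$. By $r$-sparsity, each $W_i$ consists of $r$ pairwise distinct symbols, so it is an ordered permutation of some $r$-element subset of the alphabet; the number of such ordered permutations is at most $n(n-1)\cdots(n-r+1) \le n^r$.

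The key observation, which I would make next, is that if two windows coincide as ordered sequences, then they must be non-overlapping. Indeed, if $W_i = W_j$ for some $i<j$, then in particular $S_i = S_j$, and since $S$ is $r$-sparse this forces $j - i \ge r$. By induction, any collection of pairwise equal windows $W_{i_1}, \ldots, W_{i_\ell}$ (with $i_1 < \cdots < i_\ell$) satisfies $i_{k+1} \ge i_k + r$, so these windows occupy disjoint blocks of positions in $S$ and their concatenation appears as a (contiguous) subsequence of $S$.

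Now I would apply pigeonhole: if $N - r + 1 > (s-1)n^r$, then some permutation of $r$ symbols is realized by at least $s$ windows $W_{i_1}, \ldots, W_{i_s}$. By the previous paragraph these $s$ identical windows are pairwise disjoint, hence they exhibit a subsequence of $S$ that is an $(r,s)$-formation (a sequence of $s$ permutations on the same $r$ symbols). Since $S$ is assumed to be $(r,s)$-formation-free, we conclude
\begin{equation*}
N - r + 1 \le (s-1) n^r, \quad \text{so} \quad N \le (s-1) n^r + r - 1.
\end{equation*}
Finally, for $n \ge r$ we have $n^r \ge r \ge r-1$, hence $(s-1)n^r + r - 1 \le s n^r$, giving $\F_{r,s}(n) \le s n^r$.

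The argument is routine; the only real content is the observation that $r$-sparsity automatically converts an equality between sliding windows into disjointness, which is what lets pigeonhole deliver an actual formation (with disjoint permutation blocks) rather than merely a repeated pattern.
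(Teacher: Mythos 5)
Your proof is correct and is essentially the same pigeonhole argument as the paper's: the paper simply partitions $S$ into consecutive blocks of length $r$ (so disjointness is automatic) and pigeonholes on the \emph{set} of $r$ symbols in a block, giving the count $\binom{n}{r}$, whereas you slide a window of length $r$ and pigeonhole on the ordered window, using $r$-sparsity to force equal windows to be disjoint. Both routes produce $s$ disjoint permutations of a common $r$-element set, i.e.\ an $(r,s)$-formation, and the same bound $sn^r$.
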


\begin{proof}
Let $S$ be an $(r,s)$-formation-free sequence on $n$ distinct
symbols. Partition $S$ from left to right into blocks of length $r$.
Note that each block contains $r$ distinct symbols. Suppose we had
$1 + (s-1){n \choose r}$ complete blocks. Then, by the pigeonhole
principle, there would exist $s$ blocks that have the same set of
$r$ symbols. Such a set of $s$ blocks would be an $(r,s)$-formation.
Contradiction.

Therefore, we must have
\begin{equation*}
|S| < r\left( 1+ (s-1) {n\choose r} \right) \le r s {n \choose r}
\le s n^r. \qedhere
\end{equation*}
\end{proof}

It is also easy to get linear bounds for $\F_{r,2}(n)$ and
$\F_{r,3}(n)$:

\begin{lemma}[Klazar~\cite{klazar92}]\label{lemma_Fr2_Fr3}
We have $\F_{r,2}(n) \le rn$ and $\F_{r,3}(n) \le 2rn$.
\end{lemma}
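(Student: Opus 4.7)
The plan is to prove both bounds by induction on $n$, extending a classical symbol-elimination argument from $s=2$ to $s=3$.

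For $\F_{r,2}(n) \le rn$, I would examine the first $r$ positions of $S$: by $r$-sparsity they are all distinct and form a permutation of some $r$-subset $W_1$. If every symbol of $W_1$ were to reappear in the suffix $S[r+1{:}]$, then picking the first reappearance of each symbol of $W_1$ in position order would produce a second permutation of $W_1$ disjoint from the prefix; together with the first $r$ positions this would form an $(r,2)$-formation, contradicting the hypothesis. Hence at least one symbol of $W_1$ is absent from $S[r+1{:}]$, the suffix uses at most $n-1$ distinct symbols, and the induction hypothesis yields $|S| \le r + r(n-1) = rn$.

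For $\F_{r,3}(n) \le 2rn$, the plan is to partition $S$ into two contiguous parts $S = S_1 \cdot S_2$, each $r$-sparse and $(r,2)$-formation-free, and then apply the $\F_{r,2}$ bound to each part: $|S| = |S_1| + |S_2| \le 2\,\F_{r,2}(n) \le 2rn$. A natural way to obtain the partition is to take $S_2$ to be the longest $(r,2)$-formation-free suffix of $S$ (and $S_1$ the remaining prefix). That $S_1$ is also $(r,2)$-formation-free would be shown by contradiction: if $S_1$ contained an $(r,2)$-formation on some $r$-subset $W$, then by maximality of $S_2$ an $(r,2)$-formation on some $r$-subset $W'$ must appear when $S_2$ is extended one position to the left; provided $W = W'$, we then exhibit three pairwise-disjoint permutations of $W$ in $S$ (one inside $S_1$, one crossing the split point, and one entirely inside $S_2$), contradicting $(r,3)$-formation-freeness.

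The main obstacle I anticipate is handling the case $W \neq W'$, in which the two $(r,2)$-formations straddle different $r$-subsets and do not combine directly into an $(r,3)$-formation. Resolving this case will likely require a more refined choice of split point — for instance, choosing it as the position where the second permutation of a particular ``first-to-double-up'' $r$-set is completed — or a finer decomposition that tracks, for each $r$-subset $W$, the span of its at-most-two allowed permutations in $S$ and charges the relevant prefix length against the symbols of $W$. Either refinement is the main technical hurdle; once it is in place, the budget of $r$ per symbol per $(r,2)$-free part gives the advertised $2rn$ bound.
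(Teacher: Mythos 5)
Your proof of $\F_{r,2}(n)\le rn$ is correct: it is an inductive, symbol-elimination rephrasing of the same counting idea the paper uses (the paper instead cuts $S$ into blocks of length $r$ and observes that a block none of whose symbols is a first occurrence yields an $(r,2)$-formation, so there are at most $n$ blocks). The second half, however, has a genuine gap, and it is not just the technical hurdle you flag: the decomposition you are after does not exist. An $(r,3)$-formation-free sequence need not admit \emph{any} split point into two contiguous $(r,2)$-formation-free pieces. Take $r=2$ and $S = x\,y\,x\,y\,z\,w\,z\,w$ on $n=4$ symbols: it is $2$-sparse, and it is $(2,3)$-formation-free because every symbol occurs only twice while a $(2,3)$-formation needs each of its two symbols three times; yet for every split $S=S_1S_2$, either $S_1$ contains $xyxy$ or $S_2$ contains $zwzw$, so one side always contains a $(2,2)$-formation. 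Hence no refinement of the choice of split point (maximal suffix, ``first-to-double-up'' set, or otherwise) can rescue the plan of bounding $|S|$ by $2\,\F_{r,2}(n)$ via a two-part partition, and the $W\neq W'$ case you left open is symptomatic of this structural obstruction rather than a fixable detail.

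What works instead is a global charging argument, which is how the paper proves both bounds at once: partition $S$ from left to right into blocks of length $r$ (distinct symbols within a block, by $r$-sparsity). If some block contains neither the first nor the last occurrence in $S$ of any of its $r$ symbols, then each of those symbols occurs both before and after the block; picking one earlier and one later occurrence of each gives a permutation of the block's symbol set before it and another after it, which together with the block itself is an $(r,3)$-formation. So every block contains a first or a last occurrence, there are at most $2n$ blocks, and $|S|\le 2rn$. (Your inductive first half is the $s=2$ instance of exactly this charging; if you want to keep the inductive flavor for $s=3$ you must charge blocks to first \emph{or} last occurrences globally, not split the sequence in two.)
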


\begin{proof}
Let $S$ be an $r$-sparse sequence on $n$ distinct symbols. Again
partition $S$ from left to right into blocks of length $r$ (the last
block might be shorter).

If $S$ contains no $(r,2)$-formation then every block must contain
the first occurrence of a symbol, and if $S$ contains no
$(r,3)$-formation, then every block must contain the first \emph{or}
last occurrence of a symbol. Thus, there are at most $n$ blocks in
the first case, and at most $2n$ blocks in the second case.
\end{proof}

\begin{lemma}[Klazar~\cite{klazar92}]\label{lemma_r_sparsify}
Let $S = S_1 S_2 \cdots S_m$ be a sequence which is a concatenation
of $m$ blocks, where each block $S_i$ contains only distinct
symbols. Then $S$ can be made $r$-sparse by deleting at most
$(r-1)(m-1)$ symbols.
\end{lemma}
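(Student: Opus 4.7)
The plan is to construct the desired $r$-sparse subsequence by a greedy left-to-right deletion procedure, and then to bound the number of deletions by analyzing each block separately.

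Specifically, scan $S = s_1 s_2 \cdots s_{|S|}$ from left to right, maintaining a sliding ``window'' of the last $r-1$ kept symbols. For each symbol $s_j$ encountered, if $s_j$ equals some symbol in the current window, delete $s_j$ (the window is unchanged); otherwise, keep $s_j$ (it enters the window, and the oldest window symbol is evicted). By construction the output sequence is $r$-sparse, since any two equal kept symbols are more than $r-1$ apart.

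The main task is to show that at most $r-1$ symbols are deleted from each block $S_i$ with $i \ge 2$, and zero from $S_1$. Fix a block $S_i$, $i \ge 2$, and call a symbol \emph{old} if it lies in the window at the moment we begin processing $S_i$; thus old symbols are contributed by $S_1, \ldots, S_{i-1}$, and there are at most $r-1$ of them. I will argue: (a) every symbol of $S_i$ that gets deleted must match an old symbol still in the window, and (b) each old symbol can be matched by at most one deletion within $S_i$. Claim~(a) uses the key hypothesis that $S_i$ has pairwise distinct symbols: a deleted symbol $s_j$ of $S_i$ matches some window entry, and this entry cannot itself be a previously kept symbol of $S_i$ (which would duplicate a symbol within $S_i$), so it must be old. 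Claim~(b) follows similarly: if an old symbol $x$ is matched by a deletion $s_j = x$, then any subsequent symbol $s_{j'}$ of $S_i$ satisfies $s_{j'} \ne s_j = x$ by distinctness, so $x$ cannot be matched again before it is evicted from the window. Combining (a) and (b) gives at most $r-1$ deletions per block. For $S_1$ the window starts empty and contains only symbols of $S_1$ throughout its processing, so distinctness yields zero deletions.

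Summing over blocks, the total number of deletions is at most $(r-1)(m-1)$, as required. No single step is really hard; the only place where care is needed is the window argument in claims~(a) and~(b), where one must be careful to use the distinctness within a single block rather than within the whole sequence, and to track that the ``old'' portion of the window can only shrink as processing of $S_i$ proceeds.
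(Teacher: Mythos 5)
Your proof is correct and follows essentially the same route as the paper: the same greedy left-to-right scan that keeps a symbol only if it differs from the last $r-1$ kept symbols, together with the observation that distinctness within each block limits the deletions per block to $r-1$ (and to zero in $S_1$). Your window argument in claims~(a) and~(b) simply spells out the per-block bound that the paper asserts without detail, and it is sound.
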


\begin{proof}
Build an $r$-sparse subsequence $S'$ of $S$ in a greedy fashion, by
scanning $S$ from left to right and adding a symbol from $S$ to $S'$
only if it does not equal any of the last $r-1$ symbols currently in
$S'$. In this way, we will skip at most $r-1$ symbols of each block
$S_i$, except for the first block $S_1$, which we will take
entirely.
\end{proof}

Next, we make a definition analogous to Definition~\ref{def_psi}:

\begin{definition}\label{def_psif}
Given integers $r$, $s$, $m$, and $n$, we denote by
$\psif_{r,s}(m,n)$ the length of the longest $r$-sparse,
$(r,s)$-formation-free sequence on $n$ distinct symbols that can be
partitioned into $m$ or fewer blocks, each block containing only
distinct symbols.
\end{definition}

\begin{remark}
The reader need not be intimidated (more than necessary) by the
double subscript $r,s$ in $\psif_{r,s}(m,n)$. We are never going to
use induction on $r$, only on $s$. Thus, $r$ can be assumed to be
fixed throughout our analysis.
\end{remark}

The following lemma (analogous to Lemma~\ref{lemma_lambda_to_psi})
relates $\F_{r,s}(n)$ to $\psif_{r,s}(m,n)$.

\begin{lemma}\label{lemma_F_to_psif}
Given fixed integers $r$ and $s$, let $\varphi_{r,s-2}(n)$ be a
nondecreasing function of $n$ such that $\F_{r,s-2}(n) \le n
\varphi_{r,s-2}(n)$ for all $n$. Then,
\begin{equation*}
\F_{r,s}(n) \le 2n + \varphi_{r,s-2}(n) \bigl( 2(r-1)n +
\psif_{r,s}(2n,n) \bigr).
\end{equation*}
\end{lemma}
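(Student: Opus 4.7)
The plan is to mimic the proof of Lemma~\ref{lemma_lambda_to_psi} almost verbatim, with the alternation-based witness argument replaced by a formation-based one. Let $S$ be an $r$-sparse, $(r,s)$-formation-free sequence on $n$ symbols of maximum length $\F_{r,s}(n)$. I would greedily partition $S$ from left to right into blocks $S_1, S_2, \ldots, S_m$, where each $S_i$ is extended as far as possible while remaining $(r,s-2)$-formation-free; set $n_i = \|S_i\|$. Each $S_i$ inherits $r$-sparseness from $S$, so $|S_i| \le \F_{r,s-2}(n_i) \le n_i \varphi_{r,s-2}(n)$ by monotonicity of $\varphi_{r,s-2}$, and therefore $|S| \le \varphi_{r,s-2}(n) \sum_{i=1}^m n_i$.

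The first main step is to prove $m \le 2n$. For each intermediate index $i$ with $1 \le i \le m-1$, maximality forces $S_i$ together with the first symbol $a_i$ of $S_{i+1}$ to contain an $(r,s-2)$-formation $F_i$ on some set of $r$ symbols. If every symbol of $F_i$ appeared somewhere in $S_1 \cdots S_{i-1}$ and also somewhere in $S$ strictly after the position of $a_i$, one could prepend and append one permutation to $F_i$, exhibiting an $(r,s)$-formation in $S$ and contradicting the hypothesis. Hence some symbol of $F_i$ has either its first or its last occurrence lying in $S_i \cup \{a_i\}$, supplying a (symbol, first/last) witness pair for $S_i$; the block $S_m$ trivially contains the last-occurrence witness of the final symbol of $S$. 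With some care the witnesses can be chosen so that distinct blocks receive distinct pairs (when the ``natural'' witness falls at position $a_i$, the fact that $F_i$ requires $s-2 \ge 2$ copies of each of its symbols for $s \ge 4$ forces $a_i$ to reappear inside $S_i$, allowing the witness to be relocated there), and since each symbol contributes at most two such pairs, $m \le 2n$.

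The second main step bounds $\sum_i n_i$ in terms of $\psif_{r,s}$. Build $S'$ by taking, from each block $S_i$, the leftmost occurrence of each of its $n_i$ distinct symbols; then $|S'| = \sum_i n_i$, the sequence $S'$ is decomposed into $m$ blocks of distinct symbols, and $S'$ inherits $(r,s)$-formation-freeness from $S$, although it may contain equal adjacent symbols across block boundaries. Invoking Lemma~\ref{lemma_r_sparsify}, at most $(r-1)(m-1)$ deletions produce an $r$-sparse subsequence $S''$ with the same $m$-block decomposition, so $\psif_{r,s}(m,n) \ge |S''| \ge \sum_i n_i - (r-1)(m-1)$. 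Inserting $m \le 2n$ gives $\sum_i n_i \le \psif_{r,s}(2n, n) + 2(r-1)n$, and combining with the earlier bound on $|S|$ yields the claimed inequality; the free additive $2n$ outside the parentheses comfortably absorbs the loose edge-case accounting in the witness count (the small-$s$ exceptions and the treatment of $S_1, S_m$). The main obstacle is precisely this witness-counting step, since blocks $S_i$ and $S_{i+1}$ can in principle both attempt to claim the same (symbol, first/last) pair whose extreme occurrence sits at position $a_i$; the rest of the argument is essentially bookkeeping, structurally identical to the Davenport--Schinzel analogue.
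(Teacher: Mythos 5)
Your proof has the same skeleton as the paper's, but differs in one structurally significant way: the paper does not produce a partition of $S$ into abutting blocks $S_1 S_2 \cdots S_m$. Instead it isolates a single \emph{separator symbol} between consecutive pieces, writing $S = S_1\, x_1\, S_2\, x_2 \cdots x_{m-1}\, S_m\,[x_m]$, where each $S_i$ is the longest $(r,s-2)$-formation-free segment beginning right after $x_{i-1}$, and $x_i$ is the next symbol. The gain is that the pieces $S_i x_i$ are pairwise disjoint contiguous segments of $S$, so each one trivially contains the first or last occurrence of some symbol (else prepend and append a permutation around the $(r,s-2)$-formation in $S_i x_i$), and $m \le 2n$ follows with no worry about two pieces claiming the same $(\text{symbol},\text{first/last})$ pair. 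The price is the harmless extra additive term $m \le 2n$ in $|S| = \sum_i |S_i| + (m-1 \text{ or } m)$, which is exactly the $2n$ standing outside the parentheses in the statement.

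Your variant, with abutting blocks and the lookahead symbol $a_i$, creates exactly the overlap problem you identify, and here your fix is not quite right. If the witness for $S_i$ is the \emph{last} occurrence of some $y \in F_i$ and that occurrence is at $a_i$, you cannot ``relocate'' it into $S_i$: the earlier copies of $y$ inside $S_i$ are not last occurrences, so they are not witness events at all. What does save the argument is a different use of the same $s-2 \ge 2$ observation: since $y$'s last occurrence is at $a_i$ (the first symbol of $S_{i+1}$), $y$ occurs exactly once in $S_{i+1} a_{i+1}$, so $y$ cannot belong to $F_{i+1}$, and hence $S_{i+1}$ cannot claim the pair $(y,\text{last})$. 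One also needs to handle the degenerate case where $S_m$ consists of a single symbol, which can collide with the witness for $S_{m-1}$; this is a minor edge case but it does mean your count as written yields $m \le 2n+1$ rather than $m \le 2n$, and the $+2n$ term in the lemma (which you are not otherwise generating, since you have no separators) does not cleanly absorb an extra unit of $m$ inside $\psif_{r,s}(m,n)$. The second half of your argument (building $S'$ by first occurrences per piece, $r$-sparsifying via Lemma~\ref{lemma_r_sparsify}, and bounding $\sum n_i$ by $\psif_{r,s}(2n,n) + 2(r-1)n$) matches the paper exactly.
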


(This constitutes a minor improvement over Klazar~\cite{klazar92},
since Klazar related $\F_{r,s}(n)$ to $\varphi_{r,s-1}(n)$.)

\begin{proof}
Let $S$ be a maximum-length $(r,s)$-formation-free sequence on $n$
symbols. Thus, $|S| = \F_{r,s}(n)$. Partition $S$ from left to right
into subsequences as follows:

Let $S_1$ be the longest prefix of $S$ that is
$(r,s-2)$-formation-free. Let $x_1$ be the symbol following $S_1$ in
$S$. Thus $S_1 x_1$ contains an $(r,s-2)$-formation. Let $S_2$ be
the longest segment of $S$ after $x_1$ which is
$(r,s-2)$-formation-free, let $x_2$ be the symbol following $S_2$ in
$S$, and so on.

We obtain a partition $S = S_1 x_1 S_2 x_2 \ldots x_{m-1} S_m
[x_m]$, where each $S_i$ is a subsequence and each $x_i$ is a symbol
($x_m$ might or might not be present).

Each subsequence $S_i x_i$ must contain either the first or the last
occurrence of some symbol, for otherwise $S$ would contain an
$(r,s)$-formation. Thus, $m\le 2n$.

Let $n_i = \|S_i\|$. Then
\begin{align*}
\F_{r,s}(n) = |S| \le m + \sum_{i=1}^m |S_i| &\le m + \sum_{i=1}^m
\F_{r,s-2}(n_i) \\
&\le 2n + \sum_{i=1}^m n_i \varphi_{r,s-2}(n_i) \le 2n +
\varphi_{r,s-2}(n) \sum_{i=1}^m n_i.
\end{align*}

So we just have to bound $\sum n_i$. Construct a subsequence $S'$ of
$S$ by taking, for each subsequence $S_i$ in the above partition of
$S$, just the first occurrence of each symbol in $S_i$. Thus, $|S'|
= \sum n_i$. Next, using Lemma~\ref{lemma_r_sparsify},
``$r$-sparsify" $S'$ and obtain a sequence $S''$ with $|S''| \ge
|S'| - (r-1)(m-1)$.

Since $S''$ is a subsequence of $S$, it contains no $(r,
s)$-formation. Further, $S''$ is $r$-sparse and partitionable into
$m$ blocks of distinct symbols. Therefore, $|S''| \le
\psif_{r,s}(m,n)$, and so
\begin{equation*}
\sum_{i=1}^m n_i = |S'| \le (r-1)m + |S''| \le 2(r-1)n +
\psif_{r,s}(2n,n).
\end{equation*}
The claim follows.
\end{proof}

We now apply our ``almost-DS" technique to formation-free sequences.
For this, we introduce and analyze ``almost-formation-free"
sequences. The analysis closely parallels the analysis of almost-DS
sequences.

\subsection{Almost-formation-free sequences}

If $S$ is a sequence, we say that $S$ is an \emph{$\AFF_{r,s,k}(m)$}
sequence if $S$ contains no $(r,s)$-formation, can be partitioned
into $m$ of fewer blocks, each composed of distinct symbols, and
each symbol appears at least $k$ times (in $k$ different blocks).

Note that we do not require $r$-sparsity; this is the reason for
calling $S$ ``almost" formation-free.

Let $\Nf_{r,s,k}(m)$ be the maximum possible number of distinct
symbols in an $\AFF_{r,s,k}(m)$ sequence.

We first show the connection between AFF sequences and
$\psif_{r,s}(m,n)$, and then we derive upper bounds for
$\Nf_{r,s,k}(m)$.

\begin{lemma}\label{lemma_psif_to_FFF}
For all $s\ge 2$ and $k$ we have $\psif_{r,s}(m,n) \le k
(\Nf_{r,s,k}(m) + n)$.
\end{lemma}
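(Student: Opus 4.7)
The plan is to mimic the proof of Lemma~\ref{lemma_psi_to_ADS}. Starting from a maximum-length $r$-sparse, $(r,s)$-formation-free sequence $S$ on $n$ symbols, partitioned into $m$ blocks of distinct symbols (so $|S| = \psif_{r,s}(m,n)$), I would build an almost-formation-free sequence $S'$ by the same ``cluster-and-relabel'' operation: for each original symbol $a$, group its occurrences in $S$ from left to right into clusters of $k$ consecutive occurrences, discarding any trailing group of fewer than $k$; then relabel the occurrences in the $i$-th cluster of $a$ by a fresh symbol $a_i$.

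I would then dispatch the easy bookkeeping: the construction deletes at most $(k-1)n \le kn$ symbols, so $|S'| \ge |S| - kn$; the partition of $S$ into $m$ blocks of distinct symbols descends to $S'$, since relabeling can never create collisions within a block; and every renamed symbol appears exactly $k$ times by construction. Once we know that $S'$ contains no $(r,s)$-formation, it qualifies as an $\AFF_{r,s,k}(m)$-sequence, so $|S'| = k\|S'\| \le k\Nf_{r,s,k}(m)$, and hence $|S| \le k\Nf_{r,s,k}(m) + kn = k(\Nf_{r,s,k}(m)+n)$, as required.

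The main obstacle is verifying formation-freeness of $S'$---the analogue of the alternation-freeness step in Lemma~\ref{lemma_psi_to_ADS}, but harder, because formations involve $r$ distinct symbols rather than just two. Suppose some subsequence of $S'$ on distinct renamed symbols $v_1,\ldots,v_r$ matches an $(r,s)$-formation $\pi_1\pi_2\cdots\pi_s$. If the underlying originals of $v_1,\ldots,v_r$ are $r$ distinct symbols of $S$, then undoing the relabeling exhibits the same formation as a subsequence of $S$, contradicting the hypothesis on $S$. Otherwise two of the $v_i$, say $v_i = a_p$ and $v_j = a_q$ with $p<q$, descend to the same original $a$; by construction, every $a_p$-occurrence in $S'$ precedes every $a_q$-occurrence. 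Since $s\ge 2$, both $\pi_1$ and $\pi_2$ contain $v_i$ and $v_j$, and because the matching uses $\pi_1$'s positions entirely before $\pi_2$'s positions, the $v_j$-position used for $\pi_1$ lies before the $v_i$-position used for $\pi_2$---that is, an $a_q$-occurrence in $S'$ lies before an $a_p$-occurrence, contradicting the cluster order. This rules out the mixed-renaming case and completes the argument.
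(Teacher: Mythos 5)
Your proposal is correct and follows essentially the same route as the paper: the same cluster-and-relabel transformation, the same accounting of at most $kn$ deleted symbols, and the same observation (phrased in the paper via ``disjoint'' symbols) that two renamed copies $a_p$, $a_q$ of one original symbol cannot both lie in an $(r,s)$-formation when $s\ge 2$, so any formation in $S'$ pulls back to one in $S$. Your mixed-renaming case is just an expanded version of that observation, so there is nothing to add.
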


\begin{proof}
Let $S$ be a maximum-length $r$-sparse, $(r,s)$-formation-free
sequence on $n$ distinct symbols, partitionable into $m$ blocks.
Thus, $|S| = \psif_{r,s}(m,n)$. Let $k \ge 1$ be the specified
parameter.

Transform $S$ into another sequence $S'$ in which each symbol
appears exactly $k$ times as follows. For each symbol $a$, group the
occurrences of $a$ from left to right into ``clusters" of size $k$,
discarding the $<k$ occurrences left at the end. Replace each $a$ in
the $i$-th cluster by a new symbol $a_i$.

If $s\ge 2$, then this does not introduce any $(r,s)$-formations.
(Proof: Call two symbols $a$ and $b$ \emph{disjoint} if every
occurrence of $a$ lies before every occurrence of $b$ or vice-versa.
Note that if $a$ and $b$ are disjoint, they cannot belong to the
same $(r,s)$-formation for $s\ge 2$. Thus, if $S'$ contains an
$(r,s)$-formation, that formation was already present in $S$.)

We deleted at most $kn$ symbols from $S$, and the result $S'$ is an
$\AFF_{r,s,k}(m)$ sequence ($S'$ is not necessarily $r$-sparse, but
this is fine for an AFF sequence). Therefore, $S'$ contains at most
$\Nf_{r,s,k}(m)$ symbols, each one appearing exactly $k$ times, so
it has length at most $k \cdot \Nf_{r,s,k}(m)$. The claim follows.
\end{proof}

\begin{lemma}\label{lemma_AFF_s_2_k_2}
For every $r\ge 2$ we have $\Nf_{r,2,2}(m) = (r-1)(m-1)$.
\end{lemma}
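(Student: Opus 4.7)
The statement is a sharp equality, so I would prove the upper bound $\Nf_{r,2,2}(m) \le (r-1)(m-1)$ and then exhibit an extremal construction meeting it.

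For the upper bound I would use a double counting over the $m-1$ inter-block boundaries. Call a symbol $a$ a \emph{crosser of boundary} $B \in \{1,\ldots,m-1\}$ if $a$ has at least one occurrence in a block of index $\le B$ and at least one in a block of index $\ge B+1$. The key observation is that if $r$ distinct symbols all cross the same boundary $B$, then selecting one left-occurrence and one right-occurrence of each, and sorting each half by position, produces a subsequence of length $2r$ in which the first $r$ symbols and the last $r$ symbols each form a permutation of the chosen $r$-set; that is exactly an $(r,2)$-formation. Hence every boundary is crossed by at most $r-1$ symbols. Conversely, every symbol appears in at least two different blocks, hence crosses at least one boundary. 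Summing the incidence count two ways gives $n \le (r-1)(m-1)$.

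For the lower bound I would give the following explicit construction. For each $B \in \{1,\ldots,m-1\}$ and each $j \in \{1,\ldots,r-1\}$, introduce a fresh symbol $a_{B,j}$ appearing exactly once in block $B$ and once in block $B+1$. Within each block $B$, list first the ``arriving'' symbols $a_{B-1,1},\ldots,a_{B-1,r-1}$ and then the ``departing'' symbols $a_{B,1},\ldots,a_{B,r-1}$ (block $1$ has only departing symbols and block $m$ only arriving symbols). This gives a legal partition into $m$ blocks of distinct symbols, uses exactly $(r-1)(m-1)$ distinct symbols, and each symbol appears in two different blocks.

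The main delicate point is verifying that this sequence contains no $(r,2)$-formation. Since each boundary index $B$ contributes only $r-1$ symbols, any $r$ chosen symbols must include two $a_{B_1,j_1}, a_{B_2,j_2}$ with $B_1<B_2$. A candidate split position $p$ for the formation would have to satisfy $x_i \le p < y_i$ for both, where $x_i, y_i$ denote the positions of the $i$-th chosen symbol in its two blocks; this reduces to $x_2 \le p < y_1$. If $B_2 > B_1+1$ then $y_1$ lies in block $B_1+1$, which entirely precedes block $B_2$ containing $x_2$, so $y_1 < x_2$. If $B_2 = B_1+1$, both $y_1$ and $x_2$ lie in block $B_2$, but the arrivals-before-departures rule puts $a_{B_1,j_1}$ (an arrival here) before $a_{B_2,j_2}$ (a departure here), again forcing $y_1 < x_2$. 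In either case no valid split exists, so the construction is $(r,2)$-formation-free. I expect this no-formation verification — hinging on the within-block ordering — to be the one nontrivial step; the upper-bound double count is the clean half.
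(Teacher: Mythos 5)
Your proof is correct and matches the paper's argument essentially step for step: the upper bound is the same separator/boundary pigeonhole (at most $r-1$ symbols may straddle any one of the $m-1$ gaps, else an $(r,2)$-formation arises), and your construction is exactly the paper's, namely $r-1$ fresh symbols per boundary placed at the end of block $i$ and the start of block $i+1$. The only difference is that you spell out the formation-freeness verification, which the paper leaves implicit.
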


\begin{proof}
For the upper bound, consider $m-1$ ``separators" between the $m$
blocks. We say that a symbol $a$ ``contributes" to all the
separators between the first two occurrences of $a$. Thus each
symbol contributes to at least one separator. If there were $1 +
(r-1)(m-1)$ symbols, then there would exist a separator with at
least $r$ contributions, which would lead to the existence of an
$(r,2)$-formation.

For the lower bound, create $m$ blocks, and create $n = (r-1)(m-1)$
different symbols partitioned into $m-1$ sets $A_1, \ldots, A_{m-1}$
of $r-1$ symbols each. Make two copies of each $A_i$, and put one
copy at the end of block $i$ and one copy at the beginning of block
$i+1$. We get a sequence with the desired properties.
\end{proof}

\begin{lemma}\label{lemma_FFF_s_times}
For every fixed $r\ge 2$ and $s\ge 3$ we have $\Nf_{r,s,s}(m) \le
(r-1){ m - 2 \choose s-2} = O(m^{s-2})$.
\end{lemma}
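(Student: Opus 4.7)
The strategy is to generalize the pigeonhole argument from Lemma~\ref{lemma_ns_s1} (which sought two symbols forming a forbidden $(s+2)$-alternation) to one that seeks $r$ symbols forming a forbidden $(r,s)$-formation. I would assume for contradiction that $S$ is an $\AFF_{r,s,s}(m)$-sequence with $n > (r-1){m-2 \choose s-2}$ distinct symbols. For each symbol $a$, let $f(a)$ and $l(a)$ be the indices of the first and last blocks of $S$ that contain $a$. Since $a$ appears in at least $s$ distinct blocks, there are at least $s-2$ further blocks, strictly between $f(a)$ and $l(a)$, that contain $a$; all such blocks lie in $\{2,\ldots,m-1\}$. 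For each $a$, I would fix an arbitrary choice $I(a)$ of exactly $s-2$ such ``internal'' blocks.

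Because $\{2,\ldots,m-1\}$ has only ${m-2 \choose s-2}$ subsets of size $s-2$, the inequality $n > (r-1){m-2 \choose s-2}$ together with pigeonhole yields $r$ distinct symbols $a_1,\ldots,a_r$ whose chosen subsets coincide: $I(a_1)=\cdots=I(a_r)=\{B_1<B_2<\cdots<B_{s-2}\}$. I would then exhibit a subsequence of $S$ that is an $(r,s)$-formation on the alphabet $\{a_1,\ldots,a_r\}$, giving the desired contradiction.

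Such a subsequence is assembled from $s$ pieces. Each middle block $B_j$ contains all $r$ symbols $a_1,\ldots,a_r$, and since blocks consist of distinct symbols, the $r$ occurrences inside $B_j$, read in order, form a permutation of $\{a_1,\ldots,a_r\}$. Before $B_1$, every $a_i$ has at least one occurrence (at block $f(a_i)<B_1$); selecting one such occurrence per symbol and listing them in position order yields another permutation of $\{a_1,\ldots,a_r\}$. An analogous argument using $l(a_i)>B_{s-2}$ supplies a permutation from the portion of $S$ after $B_{s-2}$. Concatenating these $1+(s-2)+1=s$ permutations in left-to-right order produces the required $(r,s)$-formation as a subsequence of $S$.

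No step looks genuinely difficult; the only mild subtlety is verifying that the leading and trailing permutations, which are stitched from occurrences across many blocks rather than from a single block, really do form permutations---which is immediate because each $a_i$ contributes exactly one selected occurrence, so the resulting subsequences contain the $r$ symbols in some order. The main conceptual point is that the threshold $(r-1){m-2 \choose s-2}$ is precisely calibrated so that pigeonhole produces $r$ symbols (rather than the $2$ of Lemma~\ref{lemma_ns_s1}) sharing $s-2$ internal blocks, upgrading the alternation obstruction to a formation obstruction.
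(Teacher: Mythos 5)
Your proof is correct and takes essentially the same approach as the paper: both apply pigeonhole to the (at least) $s-2$ ``internal'' blocks between each symbol's first and last appearance to produce $r$ symbols sharing the same $s-2$ internal blocks, and then assemble an $(r,s)$-formation from the shared internal blocks plus one leading and one trailing permutation. Your write-up is simply a bit more explicit than the paper's about the assembly step, which the paper leaves as ``this leads to the existence of an $(r,s)$-formation.''
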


\begin{proof}
Suppose for a contradiction that there is an $\AFF_{r,s,s}(m)$
sequence with $1 + (r-1){ m-2 \choose s-2}$ distinct symbols.
Consider the $s-2$ middle occurrences of each symbol. They fall on
$s-2$ out of $m-2$ different blocks. Therefore, by the pigeonhole
principle, there exist $r$ symbols whose $s-2$ middle occurrences
all fall in the same $s-2$ blocks. This leads to the existence of an
$(r,s)$-formation in the given sequence. Contradiction.
\end{proof}

\begin{recurrence}\label{rec_AFF_log}
We have
\begin{equation*}
\Nf_{r,s,2k-1}(2m) \le 2\Nf_{r,s,2k-1}(m) + 2\Nf_{r,s-1,k}(m).
\end{equation*}
\end{recurrence}

The proof is exactly parallel to that of Recurrence~\ref{rec_log}.

\begin{corollary}
For fixed $r\ge 2$ and $s\ge 3$, if we let $k = 2^{s-2} + 1$, then
\begin{equation*}
\Nf_{r,s,k}(m) = O{\bigl( m (\log m)^{s-3} \bigr)}
\end{equation*}
(where the constant implicit in the $O$ notation might depend on $r$
and $s$).
\end{corollary}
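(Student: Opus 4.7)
The plan is to mirror the proof of Corollary \ref{cor_log}, using Recurrence \ref{rec_AFF_log} as the engine and proceeding by induction on $s$. Writing $k_s = 2^{s-2}+1$, note that $2k_s - 1 = 2^{s-1}+1 = k_{s+1}$, so Recurrence \ref{rec_AFF_log} applied with $k = k_s$ gives exactly
\begin{equation*}
\Nf_{r,s+1,k_{s+1}}(2m) \le 2\Nf_{r,s+1,k_{s+1}}(m) + 2\Nf_{r,s,k_s}(m).
\end{equation*}
This is the clean recursive structure that makes the induction go through; the subscript $k$ of the AFF quantity on the left (at order $s+1$) and on the right (at order $s$) sit at precisely the pair of values for which the inductive hypothesis applies.

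For the base case I would take $s=3$, so $k = 2^{s-2}+1 = 3$, and apply Lemma \ref{lemma_FFF_s_times} directly: $\Nf_{r,3,3}(m) \le (r-1)\binom{m-2}{1} = O(m)$, which matches $O\bigl(m(\log m)^{s-3}\bigr) = O(m)$ at $s=3$. For the inductive step, assume the bound holds at order $s$, so $\Nf_{r,s,k_s}(m) \le C\, m(\log m)^{s-3}$ for some constant $C = C(r,s)$. Setting $f(m) = \Nf_{r,s+1,k_{s+1}}(m)$, the recurrence above becomes
\begin{equation*}
f(2m) \le 2f(m) + 2C\, m(\log m)^{s-3}.
\end{equation*}

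Dividing by $2m$ and letting $g(m) = f(m)/m$, we get $g(2m) \le g(m) + C(\log m)^{s-3}$. Iterating this $\log_2 m$ times, starting from some constant $m_0$ (on which $g$ is trivially bounded, since for bounded $m$ the quantity $\Nf_{r,s+1,k_{s+1}}(m)$ is a constant depending only on $r$ and $s$), yields
\begin{equation*}
g(m) \le g(m_0) + C\sum_{j=\log_2 m_0}^{\lceil \log_2 m \rceil} j^{s-3} = O\bigl((\log m)^{s-2}\bigr),
\end{equation*}
so $f(m) = O\bigl(m(\log m)^{s-2}\bigr) = O\bigl(m(\log m)^{(s+1)-3}\bigr)$, completing the induction. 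No step should present any real obstacle; the only thing to check carefully is that $2k_s - 1 = k_{s+1}$, so that Recurrence \ref{rec_AFF_log} produces the AFF quantities at exactly the indices appearing in the statement and in the inductive hypothesis.
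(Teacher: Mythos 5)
Your proposal is correct and follows essentially the same route as the paper: the paper proves this corollary exactly as it proves Corollary~\ref{cor_log}, by induction on $s$ via the halving recurrence (Recurrence~\ref{rec_AFF_log}), using Lemma~\ref{lemma_FFF_s_times} (here at $s=3$, $k=3$) as the base case, and your index bookkeeping $2k_s-1=k_{s+1}$ and the solution of the resulting divide-and-conquer recurrence are both right.
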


\begin{recurrence}\label{rec_AFF_gral}
Let $r\ge 2$ and $s\ge 3$ be fixed. Let $k_1$, $k_2$, $k_3$, and $k$
be integers satisfying $k = k_2 k_3 + 2k_1 - 3k_2 - k_3 + 2$. Then,
\begin{equation*}
\Nf_{r,s,k}(m) \le \left(1+{m\over t}\right) \Bigl( \Nf_{r,s,k}(t) +
2 \Nf_{r,s-1,k_1}(t) + \Nf_{r,s-2,k_2}(t) \Bigr) +
\Nf_{r,s,k_3}{\left(1+ { m \over t } \right)},
\end{equation*}
where $t$ is a free parameter.
\end{recurrence}

The proof exactly parallels that of Recurrence~\ref{rec_nskm}. The
corollary is almost the same as Corollary~\ref{cor_nskm}; there is
just a shift of $1$ in the index $s$:

\begin{corollary}\label{cor_Nrsk_upper}
Let $R_s(d)$ be the sequences defined in Corollary~\ref{cor_nskm}.
Then, for every $s\ge 3$ and $d\ge 2$, if $k \ge R_{s-1}(d)$ then
\begin{equation*}
\Nf_{r,s,k}(m) \le c m \alpha_{d}(m)^{s-3} \qquad \text{for all
$m\ge k$}.
\end{equation*}
Here, $c = c(r,s)$ is a constant that depends only on $r$ and $s$.
\end{corollary}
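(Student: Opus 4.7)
The plan is to mirror the proof of Corollary~\ref{cor_nskm} almost verbatim, with every index $s$ shifted down by one. I would use a double induction: outer on $s$ and inner on $d$ for each fixed $s$. The outer base case $s=3$ is supplied by Lemma~\ref{lemma_FFF_s_times}, which gives $\Nf_{r,3,3}(m) \le (r-1)(m-2) = O(m)$, matching the target $cm\alpha_d(m)^{0} = cm$ for any $k \ge R_2(d) = 3$. The inner base case $d=2$ is handled by the polylogarithmic corollary following Recurrence~\ref{rec_AFF_log}, which yields $\Nf_{r,s,k}(m) = O{\bigl(m(\log m)^{s-3}\bigr)}$ for $k = 2^{s-2}+1 \ge R_{s-1}(2)$.

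For the inductive step, fix $s \ge 4$, $d \ge 3$. I would work with a smoothed variant $\widehat\alpha_d$ of the inverse-Ackermann hierarchy, defined so that $\widehat\alpha_d(x) = 1 + \widehat\alpha_d{\bigl(C\widehat\alpha_{d-1}(x)^{s-3}\bigr)}$ for $x$ above a suitable threshold $m_0 = m_0(s)$; as in the proof of Corollary~\ref{cor_nskm}, this guarantees the clean telescoping identity $\widehat\alpha_d(t) = \widehat\alpha_d(m)-1$ for the $t$ chosen below, and $|\widehat\alpha_d - \alpha_d|$ is bounded by a constant depending only on $s$ (Appendix~\ref{app_ack_like}). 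The inductive hypotheses then give $\Nf_{r,s-1,k}(m) \le c_2 m\widehat\alpha_d(m)^{s-4}$ for $k \ge R_{s-2}(d)$, $\Nf_{r,s-2,k}(m) \le c_3 m\widehat\alpha_d(m)^{s-5}$ for $k \ge R_{s-3}(d)$, and $\Nf_{r,s,k}(m) \le c_1 m\widehat\alpha_{d-1}(m)^{s-3}$ for $k \ge R_{s-1}(d-1)$.

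Next I would apply Recurrence~\ref{rec_AFF_gral} with
\begin{equation*}
k_1 = R_{s-2}(d),\quad k_2 = R_{s-3}(d),\quad k_3 = R_{s-1}(d-1),\quad k = R_{s-1}(d),\quad t = C\widehat\alpha_{d-1}(m)^{s-3}.
\end{equation*}
By the defining recurrence of $R_s$ in Corollary~\ref{cor_nskm}, the identity $k = k_2 k_3 + 2k_1 - 3k_2 - k_3 + 2$ holds exactly, so these choices are admissible. Substituting the three hypotheses bounds $2\Nf_{r,s-1,k_1}(t) + \Nf_{r,s-2,k_2}(t) = O{\bigl(t\widehat\alpha_d(m)^{s-4}\bigr)}$ and $\Nf_{r,s,k_3}(1+m/t) = O(m)$ by the choice of $t$. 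The stray term $\Nf_{r,s,k}(t)$ is bounded crudely via Lemma~\ref{lemma_FFF_s_times} by $(r-1)t^{s-2}$, which is absorbed into $c_1 m/6$ once $m_0$ is chosen large enough. Writing $\Nf_{r,s,k}(m) = mg(m)$ produces a recurrence $g(m) \le g(t) + c_1 \widehat\alpha_d(m)^{s-4}$, which telescopes using $(x-1)^{s-3} \le x^{s-3} - x^{s-4}$ to the required $g(m) \le c_1 \widehat\alpha_d(m)^{s-3}$.

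The main obstacle is not conceptual but bookkeeping: one must verify that the parameter identity $k = k_2 k_3 + 2k_1 - 3k_2 - k_3 + 2$ really yields $R_{s-1}(d)$ (not $R_s(d)$) under the index shift, and one must choose the constants $c_1, c_2, c_3$ and the threshold $m_0(s)$ simultaneously so that all the base cases, the polynomial error terms, and the inductive term absorb into a single clean inequality. Once that is done, the shift of one in the exponent $s-3$ versus the $s-2$ of Corollary~\ref{cor_nskm} emerges automatically, confirming the claim.
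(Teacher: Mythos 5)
Your proposal is correct and is essentially the argument the paper intends: the paper proves this corollary only by remarking that it parallels Corollary~\ref{cor_nskm} with the index $s$ shifted by one, and your parameter choices $k_1=R_{s-2}(d)$, $k_2=R_{s-3}(d)$, $k_3=R_{s-1}(d-1)$ do satisfy $k_2k_3+2k_1-3k_2-k_3+2=R_{s-1}(d)$, with the base cases ($s=3$ via Lemma~\ref{lemma_FFF_s_times}, $d=2$ via the corollary to Recurrence~\ref{rec_AFF_log}) matching the paper's setup. The one wrinkle is the case $s=4$ of the inductive step, where your stated hypothesis $\Nf_{r,s-2,k}(m)\le c_3\, m\,\widehat\alpha_d(m)^{s-5}$ would refer to $s-2=2$ (outside the corollary's range, and literally false with exponent $-1$); there one should instead invoke Lemma~\ref{lemma_AFF_s_2_k_2}, which gives $\Nf_{r,2,2}(t)=(r-1)(t-1)=O(t)$ and is all the telescoping needs, exactly as Corollary~\ref{cor_nskm} uses Lemma~\ref{lemma_ns_s1} at its bottom level.
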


Combining Corollary~\ref{cor_Nrsk_upper} with
Lemma~\ref{lemma_psif_to_FFF}, we obtain:

\begin{corollary}\label{cor_psif_alphak_bound}
Let $s\ge 4$. Then, for all $r$, $m$, and $n$ we have
\begin{equation*}
\psif_{r,s}(m,n) \le C_{r,s,d}\bigl( m\alpha_d(m)^{s-3} + n \bigr)
\qquad \text{for all $d$},
\end{equation*}
for some constants $C_{r,s,d}$ of the form
\begin{equation*}
C_{r,s,d} =
\begin{cases}
2^{(1/t!) d ^t \pm O{\left( d^{t-1} \right)}}, & \text{$s$ odd};\\
2^{(1/t!) d ^t \log_2 d \pm O{\left( d^t \right)}}, & \text{$s$
even};
\end{cases}
\end{equation*}
where $t = \lfloor (s-3) / 2 \rfloor$.
\end{corollary}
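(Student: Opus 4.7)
The plan is a direct reduction: combine Lemma~\ref{lemma_psif_to_FFF} with Corollary~\ref{cor_Nrsk_upper}, and then read off the constants from the asymptotic formula (\ref{eq_asymp_R}) for $R_s(d)$. Concretely, I would fix the parameter in Lemma~\ref{lemma_psif_to_FFF} to be $k := R_{s-1}(d)$. Then Corollary~\ref{cor_Nrsk_upper} gives
\begin{equation*}
\Nf_{r,s,k}(m) \le c\, m\, \alpha_d(m)^{s-3}
\end{equation*}
for a constant $c = c(r,s)$, valid for all $m \ge k$; for the (finitely many) smaller values $m < k$ one absorbs the trivial bound $\Nf_{r,s,k}(m) \le n$ into the constant. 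Plugging into Lemma~\ref{lemma_psif_to_FFF},
\begin{equation*}
\psif_{r,s}(m,n) \;\le\; R_{s-1}(d) \bigl( \Nf_{r,s,R_{s-1}(d)}(m) + n \bigr) \;\le\; C_{r,s,d}\bigl( m\,\alpha_d(m)^{s-3} + n \bigr),
\end{equation*}
where $C_{r,s,d} := \max(c,1)\cdot R_{s-1}(d)$.

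It then remains to show that $C_{r,s,d}$ matches the claimed shape. Since $r$ and $s$ are fixed, the factor $\max(c,1)$ is absorbed into the $\pm O(\cdot)$ term in the exponent, so the asymptotic growth of $C_{r,s,d}$ in $d$ is governed by $R_{s-1}(d)$. I apply (\ref{eq_asymp_R}) with index $s-1$, for which the relevant parameter is $\lfloor((s-1)-2)/2\rfloor = \lfloor(s-3)/2\rfloor = t$, exactly the $t$ appearing in the corollary. The parities also align: when the corollary's $s$ is odd, $s-1$ is even, and (\ref{eq_asymp_R}) gives $R_{s-1}(d) = 2^{(1/t!)d^t \pm O(d^{t-1})}$; when $s$ is even, $s-1$ is odd, and (\ref{eq_asymp_R}) gives $R_{s-1}(d) = 2^{(1/t!)d^t\log_2 d \pm O(d^t)}$. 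These are precisely the two cases stated in Corollary~\ref{cor_psif_alphak_bound}.

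There is no real obstacle here; this is a straightforward packaging of earlier results. The only minor points worth spelling out are (i) the parity/index shift between Corollary~\ref{cor_nskm} (which uses $\lfloor(s-2)/2\rfloor$) and Corollary~\ref{cor_psif_alphak_bound} (which uses $\lfloor(s-3)/2\rfloor$), explained by the shift $s \mapsto s-1$ in the exponent of $\Nf_{r,s,k}$ inherited from Lemma~\ref{lemma_psif_to_FFF}; and (ii) the condition $m \ge k$ in Corollary~\ref{cor_Nrsk_upper}, which is harmless since, for $m < R_{s-1}(d)$, the bound in the corollary is trivially satisfied by choosing $C_{r,s,d}$ slightly larger. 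The heavy lifting---the existence of the recurrences, the inverse-Ackermann calculus, and the asymptotics of $R_s(d)$---has already been carried out in Sections~\ref{sec_psi_new} and the appendix, so this corollary is essentially a one-line consequence once the bookkeeping is arranged.
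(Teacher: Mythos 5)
Your proposal is correct and is essentially the paper's own argument: the paper derives this corollary in a single line by combining Lemma~\ref{lemma_psif_to_FFF} with Corollary~\ref{cor_Nrsk_upper} and reading off the growth of $R_{s-1}(d)$ from~(\ref{eq_asymp_R}), with the index shift $s\mapsto s-1$ producing exactly the $t=\lfloor(s-3)/2\rfloor$ and parity swap you describe. One tiny imprecision worth fixing: for $m<k$ the relevant fact is not that $\Nf_{r,s,k}(m)\le n$ (the quantity does not depend on $n$) but that $\Nf_{r,s,k}(m)=0$, since each symbol must occur in $k$ distinct blocks; Lemma~\ref{lemma_psif_to_FFF} then already gives $\psif_{r,s}(m,n)\le kn$ with no adjustment of the constant.
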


We can finally prove our upper bounds for $\F_{r,s}(n)$.

\begin{proof}[Proof of Theorem~\ref{thm_formation_free_upper}]
Take $d = \alpha(m)$ in Corollary~\ref{cor_psif_alphak_bound}, then
substitute into Lemma~\ref{lemma_F_to_psif}, bounding $\varphi_{r,
s-2}(n)$ by induction on $s$. Use the base cases $\F_{r,2}(n),
\F_{r,3}(n) = O(n)$ (by Lemma~\ref{lemma_Fr2_Fr3}). (As before,
$\varphi_{r, s-2}(n)$ contributes only to lower-order terms in the
exponent.)
\end{proof}

\section{The lower bound construction for $s = 3$}\label{sec_constr_3}

The rest of this paper deals with lower bounds for
Davenport--Schinzel sequences. In this section we prove
Theorem~\ref{thm_lambda3_lower} by constructing, for every $n$, a
Davenport--Schinzel sequence of order $3$ on $n$ distinct symbols
with length at least $2n \alpha(n) - O(n)$.

For this purpose, we first define a two-dimensional array of
sequences $Z_d(m)$, for $d,m \ge 1$, with the following properties:
\begin{itemize}
\item
Each symbol in $Z_d(m)$ appears exactly $2d+1$ times.

\item
$Z_d(m)$ contains no forbidden alternation $ababa$. (We do not
preclude the presence of adjacent repeated symbols in $Z_d(m)$.)

\item
$Z_d(m)$ is partitioned into \emph{blocks}, where each block
contains only distinct symbols. Some of the blocks in $Z_d(m)$ are
\emph{special blocks}. Each symbol in $Z_d(m)$ makes its first and
last occurrences in special blocks. Furthermore, the special blocks
are entirely composed of first and last occurrences of symbols
(there might be \emph{both} first and last occurrences in the same
special block). Moreover, each special block in $Z_d(m)$ has length
exactly $m$.

\item
For $d\ge 2$, each special block is surrounded by regular blocks on
both sides, and \emph{no} regular block is surrounded by special
blocks on both sides. For the former property, we place empty
regular blocks at the beginning and end of $Z_d(m)$, for $d\ge 2$.
\end{itemize}

In what follows, we enclose regular blocks by $(\,)$'s, and special
blocks by $[\,]$'s.

The base cases of the construction are as follows: For $d=1$, we let
\begin{equation*}
Z_1(m) = [1 2 \ldots m] (m \ldots 2 1) [1 2 \ldots m].
\end{equation*}
$Z_1(m)$ contains three blocks of length $m$; the first and last
ones are special blocks. Note that each symbol appears exactly three
times, as required. Also note that $Z_1(m)$ contains no alternation
$ababa$.

For $m=1$ and $d\ge 2$ we let
\begin{equation*}
Z_d(1) = (\,) [1] (1) (1) \ldots (1) [1] (\,),
\end{equation*}
with $2d+1$ ones. Each symbol constitutes its own block; the first
and the last nonempty blocks are special. Note that these special
blocks have length $1$, as required. At the beginning and end there
are regular blocks of length zero.

Denote by $S_d(m)$ the number of special blocks in $Z_d(m)$.

\paragraph{The recursive construction.}

For $d,m \ge 2$, we construct $Z_d(m)$ recursively as follows. Let
$Z' = Z_d(m-1)$. Let $f = S_d(m-1)$ be the number of special blocks
in $Z'$, and let $Z^* = Z_{d-1}(f)$. Thus, the special blocks in
$Z^*$ have length $f$. Let $g = S_{d-1}(f)$ be the number of special
blocks in $Z^*$.

Create $g$ copies of $Z'$, each copy using ``fresh" symbols which do
not occur in $Z^*$ nor in any preceding copy of $Z'$. Thus, we have
one copy of $Z'$ for each special block in $Z^*$. Furthermore, each special
block in $Z^*$ has as many symbols as there are special blocks in
the corresponding copy of $Z'$.

Let $C_i$ be the $i$-th special block in $Z^*$, and let $Z'_i$ be
the $i$-th copy of $Z'$. Let $a$ be the $\ell$-th symbol in $C_i$,
and let $D_\ell$ be the $\ell$-th special block in $Z'_i$. We
duplicate $a$ into $aa$, and we insert the $aa$ into $Z'_i$ as
follows:

If the $a$ in $C_i$ is the first $a$ in $Z^*$, then the first of the
two $a$'s falls at the end of $D_\ell$ and the second $a$ falls at
the beginning of the block after $D_\ell$. Otherwise, if the $a$ in $C_i$
is the last $a$ in $Z^*$, then the first of the two $a$'s falls at
the end of the block before $D_\ell$ and the second $a$ falls at the
beginning of $D_\ell$. (Recall that $D_\ell$ is surrounded by
regular blocks in $Z'_i$.)

Since no regular block in $Z'_i$ is surrounded by special blocks on
both sides, it follows that no block in $Z'_i$ receives more than
one symbol from $Z^*$. Thus, even after the insertions, no block in
$Z'_i$ has repeated symbols.

\begin{figure}
\centerline{\includegraphics{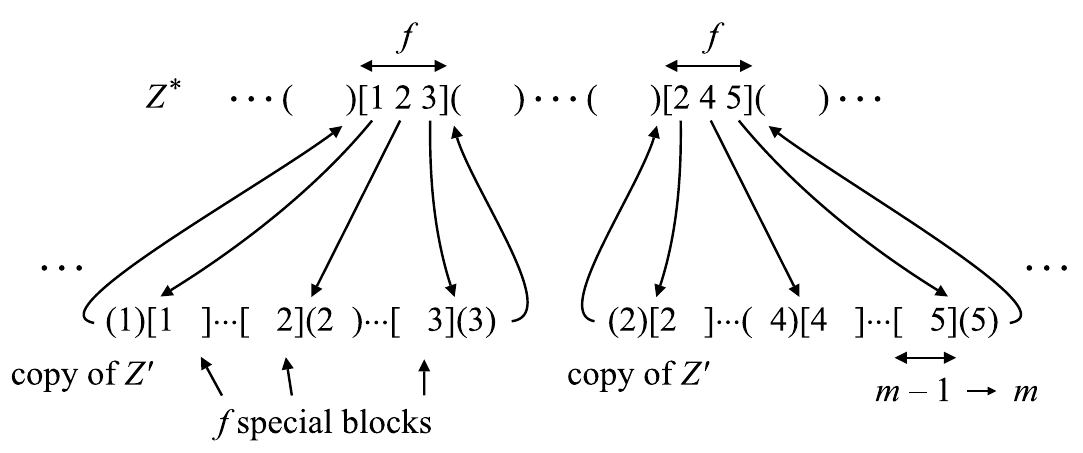}}
\caption{\label{fig_construction_3}Construction of $Z_d(m)$ from
$Z^*$ and many copies of $Z'$. Two special blocks of $Z^*$ are
depicted. In the left one, the symbol $1$ makes its last occurrence,
and symbols $2$, $3$ make their first occurrence. In the right
block, symbols $2$ and $4$ make their last occurrence, and symbol
$5$ makes its first occurrence.}
\end{figure}

After these insertions, at the place in $Z^*$ where the block $C_i$
used to be there is now a hole. We insert $Z'_i$ (with its extra
symbols) into this hole. After doing this for all special blocks
$C_i$ in $Z^*$, we obtain the desired sequence $Z_d(m)$. See
Figure~\ref{fig_construction_3}.

It is easy to check that every symbol in $Z_d(m)$ has multiplicity
$2d+1$: The symbols of the copies of $Z'$ already had multiplicity
$2d+1$, and the symbols of $Z^*$ had their multiplicity increased
from $2d-1$ to $2d+1$.

It is also clear that each symbol makes its first and last
occurrences in special blocks, that the special blocks in $Z_d(m)$
contain only first and last occurrences, and that their length
increased from $m-1$ to $m$. Furthermore, every special block is
surrounded by regular blocks on both sides, and no regular block is
surrounded by special blocks on both sides. And $Z_d(m)$ contains
empty regular blocks at the beginning and at the end.

\paragraph{No $ababa$.}

Let us now verify that $Z_d(m)$ contains no alternation $ababa$ of
length $5$. Assume by induction that this is true for the component
sequences $Z'$ and $Z^*$.

Suppose for a contradiction that $Z_d(m)$ contains an alternation
$ababa$. The symbols $a$ and $b$ cannot come from the same copy of
$Z'$, by induction, and they cannot come from different copies of
$Z'$, since they would not alternate at all.

Further, $a$ and $b$ cannot both come from $Z^*$: By the induction
assumption, $Z^*$ contains no forbidden alternation. And the
duplications of symbols $a \to aa$ cannot create a forbidden
alternation, since the two $a$'s end up being adjacent in $Z_d(m)$.

Next, suppose that $a$ comes from a copy of $Z'$ and $b$ comes from
$Z^*$. Then this copy of $Z'$ received two non-adjacent $b$'s. But
this is impossible by construction: Our copy of $Z'$ received
symbols from a single special block of $Z^*$, which contained at
most one $b$. This $b$ was duplicated into two \emph{adjacent}
copies $bb$.

Finally, suppose that $a$ comes from $Z^*$ and $b$ comes from a copy
of $Z'$. Then this copy of $Z'$ received an $a$ that is neither the
first nor the last $a$ in $Z^*$. This is also a contradiction.

\begin{remark}
The above construction shares some similarities with an earlier
construction by Komj\'ath~\cite{komjath}.
\end{remark}

\subsection{Analysis}\label{subsec_3_analysis}

Recall that $S_d(m)$ denotes the number of special blocks in
$Z_d(m)$. We define a few other quantities related to $Z_d(m)$:
\begin{itemize}
\item
$N_d(m) = \|Z_d(m)\|$ denotes the number of distinct symbols in
$Z_d(m)$.

\item
$L_d(m) = |L_d(m)|$ denotes the length of $Z_d(m)$.

\item
$M_d(m)$ denotes the total number of blocks (regular and special) in
$Z_d(m)$.

\item
We let $X_d(m) = M_d(m) / S_d(m)$. Thus, $X_d(m)^{-1}$ is the
fraction of blocks in $Z_d(m)$ that are special.

\item
We let $V_d(m) = L_d(m) / M_d(m)$ denote the average block length in
$Z_d(m)$.
\end{itemize}
Note that
\begin{align}
N_d(m) &= {1\over 2} m S_d(m),\label{eq_Ndm}\\
L_d(m) &= (2d+1) N_d(m) = \left(d + {1\over 2} \right) m
S_d(m).\label{eq_Ldm}
\end{align}
Equation (\ref{eq_Ndm}) follows from the fact that each symbol
appears in two special blocks, and each special block contains $m$
symbols. Equation (\ref{eq_Ldm}) follows from the fact that each
symbol appears $2d+1$ times in $Z_d(m)$.

Theorem~\ref{thm_lambda3_lower} follows from the following facts:

\begin{lemma}\label{lemma_lambda3_lower_facts}
The quantity $N_d(m)$ experiences Ackermann-like growth.
Specifically, there exists a small absolute constant $c$ such that
\begin{equation}\label{eq_N_A_sandwich}
A_d(m) \le N_d(m) \le A_d(m+c)
\end{equation}
for all $d\ge 3$ and all $m \ge 2$.

We also have $X_d(m) \le 2d + 1$ and $V_d(m) \ge m/2$ for all $d$
and all $m$.
\end{lemma}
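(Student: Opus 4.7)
The plan is to read off clean recursions for $N_d(m)$, $S_d(m)$, $M_d(m)$, and $L_d(m)$ from the construction, prove $X_d(m) \le 2d+1$ by induction (from which $V_d(m) \ge m/2$ is immediate), and finally prove the Ackermann-like sandwich on $N_d(m)$ by aligning the recursion for $S_d$ with the defining recursion of the Ackermann hierarchy. From the construction, each symbol appears in exactly two special blocks of length $m$, so $N_d(m) = \tfrac{1}{2} m S_d(m)$, and the multiplicity gives $L_d(m) = (2d+1) N_d(m)$. Writing $f = S_d(m-1)$ and $g = S_{d-1}(f)$, the $g$ copies of $Z' = Z_d(m-1)$ are the sole source of surviving special blocks (the special blocks of $Z^* = Z_{d-1}(f)$ are destroyed by being replaced by those copies), so $S_d(m) = gf = 2 N_{d-1}(f)$, and counting all blocks gives $M_d(m) = M_{d-1}(f) - g + g M_d(m-1)$. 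The base cases $Z_1(m)$ and $Z_d(1)$ give $M_1(m) = 3$, $S_1(m) = 2$, $M_d(1) = 2d+3$, $S_d(1) = 2$.

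Dividing the $M_d$ recurrence by $S_d(m) = gf$ yields the telescoping
\[
X_d(m) = X_d(m-1) + \frac{X_{d-1}(f) - 1}{f}, \qquad f = S_d(m-1),
\]
which iterates to $X_d(m) = X_d(1) + \sum_{j=2}^m (X_{d-1}(f_j) - 1)/f_j$ with $f_j = S_d(j-1)$. I would prove $X_d(m) \le 2d+1$ by induction on $d$, carrying the slightly stronger statement $X_d(m) \le 2d+1 - \varepsilon_d$ for some $\varepsilon_d > 0$. The key input is that $S_d(j)$ grows at least geometrically (and in fact much faster for $d \ge 3$), so $\Sigma_d := \sum_{j \ge 1} 1/S_d(j)$ is bounded by a small constant that tends monotonically to $1/2$ as $d \to \infty$; combined with $X_{d-1}(f) \le 2d-1-\varepsilon_{d-1}$, this keeps the accumulated increment $\sum_{j\ge 2} (X_{d-1}(f_j)-1)/f_j$ strictly below the slack $(2d-1)/2$ supplied by $X_d(1) = (2d+3)/2$. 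Once $X_d(m) \le 2d+1$ is in hand, $V_d(m) \ge m/2$ drops out of the identity $V_d(m) X_d(m) = L_d(m)/S_d(m) = (2d+1) m/2$. The delicate point here is that the slack $\varepsilon_d$ tends to shrink with $d$, and verifying that $\Sigma_d$ approaches $1/2$ quickly enough to keep $\varepsilon_d$ uniformly positive requires some bookkeeping.

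The Ackermann sandwich exploits the match between $S_d(m) = 2 N_{d-1}(S_d(m-1))$ and $A_d(m) = A_{d-1}(A_d(m-1))$, via double induction on $(d,m)$ with base $d = 2$ (where $S_2(m) = 2^m = A_2(m)$). The lower bound is short: assuming inductively $N_{d-1}(n) \ge A_{d-1}(n)$ and $S_d(m-1) \ge A_d(m-1)$, one gets $S_d(m) \ge 2 A_{d-1}(A_d(m-1)) = 2 A_d(m)$, and hence $N_d(m) = (m/2) S_d(m) \ge m A_d(m) \ge A_d(m)$. For the upper bound, the main obstacle is keeping the constant $c$ absolute (not growing with $d$); I would carry a sharper inductive hypothesis such as $S_d(m) \le A_d(m+1)/2$ whose step is driven by the super-geometric growth $A_{d-1}(x+1) \ge 2 A_{d-1}(x)$ (valid for $d \ge 3$), which absorbs both the leading factor of $2$ and a constant shift in the argument. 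Converting back gives $N_d(m) \le (m/4) A_d(m+1) \le A_d(m+2)$, so a small absolute constant $c$ suffices uniformly for all $d \ge 3$, $m \ge 2$, as claimed.
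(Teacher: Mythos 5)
Your proposal is correct and shares the paper's skeleton (the same recursions $N_d(m)=\tfrac12 m S_d(m)$, $L_d(m)=(2d+1)N_d(m)$, $S_d(m)=fg$, $M_d(m)=gM_d(m-1)+M_{d-1}(f)-g$, and the same telescoping identity for $X_d$), but it diverges from the paper in the two technical sub-arguments. For $X_d(m)\le 2d+1$, the paper does not start the telescope at $m=1$: it computes $X_d(2)=2d-2^{-d}$ exactly and only then applies the crude bound $X_{d-1}(\cdot)\le 2d-1$ to the remaining increments, so that the whole tail contributes at most $(2d-2)\sum_{m\ge 2}S_d(m)^{-1}\le (2d-2)\cdot 2^{1-d}\le 1$, and the bookkeeping is a one-line check. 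Your route---anchoring at $X_d(1)=d+\tfrac32$ and strengthening the hypothesis to $X_d(m)\le 2d+1-\varepsilon_d$---does go through (the recursion for the slack is roughly $\varepsilon_d\ge \tfrac12+\tfrac12\varepsilon_{d-1}-(2d-2)\tau_d$ with $\tau_d=\sum_{j\ge 2}S_d(j)^{-1}$, which stays bounded away from $0$), but the margin is thin: with only the crude tail bound $\tau_d\le 2^{1-d}$ the step at $d=3,4$ barely fails, so you genuinely need the tighter estimate $\tau_d\approx 2^{-d}$ (or the exact value of the first increment, i.e.\ of $X_{d-1}(2)$), which is exactly the bookkeeping you flag; the paper's exact evaluation of $X_d(2)$ is what makes this painless. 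For the Ackermann sandwich, the paper proves $A_d(m)\le S_d(m)\le A_d(m+c_0)$ by invoking its generic comparison machinery (Appendix~\ref{app_ack_like}) and then squeezes $S_d(m)\le N_d(m)\le S_d(m)^2\le S_d(m+1)$ for $d\ge 3$, $m\ge 2$; your direct double induction, with the sharpened hypothesis $S_d(m)\le A_d(m+1)/2$ absorbing the factor $2$ in $S_d(m)=2N_{d-1}(f)$ via $A_{d-1}(x+1)\ge 2A_{d-1}(x)$, is a valid alternative and has the minor advantage of producing an explicit tiny constant $c$ without the appendix lemma, at the cost of redoing by hand what that lemma packages once and reuses several times in the paper.
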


Let us first see how this lemma implies
Theorem~\ref{thm_lambda3_lower}.

\begin{proof}[Proof of Theorem~\ref{thm_lambda3_lower}]
Diagonalize by taking the sequences $Z^*_d = Z_d(d)$ for $d = 1, 2,
3, \ldots$. Let $N^*_d = N_d(d)$, $L^*_d = L_d(d)$, and $V^*_d =
V_d(d)$.

By (\ref{eq_N_A_sandwich}) and (\ref{eq_another_rec_A}) we have
$N^*_d \le A_d(d+c) \le A_d{\bigl( A(d+1) \bigr)} = A(d+2)$. Thus,
\begin{equation}\label{eq_N*_A_sandwich}
A(d) < N^*_d \le A(d+ 2)
\end{equation}
for all $d\ge 4$. Thus, by (\ref{eq_alpha_A}),
\begin{equation}\label{eq_d_alpha_sanwich}
\alpha{\left( N^*_d \right)} - 2 \le d < \alpha{\left( N^*_d
\right)}
\end{equation}
for $d\ge 4$, and so, by (\ref{eq_Ldm}),
\begin{equation*}
L^*_d \ge 2 N^*_d \cdot \alpha{\left( N^*_d \right)} - O{\left(
N^*_d\right)}.
\end{equation*}
The sequences $Z^*_d$ are not necessarily Davenport--Schinzel
sequences, since they might have adjacent repeated symbols.
Therefore, create sequences $Z'_d$ by removing adjacent repetitions
from $Z^*_d$. Since we delete at most one symbol per block, the
length of $Z^*_d$ decreases by at most a $1/ V^*_d$ fraction. But by
Lemma \ref{lemma_lambda3_lower_facts} this ratio tends to zero with
$d$ (this is why we diagonalized). Specifically, the length of
$Z'_d$ is
\begin{equation}\label{eq_L'd_lower}
L'_d \ge L^*_d \left( 1 - {1\over V^*_d} \right) \ge L^*_d \left( 1
- {2\over d} \right) \ge 2N^*_d \cdot \alpha{\left( N^*_d \right)} -
O{\left( N^*_d \right)}.
\end{equation}

We have just proven that $\lambda_3(n) \ge 2n \alpha(n) - O(n)$ for
$n$ of the form $n = N^*_d$. We just have to interpolate to
intermediate values of $n$. Given $n$, let $d = d(n)$ be the unique
integer such that
\begin{equation*}
N^*_d < N^*_{d+1} \le n < N^*_{d+2}.
\end{equation*}
It follows, by applying (\ref{eq_d_alpha_sanwich}) twice, that
\begin{equation}\label{eq_alpha_n_alpha_N*}
\alpha(n) \le \alpha{\left( N^*_{d+2} \right)} \le d+4 <
\alpha{\left( N^*_d \right)} +4.
\end{equation}
Also, by the rapid growth of $N^*_d$ in $d$, we certainly have
\begin{equation}\label{eq_N*_sqrt_n}
N^*_d \le \sqrt{ N^*_{d+1}} \le \sqrt n
\end{equation}
for $d\ge 4$.

We now concatenate many copies of $Z'_d$ with disjoint sets of
symbols, making sure we do not have more than $n$ distinct symbols
altogether. Specifically, we let $t = \lfloor n / N^*_d \rfloor$,
and we let $Z''(n)$ be a concatenation of $t$ copies of $Z'_d$ with
disjoint sets of symbols.

By (\ref{eq_L'd_lower}), (\ref{eq_alpha_n_alpha_N*}), and
(\ref{eq_N*_sqrt_n}), it follows that the length of $Z''(n)$ is
\begin{equation*}
L''(n) = t L'_d \ge \left( {n \over N^*_d} - 1 \right) \bigl( 2
N^*_d \cdot \alpha{\left( N^*_d \right)} - O{\left( N^*_d\right)}
\bigr) = 2n \alpha(n) - O(n).
\end{equation*}
Since $\lambda_3(n) \ge L''(n)$, the bound follows.
\end{proof}

All that remains is to prove Lemma~\ref{lemma_lambda3_lower_facts}.

\begin{proof}[Proof of Lemma~\ref{lemma_lambda3_lower_facts}]

The quantity $S_d(m)$ is given recursively by
\begin{align}
S_1(m) &= 2; \notag\\
S_d(1) &= 2; \notag\\
S_d(m) &= fg = S_d(m-1) S_{d-1}\bigl( S_d(m-1) \bigr), \qquad
\text{for $d,m \ge 2$}. \label{eq_Sdm_rec}
\end{align}
In particular, we have $S_2(m) = 2^m = A_2(m)$, and $S_d(2) = 2^d$.

It is not hard to show (see Appendix~\ref{app_ack_like}) that there
exists a small constant $c_0$ such that
\begin{equation}\label{eq_S_A_sandwich}
A_d(m) \le S_d(m) \le A_d(m+c_0)
\end{equation}
for all $d\ge 2$ and all $m$. Then, by (\ref{eq_Ndm}) we have, for
$d\ge 3$, $m\ge 2$,
\begin{equation*}
S_d(m) \le N_d(m) \le S_d(m)^2 \le S_d(m+1),
\end{equation*}
so (\ref{eq_N_A_sandwich}) follows with $c = c_0 + 1$.

Regarding $M_d(m)$, we have
\begin{align*}
M_1(m) &= 3;\\
M_d(1) &= 2d+3, \quad \text{for $d\ge 2$},
\end{align*}
counting the empty blocks at the ends of $Z_d(1)$. And for $d,m \ge
2$, we have
\begin{align}
M_d(m) &= g M_d(m-1) + M_{d-1}(f) - g \notag\\
&= S_{d-1}\bigl( S_d(m-1) \bigr) \bigl(M_d(m-1)-1 \bigr) +
M_{d-1}\bigl( S_d(m-1) \bigr) \label{eq_Mdm_rec}
\end{align}
(since the $g$ special blocks of $Z^*$ disappear). In particular, we
have $M_2(m) = 2^{m+2} - 1$, and $M_d(2) = 2^{d+1}d - 1$.

Let us now examine $X_d(m) = M_d(m) / S_d(m)$. We have
\begin{align*}
X_1(m) &= 3/2,\\
X_2(m) &= 4 - 2^{-m}, \\
X_d(1) &= d+3/2, \qquad \text{for $d \ge 2$},\\
X_d(2) &= 2d - 2^{-d}, \qquad \text{for $d \ge 2$}.
\end{align*}
In general, dividing (\ref{eq_Mdm_rec}) by (\ref{eq_Sdm_rec}),
\begin{equation}\label{eq_Xdm_rec}
X_d(m) = X_d(m-1) + {X_{d-1}\bigl( S_d(m-1) \bigr) - 1\over
S_d(m-1)}.
\end{equation}

We now prove by induction that $X_d(m) \le 2d+1$ for all $d$ and
$m$. The claim has been verified for $d \le 2$ and for $m\le 2$, so
assume $d,m \ge 3$. By (\ref{eq_Xdm_rec}) and using induction on
$d$, we have
\begin{equation*}
X_d(m) \le X_d(m-1) + {2d-2 \over S_d(m-1)},
\end{equation*}
so
\begin{equation*}
X_d(m) \le X_d(2) + (2d-2) \sum_{m=2}^\infty S_d(m)^{-1} = 2d -
2^{-d} + (2d-2) \sum_{m=2}^\infty S_d(m)^{-1}.
\end{equation*}
It is easily checked that, for $d\ge 3$,
\begin{equation*}
\sum_{m=2}^\infty S_d(m)^{-1} \le 2S_d(2)^{-1} = 2^{1-d} \le {1\over
2d-2}.
\end{equation*}
It follows that $X_d(m) \le 2d+1$, as desired.

Finally, let us consider $V_d(m)$. By (\ref{eq_Ldm}) we have
\begin{equation*}
V_d(m) = {L_d(m) \over M_d(m)} = \left( d + {1\over 2} \right) {m
\over X_d(m)} \ge {m\over 2}. \qedhere
\end{equation*}
\end{proof}

\begin{remark}
The coefficient $2$ in our bound for $\lambda_3(n)$ comes from the
fact that each symbol appears roughly $2d$ times in $Z_d(m)$. In
previous constructions \cite{WS88,komjath,DS_book} each symbol
appears only $d \pm O(1)$ times in the equivalent sequence. Sharir
and Agarwal~\cite{DS_book} lost an additional factor of $2$ in the
interpolation step; we avoided this loss in the proof of
Theorem~\ref{thm_lambda3_lower} by letting $Z''(n)$ consist of many
copes of $Z'_d$, instead of using $Z'_{d+1}$ (which would have been
a more obvious choice).
\end{remark}

\subsection{Lower bound for the number of symbols in almost-DS
sequences of order $3$} \label{subsec_ADS_3}

In Section \ref{sec_psi_new} we introduced the notion of almost-DS
sequences. We derived an upper bound on the maximum number
$\N^s_k(m)$ of distinct symbols of an $\ADS^s_k(m)$-sequence, and we
used this upper bound to bound $\lambda_s(n)$.

But the problem of $\ADS^s_k(m)$-sequences is interesting in its own
right, so one might naturally wonder about matching lower bounds for
$\N^s_k(m)$.

It turns out that the construction $Z_d(m)$ described in this
section also provides a roughly-matching lower bound for
$\N^3_d(x)$. We just have to change our point of view: Instead of
taking a diagonal (namely, $Z_d(d)$), we take the \emph{rows} of the
construction (meaning, $Z_d(m)$ for fixed $d$).

\begin{lemma}\label{lemma_ADS_3_lower}
For every fixed $d\ge 2$ we have $\N^3_{2d+1}(x) = \Omega{\bigl(
{1\over d} x\alpha_d(x) \bigr)}$.
\end{lemma}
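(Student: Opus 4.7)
The plan is to reuse the construction $Z_d(m)$ from this section, but with $d$ held fixed and $m$ varying as a function of the parameter $x$. By construction, $Z_d(m)$ is an $\ADS^3_{2d+1}(M_d(m))$-sequence with $N_d(m) = \tfrac{1}{2} m S_d(m)$ distinct symbols, and by Lemma~\ref{lemma_lambda3_lower_facts} the ratio of symbols to blocks satisfies
\[
\frac{N_d(m)}{M_d(m)} = \frac{m}{2 X_d(m)} \ge \frac{m}{2(2d+1)}.
\]
Given $x$, I would choose $m = m(x,d)$ to be the largest integer with $M_d(m) \le x$, and form an $\ADS^3_{2d+1}$-sequence with at most $x$ blocks by concatenating $t := \lfloor x/M_d(m) \rfloor$ copies of $Z_d(m)$ on pairwise disjoint alphabets. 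Using disjoint alphabets preserves both the absence of $ababa$ and the multiplicity $2d+1$ of every symbol, so $\N^3_{2d+1}(x) \ge t\, N_d(m)$.

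To estimate $t N_d(m)$, observe that $t \ge 1$ and $t \ge x/M_d(m) - 1$, so $t M_d(m) \ge \max\{M_d(m),\, x - M_d(m)\} \ge x/2$. Combining this with the above ratio gives
\[
t N_d(m) \;=\; \frac{m}{2 X_d(m)} \cdot t M_d(m) \;\ge\; \frac{m x}{4(2d+1)}.
\]
Hence it suffices to show $m \ge \alpha_d(x) - O_d(1)$, for then $\N^3_{2d+1}(x) = \Omega\bigl(x \alpha_d(x)/d\bigr)$, as claimed.

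By the maximality of $m$ we have $x < M_d(m+1) \le (2d+1)\, S_d(m+1) \le (2d+1)\, A_d(m + 1 + c_0)$, using Lemma~\ref{lemma_lambda3_lower_facts} and (\ref{eq_S_A_sandwich}). The only slightly delicate point---and the step I expect to be the main (though mild) obstacle---is to absorb the factor $2d+1$ into a constant shift of the argument of $A_d$: for every fixed $d \ge 2$ there is a constant $c_d$ such that $(2d+1)\, A_d(n) \le A_d(n + c_d)$ for all $n$. For $d = 2$ this is just $5 \cdot 2^n \le 2^{n+3}$; for $d \ge 3$ it follows easily because $A_d(n+1)/A_d(n) \to \infty$. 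This is precisely the kind of index-shift manipulation systematized in Appendix~\ref{app_ack_like}. Applying it yields $x < A_d(m + O_d(1))$, and therefore $\alpha_d(x) \le m + O_d(1)$, which completes the plan.
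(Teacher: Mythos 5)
Your proposal is correct and follows essentially the same route as the paper: you use the rows $Z_d(m)$ of the order-$3$ construction, bound the symbol-to-block ratio via $X_d(m)\le 2d+1$ from Lemma~\ref{lemma_lambda3_lower_facts}, convert block counts to $\alpha_d$ via (\ref{eq_S_A_sandwich}), and interpolate to general $x$ by concatenating copies on disjoint alphabets, exactly as the paper does (it merely defers the interpolation to Section~\ref{subsec_3_analysis}). The only difference is that you spell out the absorption of the factor $2d+1$ into a $d$-dependent shift of the argument of $A_d$, a step the paper glosses over, and this is a harmless constant for fixed $d$.
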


\begin{proof}
For every $m\ge 1$, the sequence $Z_d(m)$ is an
$\ADS^3_{2d+1}(x_m)$-sequence for $x_m = M_d(m)$. Let $n_m = N_d(m)$
be the number of distinct symbols in $Z_d(m)$.

By the definition of $X_d(m)$, and by applying
Lemma~\ref{lemma_lambda3_lower_facts} and then
(\ref{eq_S_A_sandwich}), we have
\begin{equation}\label{eq_x_m_A_d}
\begin{split}
x_m = M_d(m) = X_d(m) S_d(m) &\le (2d+1) S_d(m) \\
&\le (2d+1) A_d(m+c_0) \le A_d(m+c_0+1).
\end{split}
\end{equation}
Thus, by (\ref{eq_alphak_Ak}) we have $m \ge \alpha_d(x_m) - c_0 -
1$. Therefore, by (\ref{eq_Ldm}), applying
Lemma~\ref{lemma_lambda3_lower_facts} again, and applying (\ref{eq_x_m_A_d}),
\begin{equation*}
n_m = N_d(m) = {L_d(m) \over 2d+1} = {V_d(m) M_d(m) \over 2d+1} \ge
{m M_d(m) \over 4d+2} = \Omega{\left( {1\over d} x_m \alpha_d(x_m)
\right)}.
\end{equation*}
We interpolate to intermediate values of $x$ (for $x_m \le x <
x_{m+1}$) as we did above, in Section~\ref{subsec_3_analysis}.
\end{proof}

Thus, for odd $d$ the bounds for $\N^3_d(m)$ are quite tight (they
leave a multiplicative gap of $O(d)$). For even $d$ the bounds are
not so tight---they are obtained by applying $\N^3_{d+1}(m) \le
\N^3_d(m) \le \N^3_{d-1}(m)$.

Lemma~\ref{lemma_ADS_3_lower} automatically yields a lower bound for
$\Nf_{r,4,k}(m)$: A sequence that does not contain $ababa$ cannot
contain an $(r,4)$-formation for any $r\ge 2$; further, as Adamec,
Klazar, and Valtr~\cite{AKV92} showed, an $r$-sparse, $u$-free
sequence can be made $r'$-sparse for $r'>r$ at the cost of shrinking
the sequence by at most a constant factor.

\section{The lower-bound construction for $s\ge 4$ even}\label{sec_constr_even}

In this section we present a construction that achieves the lower
bounds (\ref{eq_lambda_lower}). This is a simpler variant of the
construction of Agarwal, Sharir, and Shor \cite{ASS89, DS_book} that
achieves the same bounds.

We first construct a family of sequences $S^s_k(m)$ for $s\ge 2$
even, $k\ge 0$, and $m\ge 1$. For all $s\ge 4$, $m\ge 2$, the
sequences $S^s_k(m)$ are Davenport--Schinzel sequences of order $s$.

The sequences $S^s_k(m)$ are highly regular; they satisfy the
following properties:

\begin{itemize}
\item
$S^s_k(m)$ is a concatenation of blocks of length $m$, where each
block contains $m$ distinct symbols. (For $s=2$ or $m=1$ there are
adjacent repeated symbols at the interface between blocks, but only
in these cases.)

\item
$S^s_k(m)$ does not contain any forbidden alternation $abab\ldots$
of length $s+2$, for any distinct symbols $a\neq b$. Thus, for $s\ge
4$, $m\ge 2$, the sequence $S^s_k(m)$ is a Davenport--Schinzel
sequence of order $s$.

\item
All symbols in $S^s_k(m)$ occur with the same multiplicity
$\mu_s(k)$, which depends only on $s$ and $k$. Further, for $s\ge 4$
each symbol in $S^s_k(m)$ makes all its appearances in the same
position within the blocks, and no two symbols $a, b$ appear
together in more than one block.
\end{itemize}

\subsection{The construction}

For $s=2$, the sequences $S^2_k(m)$ are given (independently of $k$)
by
\begin{equation*}
S^2_k(m) = 1 2 \ldots m\ m \ldots 2 1.
\end{equation*}
$S^2_k(m)$ consists of two blocks of length $m$, and each symbol
occurs with multiplicity $\mu_2(k) = 2$. Clearly, $S^2_k(m)$
contains no forbidden alternation $abab$.

The construction for general $s\ge 4$ is as follows. For $k=0$, we
let $S^s_0(m)$ consist of a single block of length $m$:
\begin{equation}\label{eq_Ss0m}
S^s_0(m) = 1 2 \ldots m.
\end{equation}
Thus, $\mu_s(0) = 1$.

For general $k\ge 1$, we proceed as follows. The sequence $S^s_k(1)$
consists of
\begin{equation}\label{eq_rec_mu}
\mu_s(k) = \mu_{s-2}(k-1) \mu_s(k-1)
\end{equation}
copies of the symbol $1$, each forming by itself a block of length
one. Equation (\ref{eq_rec_mu}), together with the bounding cases
$\mu_2(k) = 2$ and $\mu_s(0) = 1$ for $s\ge 4$, gives the recursive
definition of $\mu_s(k)$.

For $m\ge 2$, the sequence $S^s_k(m)$ is constructed inductively on
the lexicographic order of the triples $(s,k,m)$, using three
previously created sequences as components.

The first sequence is $S' = S^s_k(m-1)$; note that $S'$ contains
blocks of length $m-1$. Let $f$ be the number of blocks in $S'$.

The second sequence is $\overline S = S^{s-2}_{k-1}(f)$. Thus,
$\overline S$ contains blocks of length $f$. Let $g = \left\|
\overline S \right\|$ be the number of distinct symbols in
$\overline S$.

The third and final sequence is $S^* = S^s_{k-1}(g)$. Thus, $S^*$
contains blocks of length $g$.

Transform the sequence $S^*$ into a sequence $\widehat S^*$ by
replacing each block in $S^*$ by a copy of $\overline S$ with the
same set of $g$ symbols, making their first appearances in the same
order as in the replaced block. Note that $\widehat S^*$ contains
blocks of length $f$. Further, the multiplicity of each symbol in $\widehat
S^*$ is the product of the symbol multiplicities in $\overline S$ and $S^*$; by incduction this multiplicity equals
\begin{equation*}
\mu_{s-2}(k-1) \mu_s(k-1) = \mu_s(k).
\end{equation*}
Let $h$ be the number of blocks in $\widehat S^*$.

Now, create $h$ copies of $S'$, each copy using ``fresh" symbols
which do not occur in $\widehat S^*$ nor in any preceding copy of
$S'$, and concatenate them into a sequence $S''$. Note that $S''$
contains $fh$ blocks of length $m-1$, while $\bigl| \widehat S^*
\bigr| = fh$.

Insert each symbol of $\widehat S^*$ in order at the end of each
block of $S''$. Thus, each component sequence $S'$ in $S''$,
containing $f$ blocks, receives the $f$ distinct symbols of a block
in $\widehat S^*$. The resulting sequence is the desired $S^s_k(m)$.
Note that it contains blocks of length $m$, and, by induction and
construction, each symbol in it has multiplicity $\mu_s(k)$. See
Figure~\ref{fig_even_constr}.

\begin{figure}
\centerline{\includegraphics{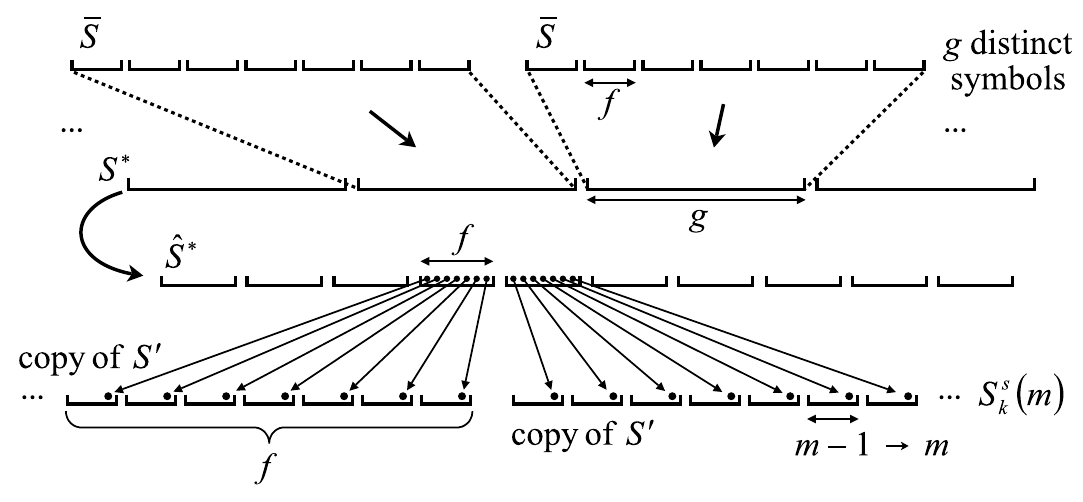}}
\caption{\label{fig_even_constr}The recursive construction of
$S^s_k(m)$. The sequence $\widehat S^*$ is the result of replacing
each block of $S^*$ by a copy of $\overline S$. Each block of
$\widehat S^*$ is then distributed among the $f$ blocks of a single
copy of $S'$.}
\end{figure}

Letting $t = s/2-1$, we have
\begin{equation}\label{eq_bd_mu}
\mu_s(k) = 2^{k\choose t} = 2^{(1/t!)k^t - O(k^{t-1})},
\end{equation}
if we take $s$ to be a constant.

\subsection{Correctness of the construction}

We now prove that, for $s\ge 4$, $m\ge 2$, the sequences $S^s_k(m)$
are indeed Davenport--Schinzel sequences of order $s$.

Let us first recall some important properties of the construction:
\begin{itemize}
\item
The last symbol in each block of $S^s_k(m)$ comes from $\widehat
S^*$ (which has the same set of symbols as $S^*$), while every other
symbol in $S^s_k(m)$ comes from a copy of $S'$.

\item
The copies of $S'$ have pairwise disjoint sets of symbols, which are
also disjoint from the set of symbols of $\widehat S^*$.

\item
When merging $S''$ and $\widehat S^*$ to form $S^s_k(m)$, each copy
of $S'$ in $S''$ receives the $f$ \emph{distinct} symbols of a block
of $\widehat S^*$.
\end{itemize}

The following lemma is easily proven by induction using the above
properties:

\begin{lemma}\label{lemma_easy_properties}
The sequence $S^s_k(m)$ satisfies the following properties:
\begin{enumerate}
\item
For $s\ge 4$, each symbol in the sequence makes all its appearances
in the same position within the blocks.

\item
For $s\ge 4$, $m\ge 2$, there are no adjacent repeated symbols.

\item
For $s\ge 4$, no two symbols of $S^s_k(m)$ appear together in more
than one block.
\end{enumerate}
\end{lemma}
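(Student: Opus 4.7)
The plan is to prove properties 1, 2, and 3 simultaneously by induction on the lexicographic order of triples $(s,k,m)$, since the recursive step builds $S^s_k(m)$ from the three strictly smaller components $S'=S^s_k(m-1)$, $\overline S=S^{s-2}_{k-1}(f)$, and $S^*=S^s_{k-1}(g)$. Throughout, I exploit the three structural facts already emphasized in the paragraph preceding the lemma: the alphabets of the $h$ copies of $S'$ and of $\widehat S^*$ are pairwise disjoint; the last symbol of every block of $S^s_k(m)$ comes from $\widehat S^*$ while the first $m-1$ symbols come from a single block of a copy of $S'$; and each copy of $S'$ in $S''$ receives exactly the $f$ distinct symbols of one block of $\widehat S^*$. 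The base cases $k=0$ (where $S^s_0(m)=12\cdots m$ is a single block) and $m=1$ (where all blocks have length one) are immediate for all three properties.

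For property 1, each symbol $a$ is either from some copy of $S'$ or from $\widehat S^*$. If $a$ comes from $\widehat S^*$, it is inserted at the end of each block in which it appears, so it always sits in position $m$. If $a$ comes from a copy of $S'$, then by the inductive hypothesis applied to $S'=S^s_k(m-1)$ it occupies a fixed position $p\le m-1$ in each of its blocks in $S'$; since the $\widehat S^*$ insertion only appends one extra symbol at position $m$ to each block of $S''$, the position of $a$ is unchanged in $S^s_k(m)$. Property 2 follows at once from the same decomposition: within a block the first $m-1$ entries form a block of $S'$ (which has distinct symbols by the inductive version of the block structure, or is a singleton when $m=2$), while the last entry comes from $\widehat S^*$ whose alphabet is disjoint; and at each block boundary the preceding symbol lies in $\widehat S^*$ while the following symbol lies in a copy of $S'$, again from disjoint alphabets. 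The only edge case is $m=2$ with $S' = S^s_k(1)$ itself consisting of repeated copies of one symbol, but precisely because $m\ge 2$ two consecutive blocks of $S''$ are always separated, after insertion, by a $\widehat S^*$ symbol, so no adjacent repetition survives.

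Property 3 is the main obstacle and is the reason all three properties must travel together in the induction. Given two symbols $a,b$ appearing in a common block $B$ of $S^s_k(m)$, I argue by cases on their origins. If both come from the same copy of $S'$, the inductive hypothesis for $S'$ forces $B$ to be unique; if they come from different copies of $S'$, or if both come from $\widehat S^*$ (since each block of $S^s_k(m)$ contains only one $\widehat S^*$ symbol, namely its last), they never share a block at all. The remaining case is $a$ from the $i$-th copy of $S'$ and $b$ from $\widehat S^*$: by property 1 applied to $S'$, the occurrences of $a$ lie in a definite set of blocks $j_1,\dots,j_q$ of copy $i$, and by construction the $j$-th block of copy $i$ receives, at its end, the $j$-th symbol of the $i$-th block of $\widehat S^*$. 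Since the $i$-th block of $\widehat S^*$ is a block of $\overline S$ and therefore consists of $f$ pairwise distinct symbols, $b$ occurs at most once among the positions $j_1,\dots,j_q$, so $B$ is uniquely determined. The bookkeeping of which symbol arose from which component is the only real difficulty, and carrying property 1 in the inductive hypothesis is exactly what pins down the position of $a$ inside its $S'$-blocks so that this last case goes through.
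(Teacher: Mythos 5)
Your proof is correct and takes exactly the route the paper intends: the paper dismisses this lemma as ``easily proven by induction using the above properties,'' and your argument is precisely that induction on the triples $(s,k,m)$, using the three recalled structural facts (disjoint alphabets, the appended $\widehat S^*$ symbol at position $m$, and the $f$ distinct symbols of a block of $\widehat S^*$ going to one copy of $S'$). One small quibble: in the mixed case of property 3 you do not actually need property 1 at all --- the distinctness of the $f$ symbols in the relevant block of $\widehat S^*$ already ensures that $b$ is appended to at most one block of copy $i$, so $a$ and $b$ can share at most that block --- but this overstatement does not affect the correctness of the argument.
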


For each symbol $a$ in $S^s_k(m)$, call the \emph{depth} of $a$ the
position within the blocks in which $a$ always appears in
$S^s_k(m)$. This notion is well-defined by the above lemma. Thus,
the symbols that come from copies of $S'$ have depth between $1$ and
$m-1$, while the symbols that come from $\widehat S^*$ have depth
$m$.

The following Lemma is also pretty straightforward:

\begin{lemma}\label{lemma_at_most_ababa}
Symbols at different depths in $S^s_k(m)$ make alternations of
length at most $5$.
\end{lemma}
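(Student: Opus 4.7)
The plan is to prove the lemma by induction on the lexicographic order of $(s, k, m)$, mirroring the order in which the sequences $S^s_k(m)$ themselves are built. The base cases $k = 0$ and $m = 1$ are immediate: in $S^s_0(m) = 12\ldots m$ every symbol appears once, and in $S^s_k(1)$ all symbols lie at depth $1$, so in both cases there are no two symbols at different depths to check.

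For the inductive step, fix two symbols $a, b$ at distinct depths $i < j$ in $S^s_k(m)$, and recall the construction: $S^s_k(m)$ is built from $h$ concatenated copies $S'_1, \ldots, S'_h$ of $S' = S^s_k(m-1)$ together with the symbols of $\widehat S^*$, one such symbol appended at the end of each of the $fh$ blocks. Symbols at depths $1, \ldots, m-1$ inherit from copies of $S'$; symbols at depth $m$ come from $\widehat S^*$.

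If $j < m$, both symbols come from copies of $S'$. Since distinct copies have disjoint alphabets, either (i) $a$ and $b$ lie in a common copy $S'_p$, in which case the $\{a,b\}$-subsequence of $S^s_k(m)$ coincides with that of $S'_p$, and the induction hypothesis applied to $S^s_k(m-1)$ (lex-smaller because $m-1 < m$) gives alternation length at most $5$; or (ii) $a$ and $b$ lie in different copies, so the $\{a,b\}$-subsequence is all $a$'s followed by all $b$'s (or vice versa), an alternation of length at most $2$. The main case is $j = m$: here $a$ lies at depth $i < m$ in a single copy $S'_p$, while $b$ comes from $\widehat S^*$. The key observation is that $b$ appears \emph{at most once} within each copy $S'_q$: each $S'_q$ receives exactly the $f$ symbols of one block of $\widehat S^*$, and that block (being a block of a copy of $\overline S$) has pairwise distinct symbols by Lemma~\ref{lemma_easy_properties}(3) applied inductively to $\overline S$. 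Consequently the occurrences of $b$ in $S^s_k(m)$ split as some number of $b$'s in copies $S'_q$ with $q < p$, at most one $b$ within $S'_p$, and some number of $b$'s in copies $S'_q$ with $q > p$; meanwhile every occurrence of $a$ lies inside $S'_p$. Within $S'_p$ the $\{a,b\}$-pattern is therefore at worst $a\cdots a\, b\, a\cdots a$, contributing an $aba$; flanked by the outside $b$'s the overall $\{a,b\}$-pattern reduces to at most $b\,a\,b\,a\,b$, an alternation of length $5$.

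The main obstacle, and the crux of the argument, is the depth-$m$ case: without the observation that each copy $S'_q$ contains at most one occurrence of $b$ (which rests on blocks of $\widehat S^*$ having pairwise distinct symbols), the $b$'s within $S'_p$ could interleave arbitrarily with the $a$'s and the alternation could grow with $k$ or $s$. Once this single-occurrence property is in hand, the remaining counting is elementary.
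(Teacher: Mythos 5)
Your proof is correct and takes essentially the same route as the paper's: the same induction with the same case analysis (both symbols at depth below $m$ lie in one copy of $S'$, handled by induction, or in different copies and do not alternate; and in the depth-$m$ case each copy of $S'$ receives at most one occurrence of the depth-$m$ symbol, so the worst pattern is an alternation of length $5$). One small note: the fact that each block of $\widehat S^*$ consists of distinct symbols is just the construction property that every block of $\overline S$ contains distinct symbols, rather than a consequence of Lemma~\ref{lemma_easy_properties}(3).
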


\begin{proof}
By induction. The claim is clearly true if $s = 2$, $k = 0$, or $m =
1$. Thus, let $s\ge 4$, $k\ge 1$, and $m\ge 2$. Let $a$ and $b$ be
two symbols at different depths in $S^s_k(m)$.

If both $a$ and $b$ have depth at most $m - 1$, then they either
come from the same copy of $S'$, in which case the claim follows by
induction, or else they come from different copies of $S'$, in which
case they do not alternate at all.

Thus, suppose one symbol, say $a$, has depth $m$ (so it comes from
$\widehat S^*$), while the other symbol, $b$, has depth at most
$m-1$ (so it comes from a copy of $S'$).

The copy of $S'$ to which $b$ belongs receives at most one $a$ from
$\widehat S^*$. In the worst case, this $a$ is surrounded by $b$'s
from our copy of $S'$, and this copy of $S'$ is in turn surrounded
by other $a$'s from $\widehat S^*$. Thus the longest alternation we
can get is $ababa$.
\end{proof}

The main issue is to show that $S^s_k(m)$ contains no forbidden
alternating subsequence of length $s+2$. For this, we prove by
induction that $S^s_k(m)$ satisfies a stronger property.

\begin{lemma}
The sequence $S^s_k(m)$ satisfies the following properties:

\begin{enumerate}
\item\label{property_no_forbid}
$S^s_k(m)$ contains no forbidden alternation $abab\ldots$ of length
$s+2$.

\item\label{property_replace_no_forbid}
Furthermore, if each block $B$ in $S^s_k(m)$ is replaced by a
sequence $T(B)$ on the same set of symbols as $B$, such that $T(B)$
contains no alternation $abab\ldots$ of length $s$, and such that
the symbols in $T(B)$ make their first appearances in the same order
as they did in $B$, then the resulting sequence \emph{still}
contains no forbidden alternation of length $s+2$.
\end{enumerate}
\end{lemma}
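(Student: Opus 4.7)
The plan is to prove both properties simultaneously by induction on the lexicographic order of $(s,k,m)$. Property~1 is the special case of property~2 obtained by setting $T(B)=B$: since each block of $S^s_k(m)$ has distinct symbols, $B$ itself contains no alternation of length $\ge 3$, hence no alternation of length $s\ge 4$. So it suffices to establish property~2. The base cases $s=2$, $k=0$, and $m=1$ are immediate---for $s=2$ the sequence $12\ldots m\, m\ldots 21$ realizes every pair of symbols as the pattern $a\cdots b\cdots b\cdots a$, never $abab$; for $m=1$ or $k=0$ the block structure is trivial.

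For the inductive step $s\ge 4$, $k\ge 1$, $m\ge 2$, let $\tilde S$ denote the modified sequence, and assume for contradiction that distinct symbols $a,b$ form an alternation of length $s+2$ in $\tilde S$. I would case-analyze on the depths of $a$ and $b$. If they have different depths, say $a$ of depth $m$ (so $a\in\widehat S^*$) and $b$ of depth at most $m-1$ (so $b$ lies in a unique copy $S'_j$ of $S'$), then the argument of Lemma~\ref{lemma_at_most_ababa} adapts as follows: outside the portion of $\tilde S$ coming from $S'_j$ only $a$ can occur, and within that portion $a$ occurs only inside $T(B'')$, where $B''$ is the unique block of $S'_j$ (if any) to which $a$ was appended. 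Since $T(B'')$ has no alternation of length $s$, the $a,b$-alternation contributed by $T(B'')$ has length at most $s-1$; together with at most one extra $b$ from the blocks of $S'_j$ on either side of $T(B'')$ and at most one $a$ from the other copies of $S'$ on either side, the total is at most $s+1<s+2$, a contradiction.

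If $a,b$ share depth $m$, both come from $\widehat S^*$; since each block of $S^s_k(m)$ carries exactly one depth-$m$ symbol, no block contains both, so the $a,b$-alternation in $\tilde S$ equals the $a,b$-alternation in $\widehat S^*$. Now $\widehat S^*$ is itself obtained from $S^*=S^s_{k-1}(g)$ by replacing each block by a copy of $\overline S=S^{s-2}_{k-1}(f)$ on the same symbol set and with first-appearance order preserved; since $\overline S$ is a Davenport--Schinzel sequence of order $s-2$ it contains no alternation of length $s$, so property~2 applied inductively to $S^*$ (which has strictly smaller $k$) yields that $\widehat S^*$ has no alternation of length $s+2$---contradiction. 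If instead $a,b$ share depth at most $m-1$ and lie in different copies of $S'$, those copies have disjoint symbol sets, contradiction; otherwise $a,b$ both lie in one copy $S'_j$, and for each block $B'$ of $S'$ (extending to $B'\cup\{a_{B'}\}$ in $S^s_k(m)$) I would define $T'(B')$ to be $T(B'\cup\{a_{B'}\})$ with all occurrences of $a_{B'}$ deleted. Checking that $T'(B')$ has the same symbol set as $B'$, contains no alternation of length $s$ (being a subsequence of $T(B'\cup\{a_{B'}\})$), and has the required first-appearance order (since $a_{B'}$ was the last symbol of $B'\cup\{a_{B'}\}$), I would apply property~2 inductively to $S'=S^s_k(m-1)$ (smaller $m$); restricting to $a,b$ then gives the contradiction.

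The main obstacle is the adapted Lemma~\ref{lemma_at_most_ababa} in the different-depth case: the original proof gave the clean bound $5$, but in the modified sequence this bound weakens to $s+1$, because $T(B'')$ can inject up to $s-1$ alternations of $a,b$ inside a single block. A careful accounting of the boundary contributions---the runs of $b$'s in $S'_j$ immediately surrounding $T(B'')$ and the runs of $a$'s in the other copies of $S'$---is what keeps the total strictly below $s+2$ and makes the induction close.
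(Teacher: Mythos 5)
Your overall architecture (induction on $(s,k,m)$, reduction of Property~1 to Property~2 with $T(B)=B$, splitting by the depths/origins of $a$ and $b$, handling the same-copy case by deleting the appended depth-$m$ symbol from each $T(B'\cup\{a_{B'}\})$, and handling the depth-$m$ case via Property~2 of $S^*$ with the copies of $\overline S$ as replacement blocks) matches the paper's proof. But there is a genuine gap exactly where the paper's proof has its essential content: the mixed-depth case in which the two symbols share a block. Your count ``at most $s-1$ inside $T(B'')$, plus one extra $b$ and one extra $a$, total $\le s+1$'' is not justified and its literal reading is false: the low-depth symbol $b$ occurs in blocks of $S'_j$ on \emph{both} sides of $B''$, and the depth-$m$ symbol $a$ occurs in other copies of $S'$ on \emph{both} sides of the portion of $S'_j$, so the naive structure is $a^*\,b^*\,\tau\,b^*\,a^*$ with $\mathrm{alt}(\tau)\le s-1$, which can reach alternation length $s+3$. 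The bound $s+1$ does hold, but only because of two facts you never invoke: (i) the first-appearance order inside $T(B'')$ forces the $\{a,b\}$-restriction $\tau$ of $T(B'')$ to begin with $b$ (the symbol of $S'_j$, which precedes the appended depth-$m$ symbol in the block), so the surrounding $b$-runs merge with $\tau$ and the middle segment $b^*\tau b^*$ still has alternation at most $s-1$ unless $\tau$ ends with $a$; and (ii) $s$ is \emph{even}, so an alternation of $\tau$ that starts with $b$ and ends with $a$ has even length, hence length at most $s-2$, which keeps the middle at $\le s-1$ and the total at $\le s+1$. Without the parity input the middle can reach $s$ and the total $s+2$, i.e., no contradiction --- this is precisely the paper's observation (its case~(ii) in Figure~\ref{fig_ab_even_odd}) that the argument breaks for odd $s$, so any accounting that does not use evenness of $s$ cannot close.

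In short, the step you yourself flag as ``the main obstacle'' is not a routine bookkeeping exercise to be filled in later; it is the heart of the lemma, and the paper spends its four-case analysis (equivalently, the merging-plus-parity computation above, or the observation that otherwise one copy of $S'$ would have to receive two occurrences of the same depth-$m$ symbol) exactly there. As submitted, the proposal asserts the key inequality rather than proving it, so the induction does not close. The remaining cases (same depth $m$ via $\widehat S^*$ and Property~2 of $S^*$; same copy of $S'$ via the modified replacements $T'(B')$; different copies) are correct, and your explicit construction of $T'(B')$ is in fact a nice precise rendering of the paper's terser ``the same block replacement in $S'$'' step.
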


\begin{proof}
Again by induction. Both properties clearly hold if $s=2$, $k=0$, or
$m=1$, so let $s\ge 4$, $k\ge 1$, and $m\ge 2$.

Assume by induction that Properties~\ref{property_no_forbid} and
\ref{property_replace_no_forbid} hold for the sequences $S'$,
$\overline S$, and $S^*$ from which $S^s_k(m)$ is built. We want to
show that these properties hold for $S^s_k(m)$ itself.

We start with Property~\ref{property_no_forbid}. Suppose for a
contradiction that $S^s_k(m)$ contains a forbidden alternation
$abab\ldots$ or $baba\ldots$ of length $s+2$. By
Lemma~\ref{lemma_at_most_ababa}, $a$ and $b$ must have the same
depth (since $s+2 \ge 6$).

If $a$ and $b$ have depth at most $m-1$, then they must belong to
the same copy of $S'$, or else they would not alternate at all. But
this contradicts our inductive assumption on $S'$.

And if $a$ and $b$ have depth $m$ and come from $\widehat S^*$, then
$\widehat S^*$ itself contains a forbidden alternation. But
$\widehat S^*$ is obtained from $S^*$ via block replacements,
exactly as described in Property~\ref{property_replace_no_forbid}.
Thus, the inductive assumption on $S^*$ is contradicted.

In conclusion, $S^s_k(m)$ cannot contain an alternation of length
$s+2$, so it satisfies Property~\ref{property_no_forbid}.

Now we show that $S^s_k(m)$ satisfies
Property~\ref{property_replace_no_forbid}. Suppose for a
contradiction that, after performing a certain set of block
replacements in $S^s_k(m)$, we do get an alternation $abab\ldots$ or
$baba\ldots$ of length $s+2$.

For this to happen, $a$ and $b$ must have appeared together in some
block $B$ of $S^s_k(m)$. (By Lemma~\ref{lemma_easy_properties}, they
do not appear together in more than one block.) Say that $a$
appeared before $b$ in this block. This block was replaced, in the
worst case, by a sequence containing an alternation $abab \ldots$ of
length $s-1$. (Without loss of generality we may assume the
alternation starts with an $a$, since the block replacement
preserves the order of first appearances of the symbols.)

\begin{figure}
\centerline{\includegraphics{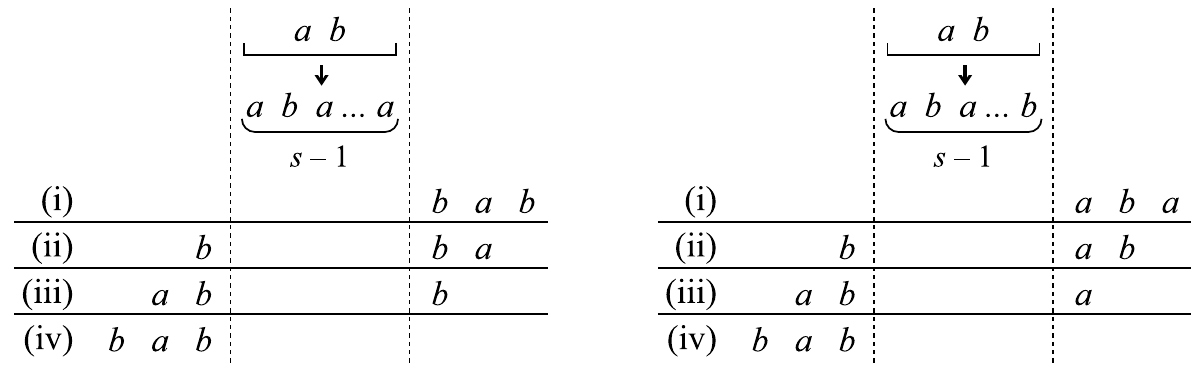}}
\caption{\label{fig_ab_even_odd}The left figure shows the case of
$s$ even. For a forbidden alternation to occur, a pair of symbols
$a$, $b$ in a common block must be replaced by an alternation of
length at most $s-1$, and extended to length $s+2$ by at least three
more symbols $a$, $b$, according to one of four possible cases. In
each case we get a contradiction. The right figure shows the case of
$s$ odd. Here the argument fails, because case (ii) fails to yield a
contradiction.}
\end{figure}

This alternation is extended to length $s+2$ by at least three more
instances of $a$ and $b$ before or after the block $B$, according to
one of four possible cases, as depicted in
Figure~\ref{fig_ab_even_odd} (left).

To see why none of these cases can occur, consider again where the
symbols $a$ and $b$ came from. If $a$ and $b$ came from the same
copy of $S'$, then the same block replacement in $S'$ would also
have generated a forbidden alternation of length $s+2$. This
contradicts our inductive assumption for $S'$.

Further, $a$ and $b$ could not have come from different copies of
$S'$, since then they would not lie together in the same block (and
they would not alternate at all). For a similar reason, they cannot
both come from $\widehat S^*$.

Thus, one symbol---specifically, $a$---must originate from a copy of
$S'$, and the other one---namely, $b$---must originate from
$\widehat S^*$. But all the other instances of $a$ in $S^s_k(m)$, to
the left or right of our block $B$, also come from the same copy of
$S'$. A case analysis shows that in each of the four cases shown in
Figure~\ref{fig_ab_even_odd} (left), this copy of $S'$ received two
copies of $b$ from $\widehat S^*$. (In cases (i) and (ii) there are
two $b$'s surrounded by $a$'s, and in cases (iii) and (iv) there is
a $b$ surrounded by $a$'s, plus another $b$ lying in the same block
as an $a$.) This is impossible according to our construction.
\end{proof}

\begin{remark}
Unfortunately, the above argument depends crucially on $s$ being
even. If we try to make the same argument with $s$ odd, we get the
four cases illustrated in Figure~\ref{fig_ab_even_odd} (right), and
in case (ii) we fail to get a contradiction---we cannot find two
instances of $b$ sent to the same copy of $S'$.
\end{remark}

\subsection{Analysis}\label{subsec_analysis}

Given a fixed even number $s\ge 4$, take the sequences $S^s_k(2)$,
for $k = 0, 1, 2, \ldots$. These are Davenport--Schinzel sequences
of order $s$, in which the multiplicity of the symbols, $\mu_s(k)$,
goes to infinity. Thus, the length of these sequences grows
superlinearly in the number of symbols. We want to derive the exact
relation between these two quantities. For this purpose, we derive
an upper bound on the number of distinct symbols in $S^s_k(2)$.

Let $N^s_k(m) = \left\| S^s_k(m) \right\|$ denote the number of
distinct symbols in $S^s_k(m)$, and let $F^s_k(m)$ be the number of
blocks in $S^s_k(m)$. Then,
\begin{equation}\label{eq_relationNF}
|S^s_k(m)| = \mu_s(k) N^s_k(m) = m F^s_k(m).
\end{equation}
The quantities $N^s_k(m)$ are initialized by
\begin{align*}
N^2_k(m) &= m;\\
N^s_0(m) &= m;\\
N^s_k(1) &= 1.
\end{align*}

To get a recurrence relation for the general case, we analyze the
recursive construction of $S^s_k(m)$. Using the notation there, we
have
\begin{align*}
f &= F^s_k(m-1);\\
g &= N^{s-2}_{k-1}(f);\\
h &= F^s_{k-1}(g) \cdot F^{s-2}_{k-1}(f);\\
N^s_k(m) &= N^s_{k-1}(g) + h\cdot N^s_k(m-1).
\end{align*}
Thus, applying (\ref{eq_relationNF}) three times and then
(\ref{eq_rec_mu}),
\begin{align*}
N^s_k(m) &= N^s_{k-1}(g) + F^s_{k-1}(g)\cdot F^{s-2}_{k-1}(f)\cdot
N^s_k(m-1)\\
&= N^s_{k-1}(g) + {\mu_s(k-1) N^s_{k-1}(g) \over g } \cdot {
\mu_{s-2}(k-1)\cdot g \over f } \cdot{
(m-1) \cdot f \over \mu_s(k) }\\
&= m \cdot N^s_{k-1}(g)\\
&= m\cdot N^s_{k-1} \bigl( N^{s-2}_{k-1} \bigl( F^s_k(m-1) \bigr)
\bigr).
\end{align*}
Since $\mu_s(k) \le 2^{2^k}$ and $m\ge 1$, by (\ref{eq_relationNF})
we have
\begin{equation}\label{eq_bd_F_with_N}
F^s_k(m) \le 2^{2^k} N^s_k(m),
\end{equation}
so
\begin{equation*}
N^s_k(m) \le m\cdot N^s_{k-1}{\left( N^{s-2}_{k-1}{\left( 2^{2^k}
N^s_k(m-1) \right)} \right)}.
\end{equation*}

We now simplify the analysis by getting rid of the dependence on $s$
in the last inequality. For this, we define an Ackermann-like
hierarchy of functions $\widehat A_k(m)$ for $k\ge 0$, $m\ge 1$, by
\begin{equation*}
\widehat A_0(m) = m;
\end{equation*}
and
\begin{equation*}
\widehat A_k(m) =
\begin{cases}
1, & \text{if $m = 1$};\\
m\cdot \widehat A_{k-1}{\left( \widehat A_{k-1}{\left( 2^{2^k}
\widehat A_k(m-1) \right)} \right)}, & \text{otherwise};
\end{cases}
\end{equation*}
for $k\ge 1$ (compare to (\ref{eq_def_Ak})). It follows by induction
that
\begin{equation}\label{eq_N_le_Ahat}
N^s_k(m) \le \widehat A_k(m)
\end{equation}
for all $s$, $k$, and $m$. In Appendix~\ref{app_ack_like} we prove
that
\begin{equation}\label{eq_A_hat_A}
\widehat A_k(m) \le A_{k+1}(2m+4) \qquad \text{for all $k\ge 2$ and
all $m$}.
\end{equation}

Now let us come back to the sequences with which we started this
discussion. Let $T_k = S^s_k(2)$ for $k = 0, 1, 2, \ldots$, and let
$n_k = \|T_k\|$. Then, applying (\ref{eq_N_le_Ahat}),
(\ref{eq_A_hat_A}), and (\ref{eq_another_rec_A}),
\begin{equation*}
n_k = N^s_k(2) \le \widehat A_k(2) \le A_{k+1}(8) \le A_{k+1}\bigl(
A(k+2) \bigr) = A(k+3).
\end{equation*}
Therefore, $k \ge \alpha(n_k) - 3$. Substituting into
(\ref{eq_relationNF}) applying (\ref{eq_bd_mu}), and letting $t =
s/2-1$,
\begin{equation*}
|T_k| = n_k \cdot \mu_s(k) \ge n_k \cdot \mu_s\bigl( \alpha(n_k) - 3
\bigr) \ge n_k \cdot 2^{(1/t!)\alpha(n_k)^t - O{\left(
\alpha(n_k)^{t-1} \right)}}.
\end{equation*}

We have thus achieved the desired lower bound on $\lambda_s(n)$ for
$n$ of the form $n = n_k$. As in Section~\ref{sec_constr_3},
interpolating to intermediate values of $n$ (for $n_k \le n <
n_{k+1}$) is straightforward, and we obtain the desired bound for
all $n$.

\subsection{Advantages over the previous construction}

The construction we just presented follows the same basic idea as
the previous construction of Agarwal et al.~\cite{ASS89, DS_book},
but it has the following advantages:
\begin{itemize}
\item
In our construction each block is just a sequence of $m$ distinct
symbols. In the previous construction each block (there called a
``fan") is of the form $1 2 \ldots m \ldots 2 1$.

\item
In our construction all symbols have the same exact multiplicity.
This greatly simplifies calculations.

\item
In our construction there are no adjacent repeated symbols at the
interface between blocks. (Removing these adjacent repetitions in
the previous construction does not present any serious problem, but
they constitute a small aesthetic blemish.)

\item
The previous construction involves some ``tiny" duplications of
symbols, which our construction does not have. These duplications
are not the cause of the asymptotic growth (and indeed, our
construction works fine without them). This is a potential source of
confusion, especially since these ``tiny" duplications are also
present in the lower-bound construction for order-$3$ sequences, and
in that case they \emph{are} critical.
\end{itemize}

\subsection{Lower bounds for the number of symbols in
almost-DS sequences of even order $s\ge 4$} \label{subsec_ADS_even}

As was the case with the construction of order $3$, the construction
described in this section yields lower bounds for $\N^s_k(m)$, for
$s\ge 4$ even. Again, the idea is to look at the \emph{rows} of the
construction, namely at $S^s_k(m)$ for fixed $s$ and $k$.

\begin{lemma}\label{lemma_ADS_even_lower}
For every fixed even $s\ge 4$ and every $k\ge 4$ we have
\begin{equation*}
\N^s_\mu(x) \ge x \alpha_k(x) \qquad \text{for all large enough
$x$},
\end{equation*}
for some $\mu$ asymptotically of the form
\begin{equation*}
\mu\ge 2^{(1/t!) k^t - O{\left( k^{t-1} \right)}},
\end{equation*}
where $t = s/2 - 1$. Moreover, these lower bounds can be achieved by
actual Davenport--Schinzel sequences.
\end{lemma}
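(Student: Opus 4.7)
The plan is to reinterpret the family $\{S^s_k(m)\}$ constructed in this section exactly as $Z_d(m)$ was reinterpreted in Section~\ref{subsec_ADS_3}: fix $s$ and $k$ (rather than diagonalizing over $k$ as in Section~\ref{subsec_analysis}), and let $m$ grow. The resulting sequences serve directly as witnesses for $\N^s_\mu(x)$, and they are genuine Davenport--Schinzel sequences, so the ``moreover'' clause is automatic.

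First I observe that $S^s_k(m)$ is itself an $\ADS^s_{\mu_s(k)}(F^s_k(m))$-sequence: it is a concatenation of $F^s_k(m)$ blocks of distinct symbols, contains no alternation of length $s+2$, and every symbol occurs exactly $\mu_s(k)$ times. From the length identity $m F^s_k(m) = |S^s_k(m)| = \mu_s(k) N^s_k(m)$ one reads off $N^s_k(m) = m F^s_k(m)/\mu_s(k)$.

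Next I would analyze the growth of $F^s_k(m)$ in $m$ by unrolling the recursive step, namely $F^s_k(m) = hf$ with $f = F^s_k(m-1)$ and $h = F^s_{k-1}(N^{s-2}_{k-1}(f)) \cdot F^{s-2}_{k-1}(f)$. Induction on $k$ (with the base cases $k=0$ and $s=2$ handled by the explicit formulas for $F^s_0$, $N^2_k$, and $\mu_s$) yields an Ackermann-type lower bound $F^s_k(m) \ge A_k(m - c_{s,k})$, for a constant $c_{s,k}$ depending only on $s$ and $k$. Inverting gives $\alpha_k(F^s_k(m)) \le m + O_{s,k}(1)$. Setting $x = F^s_k(m)$ and combining:
\begin{equation*}
N^s_k(m) \;=\; \frac{m\, x}{\mu_s(k)} \;\ge\; \frac{x\,\alpha_k(x)}{\mu_s(k)} \;-\; O_{s,k}\!\left(\frac{x}{\mu_s(k)}\right),
\end{equation*}
which for large $m$ is at least $x\alpha_k(x)/(2\mu_s(k))$.

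Finally, to match the stated form, I would choose $\mu$ carefully. Since $\mu_s(k) = 2^{\binom{k}{t}} = 2^{(1/t!)k^t - \Theta(k^{t-1})}$ already fits the asymptotic template $\mu \ge 2^{(1/t!)k^t - O(k^{t-1})}$, and the hypothesis $k \ge 4$ supplies enough slack in the $O(k^{t-1})$ term to absorb the multiplicative constant, one picks $\mu$ in the prescribed range and concludes $\N^s_\mu(x) \ge x\alpha_k(x)$ for $x$ in the range of $F^s_k(\cdot)$. Interpolation to intermediate $x$ follows the template from the proof of Theorem~\ref{thm_lambda3_lower}: for $F^s_k(m) \le x < F^s_k(m+1)$, concatenate $\lfloor x / F^s_k(m) \rfloor$ copies of $S^s_k(m)$ on disjoint alphabets.

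The principal technical obstacle is the Ackermann-type lower bound on $F^s_k(m)$. Because the recurrence couples $F^s_k$ with $F^{s-2}_{k-1}$ and $N^{s-2}_{k-1}$ at smaller indices, the induction must handle both parameters simultaneously, tracking constants carefully enough that the final choice of $\mu$ does not escape the asymptotic range $2^{(1/t!)k^t - O(k^{t-1})}$. This is the even-$s$ analogue of the bookkeeping in Lemma~\ref{lemma_lambda3_lower_facts} involving $S_d(m)$, $M_d(m)$, and $X_d(m)$, and---as in that case---the explicit identity $N^s_k(m) = m F^s_k(m)/\mu_s(k)$ keeps the analysis manageable.
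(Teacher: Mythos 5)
Your overall plan---fix $s$ and the construction level, read off $S^s_k(m)$ as an $\ADS^s_{\mu_s(k)}(F^s_k(m))$-sequence, and use the identity $N^s_k(m)=mF^s_k(m)/\mu_s(k)$, with interpolation as in Section~\ref{subsec_3_analysis}---is exactly the intended route (the paper omits the details, noting the proof parallels Lemma~\ref{lemma_ADS_3_lower} and is simpler because the blocks have uniform length), and the ``moreover'' clause is indeed automatic for $s\ge4$, $m\ge2$. But two steps do not hold as written. First, the Ackermann comparison points the wrong way: from a \emph{lower} bound $F^s_k(m)\ge A_k(m-c_{s,k})$ one can only conclude $\alpha_k(F^s_k(m))\ge m-c_{s,k}$, not $\le m+O_{s,k}(1)$. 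What the argument needs is an \emph{upper} bound on the number of blocks, which the paper supplies via (\ref{eq_bd_F_with_N}), (\ref{eq_N_le_Ahat}), (\ref{eq_A_hat_A}): $F^s_k(m)\le 2^{2^k}\widehat A_k(m)\le A_{k+1}\bigl(2m+O_{s,k}(1)\bigr)$, whence $m\ge\bigl(\alpha_{k+1}(x)-O_{s,k}(1)\bigr)/2$. Note the index is $k+1$ with a factor $2$, not $k$ with factor $1$: your inequality $\alpha_k(F^s_k(m))\le m+O_{s,k}(1)$ would require $F^s_k(m)\le A_k(m+O(1))$, which is false in general (already for $s=4$, $k=1$ one has $F^4_1(m)=\Theta(2^m)$ while $A_1$ is linear; for fixed $k$ the growth of $N^s_k(m)$ in $m$ sits at level $k+1$ of the hierarchy).

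Second, and more seriously, the final ``absorption'' step is a non sequitur. What the computation yields is $\N^s_{\mu_s(K)}(x)\ge x\,\alpha_{K+1}(x)/\bigl(2\mu_s(K)\bigr)-O_{s,K}(x)$ for $x=F^s_K(m)$, and the deficit factor $1/\bigl(2\mu_s(K)\bigr)=2^{-\Theta(K^t)}$ multiplies the \emph{symbol count}; it cannot be traded against the $O(k^{t-1})$ slack in the exponent of the \emph{multiplicity} $\mu$, which is a different quantity, and the hypothesis $k\ge4$ does not help here. The clean bound $x\alpha_k(x)$ requires an index shift: run the construction at level $K\le k-2$, so that $\mu=\mu_s(K)=2^{\binom{K}{t}}$ still satisfies $\mu\ge 2^{(1/t!)k^t-O(k^{t-1})}$, and use the standard fact that for any fixed constant $C$ (here $C\approx 4\mu_s(K)$) one has $\alpha_{K+1}(x)\ge\alpha_{k-1}(x)\ge C\,\alpha_k(x)$ for all sufficiently large $x$. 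This shift by $O(1)$ in the level is precisely what the slack in $\mu$'s exponent (and the requirement $k\ge4$) is for. With that correction, together with the paper's upper bound on $F^s_k(m)$ in place of your lower bound, your argument closes; as written, it does not reach the stated inequality.
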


The proof is similar to the proof of Lemma \ref{lemma_ADS_3_lower},
though somewhat simpler, since the blocks in $S^s_k(m)$ have uniform
length. We omit the details.

As before, Lemma~\ref{lemma_ADS_even_lower} automatically yields
lower bounds for $\Nf_{r,s,k}(m)$ for odd $s\ge 5$.

It is an open problem whether the lower bounds for the case $s = 3$
shown above (Section~\ref{subsec_ADS_3}) can be achieved with actual
Davenport--Schinzel sequences (without adjacent repeated symbols),
as was the case here.

\section{Conclusion and open problems}

The bounds for $\lambda_s(n)$ are now tight for every even $s$.
Unfortunately, for odd $s\ge 5$ the problem is still not completely
solved. We believe the new upper bounds for odd $s$ are the true
bounds, simply by analogy to the interval-chain bounds. But the
construction that gives the lower bounds does not seem to work when
$s$ is odd.

Are there other problems that, like interval chains and almost-DS
sequences, satisfy recurrences like Recurrence~\ref{rec_N3km} and
Recurrence~\ref{rec_nskm}? If so, it would be interesting to find
more examples of such problems.

The reason we can unambiguously talk about the coefficient that
multiplies $\alpha(n)$ (e.g., in Theorems~\ref{thm_lambda_new_upper}
and~\ref{thm_lambda3_lower}), despite the fact that there are
several different versions of $\alpha(n)$ in the literature, is that
all these versions differ from one another by at most an
\emph{additive} constant. Thus, the coefficient multiplying
$\alpha(n)$ is not affected. On the other hand, one cannot talk
about the leading coefficient in $\lambda_4(n) = \Theta{\bigl( n
\cdot 2^{\alpha(n)} \bigr)}$, for example, unless a standard
definition of $\alpha(n)$ is agreed upon.

Can our lower-bound construction for $\lambda_3(n)$
(Section~\ref{sec_constr_3}) be realized as the lower envelope of
segments in the plane? If so, it would yield a factor-of-$2$
improvement for this problem as well.

\paragraph{Acknowledgements.}

The ``inverse" problem of almost-DS sequences was raised by my
advisor, Micha Sharir, during a discussion with Haim Kaplan and me.
Recurrence~\ref{rec_N3km} for $\N^3_k(m)$ is also due to Micha, as
well as the proof of Lemma~\ref{lemma_Ex_to_F}. Haim found how an
upper bound for $\N^3_k(m)$ yields an upper bound for
$\lambda_3(n)$; his argument has been generalized in
Lemma~\ref{lemma_psi_to_ADS}.

I also owe thanks to Micha for encouraging me to get into this
intimidating topic in the first place (on account of
Conjecture~\ref{conjecture_lambda}), and for many intensive
discussions. Micha also read carefully several drafts of this paper.

Finally, I wish to thank Martin Klazar for some useful email
correspondence, and to the referees for their careful reading and useful comments.

\appendix

\section{Proof of Klazar's Lemma~\ref{lemma_lamda_3_to_psi_klazar}}
\label{app_lambda_3_klazar}

For completeness, we include here the proof of Klazer's
Lemma~\ref{lemma_lamda_3_to_psi_klazar}. Recall that the claim is
that $\lambda_3(n) \le \psi_3(1 + 2n/\ell, n) + 3n\ell$, where
$\ell\le n$ is a free parameter.

\begin{proof}[Proof of Lemma~\ref{lemma_lamda_3_to_psi_klazar}]
Let $S$ be a maximum-length Davenport--Schinzel sequence of order
$3$ on $n$ distinct symbols. Thus, $|S| = \lambda_3(n)$. Call an
occurrence of a symbol $a$ in $S$ a \emph{terminal occurrence} if it
is the first or last occurrence of $a$ in $S$.

Partition $S$ into blocks $S = S_1 S_2 S_3 \ldots S_m$, where each
$S_i$ starts with a terminal occurrence and contains exactly $\ell$
terminal occurrences (except for $S_m$, which might contain fewer
terminal occurrences). Since $S$ contains $2n$ terminal occurrences,
the number of blocks is $m = \lceil 2n / \ell \rceil \le 1+
2n/\ell$.

For every block $S_i$ and every symbol $a$, let $n_i(a)$ be the
number of occurrences of $a$ in $S_i$. Recall that these occurrences
must be nonadjacent. If $S_i$ contains the first or last occurrence
of $a$ in $S$, we say that $a$ is \emph{terminal in $S_i$};
otherwise, $a$ is \emph{nonterminal in $S_i$}.

Let $\Lambda_i$ be the set of symbols that appear in $S_i$. Let
$\Lambda'_i$ be the subset of these symbols which are terminal in
$S_i$, and let $\Lambda''_i$ be the subset of those which are
nonterminal. Clearly,
\begin{equation*}
|S_i| = \|S_i\| + \sum_{a\in \Lambda_i} \bigl(n_i(a) - 1\bigr).
\end{equation*}

We claim that $n_i(a) \le \ell$ for all $a\in \Lambda_i$. Indeed,
suppose for a contradiction that $n_i(a) \ge \ell + 1$ for some
$a\in \Lambda_i$. Then the occurrences of $a$ in $S_i$ define $\ell$
interior-disjoint, nonempty intervals. But $S_i$ contains at most
$\ell$ terminal occurrences of symbols, one of which is the first
symbol of $S_i$. Therefore, one of the above-mentioned intervals
must be free of terminal occurrences, and so it contains a symbol
$b$ which also appears both before and after the interval. Thus, $S$
contains $babab$, which is a contradiction.

For a similar reason, $S_i$ cannot contain the pattern $aba$ for any
$a,b \in \Lambda''_i$. Therefore, the nonterminal symbols in $S_i$
do not intermingle at all (meaning, for every $a,b\in\Lambda''_i$,
all occurrences of $a$ appear before all occurrences of $b$ or vice
versa). Therefore, the symbols which are nonterminal in $S_i$ define
$\sum_{a\in \Lambda''_i} \bigl(n_i(a) - 1 \bigr)$ interior-disjoint,
nonempty intervals of the form $a\ldots a$ in $S_i$. On the other
hand, the number of such intervals cannot be larger than $\ell-1$
(by an argument similar to the one above). Therefore,
\begin{align*}
|S_i| &= \|S_i\| + \sum_{a\in \Lambda'_i}  \bigl(n_i(a) - 1\bigr) +
\sum_{a\in \Lambda''_i}  \bigl(n_i(a) - 1\bigr)\\
&\le\|S_i\| + (\ell - 1) |\Lambda'_i| + (\ell-1)\\
&\le \|S_i\| + \ell(\ell - 1) + (\ell-1) = \|S_i\| + \ell^2 - 1.
\end{align*}

Now, define a subsequence $S'$ of $S$ by taking just the first
occurrence of each symbol in each $S_i$. Then, $|S'| = \sum_{i=1}^m
\|S_i\|$, and $S'$ is composed of $m$ blocks, each of distinct
symbols. $S'$ might still contain adjacent repeated symbols at the
interface between blocks, but these can be eliminated by deleting at
most $m-1 \le 2n/\ell$ symbols. We get a Davenport--Schinzel
sequence $S''$ which satisfies $|S''| \le \psi_3(m,n)$, and thus
\begin{align*}
\lambda_3(n) = |S| = \sum_{i = 1}^m |S_i| &\le m(\ell^2 -1) +
\sum_{i=1}^m \|S_i\| \\
&\le (1 + 2n/\ell)(\ell^2-1) + \psi_3(m,n) + 2n/\ell \\
&\le \psi_3(1+ 2n/\ell, n) + 3n\ell. \qedhere
\end{align*}
\end{proof}

\section{On the asymptotic growth of some recurrent quantities}
\label{app_growth_constants}

A recurrent feature in this paper are two-parameter quantities given
roughly by $C_{s,k} \approx C_{s-2,k} C_{s,k-1}$, with base cases
$C_{3,k} = \Theta(k)$ and $C_{4,k} = \Theta{\left( 2^k \right)}$.
(Specifically, we have the quantities $P_{s,k}$ and $Q_{s,k}$ in
Section~\ref{sec_psi_old}, and $R_s(d)$ in
Section~\ref{sec_psi_new}. See also $\mu_s(k)$ in
Section~\ref{sec_constr_even}. There are also similar quantities in
\cite{interval_chains}.) In this appendix we give a generic analysis
of the asymptotic growth of such quantities (as a function of $k$
for $s$ fixed).

\begin{lemma}\label{lemma_asymp_common_qtty}
Let $C_{s,k}$ be defined recursively for $s\ge 3$, $k\ge 1$ by
\begin{align*}
C_{3,k} &= \Theta(k);\\
C_{4,k} &= \Theta{\bigl( 2^k \bigr)};\\
C_{s,k} &= C_{s-2,k} C_{s,k-1} + a C_{s-1,k}, \qquad \text{for $s\ge
5$, $k\ge 2$};
\end{align*}
for some implicit constants for $C_{3,k}$ and $C_{4,k}$, some
nonnegative constant $a = a(s)$, and some initial conditions
$C_{s,1}$. Then for every fixed $s\ge 3$ we have
\begin{equation*}
C_{s,k} =
\begin{cases}
2^{(1/t!) k^t \pm O{\left( k^{t-1} \right)}}, & \text{$s$ even};\\
2^{(1/t!) k^t \log_2 k \pm O( k^t )}, & \text{$s$ odd};
\end{cases}
\end{equation*}
where $t = \lfloor (s-2) / 2 \rfloor$.
\end{lemma}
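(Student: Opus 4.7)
The plan is to induct on $s$, with the stated asymptotics for $s=3,4$ serving as base cases. I set $L_{s,k} = \log_2 C_{s,k}$ and rewrite the recurrence additively. For the upper direction, $L_{s,k} \le 1 + \max\{L_{s-2,k}+L_{s,k-1},\ L_{s-1,k}+\log_2 a\}$, and for the lower direction, $L_{s,k} \ge L_{s-2,k} + L_{s,k-1}$. The first substantive step is to argue that the product branch dominates, i.e.\ $L_{s-2,k}+L_{s,k-1} \ge L_{s-1,k}+O(1)$: by the inductive hypothesis $L_{s-1,k}$ has order $k^{t-1}\log_2 k$ (resp.\ $k^{t-1}$) when $s$ is even (resp.\ odd), whereas $L_{s-2,k}+L_{s,k-1}$, using even a crude preliminary bound on $L_{s,k-1}$, already has order $k^t$ (resp.\ $k^t\log_2 k$). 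Once dominance is in hand, the recurrence simplifies to $L_{s,k} = L_{s-2,k} + L_{s,k-1} \pm O(1)$.

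Telescoping in $k$ then gives $L_{s,k} = \sum_{j=k_0}^{k} L_{s-2,j} \pm O(k)$, where $k_0$ absorbs the initial values. Substituting the inductive formula for $L_{s-2,j}$ and invoking the standard asymptotic estimates
\[\sum_{j\le k} j^{t-1} = \frac{k^t}{t} + O(k^{t-1}), \qquad \sum_{j\le k} j^{t-1}\log_2 j = \frac{k^t}{t}\log_2 k + O(k^t),\]
produces a leading term of $(1/t!)k^t$ when $s$ is even and $(1/t!)k^t\log_2 k$ when $s$ is odd, matching the claim, with remainders $O(k^{t-1})$ and $O(k^t)$ respectively coming directly from the partial-sum errors.

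The upper and lower bounds must be maintained in parallel throughout the induction, since each direction at level $s$ feeds into the dominance check at level $s+2$. Concretely, I would carry two-sided invariants of the shape $(1/t!)k^t - C_s^- k^{t-1} \le L_{s,k} \le (1/t!)k^t + C_s^+ k^{t-1}$ for $s$ even, and similarly with an extra $\log_2 k$ factor for $s$ odd, verifying at each induction step that the constants $C_s^\pm$ can be chosen large enough to absorb the $O(k^{t-2})$ slack created when one expands $(k-1)^t = k^t - tk^{t-1} + O(k^{t-2})$.

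The main obstacle is precisely this bookkeeping: the error margin is only a factor of $k$ smaller than the leading term, so the additive $O(1)$ per recurrence step and the $O(k^{t-2})$ slack from Taylor expanding $(k-1)^t$ must be tracked carefully. A secondary subtlety is circular dependence in the dominance step: one needs a lower bound on $L_{s,k-1}$ before the matching upper bound on $L_{s,k}$ is available. I would resolve this by first running the induction with crude one-sided estimates (merely $L_{s,k} = \Theta(k^t)$ or $\Theta(k^t\log k)$ up to constant factors) solely to justify discarding the $aC_{s-1,k}$ term, and only afterwards refining to the precise leading coefficient $1/t!$ via the telescoping argument above.
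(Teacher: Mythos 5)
Your proposal is correct and follows essentially the same route as the paper's proof: pass to logarithms, obtain the lower bound by telescoping the product-only part of the recurrence and estimating the resulting sum, then use that lower bound to show the $aC_{s-1,k}$ term is absorbed (so $C_{s,k}\le 2C_{s-2,k}C_{s,k-1}$ for large $k$) and telescope again for the upper bound. One small slip in your dominance bookkeeping: for $s$ odd, $s-1$ is even with the \emph{same} value of $t$, so $L_{s-1,k}=\Theta(k^t)$ rather than $\Theta(k^{t-1})$; the dominance step still goes through because the product branch carries an extra $\log_2 k$ factor.
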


\begin{remark}
The coefficient $a$ of $C_{s-1,k}$ is nonnegative in all
applications of Lemma~\ref{lemma_asymp_common_qtty} we need, so we
do not consider the case where $a$ is negative. And if the
recurrence equation for $C_{s,k}$ had other lower-order terms such
as $a' C_{s-2,k}$ or $a'' C_{s,k-1}$ (for $a'$, $a''$ positive or
negative), they could be handled quite easily, and they would not
affect the asymptotic growth of $C_{s,k}$; therefore, we omit them
for simplicity.
\end{remark}

\begin{proof}[Proof of Lemma~\ref{lemma_asymp_common_qtty}]
Let $s\ge 5$, and assume by induction that $C_{s-1,k}$, $C_{s-2,k}$
have the claimed growth in $k$. Let $c_{s,k} = \log_2 C_{s,k}$.
Then,
\begin{equation*}
c_{s,k} \ge c_{s-2,k} + c_{s,k-1},
\end{equation*}
so
\begin{equation*}
c_{s,k} \ge \sum_{i=2}^k c_{s-2,i}.
\end{equation*}
Using the assumed growth for $c_{s-2,k}$, and bounding the resulting
sum by an integral, we conclude that
\begin{equation*}
c_{s,k} \ge
\begin{cases}
{1\over t!} k^t - O{\bigl( k^{t-1} \bigr)}, & \text{$s$ even};\\
{1\over t!} k^t \log_2 k - O(k^t), & \text{$s$ odd};
\end{cases}
\end{equation*}
implying the lower bound for $C_{s,k}$.

From this lower bound for $C_{s,k}$ it follows that $C_{s,k-1} \ge a
C_{s-1,k}$ for all large enough $k$, and therefore,
\begin{equation*}
C_{s,k} \le 2C_{s-2,k} C_{s,k-1} \qquad \text{for all large enough
$k$}.
\end{equation*}
Thus, the upper bound for $C_{s,k}$ follows similarly.
\end{proof}

\section{Comparing Ackermann-like functions}\label{app_ack_like}

In this appendix we present a general technique for proving that
variants of the Ackermann hierarchy exhibit equivalent rates of
growth. We first give the lemma on which the technique is based, and
then we illustrate the technique by proving that the function
$\widehat A_k(m)$ of Section~\ref{subsec_analysis} satisfies
$\widehat A_k(m) \le A_{k+1}(2m+4)$.

This is basically the same technique as in Appendix B of
\cite{interval_chains}, but rephrased so as to deal with rapidly
growing functions instead of their slowly growing inverses. Our
technique here is also slightly more general than the one in
\cite{interval_chains}.

We consider the following general setting. Suppose $F(n)$ and $G(n)$
are nondecreasing functions that satisfy $F(n), G(n)>n$ for all $n$. Define
functions $F^\circ(n)$, $G^\circ(n)$ by $F^\circ(n) = F^{(n)}(F_0)$,
$G^\circ(n) = G^{(n)}(G_0)$, with some initial conditions $F_0$,
$G_0$. (Recall that $f^{(n)}$ denotes the $n$-fold composition of
$f$.)

We want to prove that $F^\circ(n) \le G^\circ(dn + c)$ for some
constants $d$ and $c$. The following lemma gives a sufficient
condition for this.

\begin{lemma}\label{lemma_FG_delta}
Let $F(n)$, $G(n)$, $F^\circ(n)$, $G^\circ(n)$ be functions as given
above. Suppose there exists an integer $d$ and a function
$\delta(n)$ such that
\begin{align}
n &\le \delta(n), \label{eq_app_n_delta_n} \\
\delta(F(n)) &\le G^{(d)}(\delta(n)), \label{eq_app_deltaFG}
\end{align}
for all $n\ge 1$. Then $F^\circ(n) \le G^\circ(dn+c)$ for a constant
$c$ large enough that
\begin{equation}\label{eq_condition_c}
\delta(F_0) \le G^\circ(c).
\end{equation}
\end{lemma}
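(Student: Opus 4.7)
The natural idea is induction on $n$, but it is awkward to carry through an induction hypothesis of the form $F^\circ(n)\le G^\circ(dn+c)$ directly, because applying (\ref{eq_app_deltaFG}) introduces a $\delta$ that we then have no good way to absorb. The plan is therefore to prove the slightly stronger statement
\begin{equation*}
\delta(F^\circ(n)) \le G^\circ(dn+c) \qquad \text{for all } n\ge 0,
\end{equation*}
by induction on $n$; the conclusion of the lemma then follows immediately from (\ref{eq_app_n_delta_n}), which gives $F^\circ(n) \le \delta(F^\circ(n))$.

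For the base case $n=0$, we have $F^\circ(0)=F_0$, and the hypothesis (\ref{eq_condition_c}) on $c$ gives exactly $\delta(F_0) \le G^\circ(c)$. For the inductive step, suppose the claim holds for $n$. Then, using $F^\circ(n+1) = F(F^\circ(n))$ and applying (\ref{eq_app_deltaFG}) with $n$ replaced by $F^\circ(n)$, we obtain
\begin{equation*}
\delta(F^\circ(n+1)) = \delta(F(F^\circ(n))) \le G^{(d)}(\delta(F^\circ(n))).
\end{equation*}
Since $G$ is nondecreasing, so is $G^{(d)}$, and the induction hypothesis $\delta(F^\circ(n)) \le G^\circ(dn+c)$ yields
\begin{equation*}
G^{(d)}(\delta(F^\circ(n))) \le G^{(d)}(G^\circ(dn+c)) = G^\circ(dn+c+d) = G^\circ(d(n+1)+c),
\end{equation*}
completing the induction.

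The only real content is the choice of invariant: carrying the extra $\delta$ on the left of the induction hypothesis is what lets (\ref{eq_app_deltaFG}) be applied cleanly, and (\ref{eq_app_n_delta_n}) is used only at the very end to strip the $\delta$ off. No obstacle beyond writing down the right statement to induct on; monotonicity of $G$ and the existence of $c$ satisfying (\ref{eq_condition_c}) (which is possible because $G^\circ(c)\to\infty$ with $c$, since $G(n)>n$) handle the rest.
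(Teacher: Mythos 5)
Your proof is correct and is essentially the paper's argument: the induction you state explicitly, $\delta(F^\circ(n)) \le G^\circ(dn+c)$, is exactly what the paper does when it "applies (\ref{eq_app_deltaFG}) $n$ times" after first inserting $\delta$ via (\ref{eq_app_n_delta_n}) and finishing with (\ref{eq_condition_c}). The only difference is bookkeeping order (you use (\ref{eq_condition_c}) in the base case and strip $\delta$ at the end, the paper does the reverse), so this is the same approach.
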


\begin{proof}
Applying (\ref{eq_app_n_delta_n}), then (\ref{eq_app_deltaFG}) $n$
times, and then (\ref{eq_condition_c}),
\begin{align*}
F^\circ(n) = F^{(n)}(F_0)  \le \delta{\bigl( F^{(n)}(F_0)
\bigr)} &\le G^{(dn)}\bigl( \delta(F_0) \bigr)\\
&\le G^{(dn)}\bigl(G^\circ (c) \bigr) = G^{(dn+c)}(G_0) =
G^\circ(dn+c). \qedhere
\end{align*}
\end{proof}

Now let us apply this technique to the task at hand.

\begin{lemma}
Let $\widehat A_k(m)$ be given by
\begin{equation*}
\widehat A_0(m) = m, \qquad \text{for $m\ge 1$};
\end{equation*}
and
\begin{equation}\label{eq_rec_Ahat}
\widehat A_k(m) =
\begin{cases}
1, & \text{if $m = 1$};\\
m\cdot \widehat A_{k-1}{\left( \widehat A_{k-1}{\left( 2^{2^k}
\widehat A_k(m-1) \right)} \right)}, & \text{otherwise};
\end{cases}
\end{equation}
for $k\ge 1$. Then $\widehat A_k(m) \le A_{k+1}(2m+4)$ for all $k\ge
2$ and all $m$.
\end{lemma}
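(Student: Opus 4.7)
The plan is to prove $\widehat A_k(m)\le A_{k+1}(2m+4)$ by induction on $k\ge 2$, applying Lemma~\ref{lemma_FG_delta} once at each level of the induction. Fix $k\ge 2$ and define
\begin{equation*}
F(x) = x\cdot \widehat A_{k-1}\bigl(\widehat A_{k-1}(2^{2^k}x)\bigr),\qquad G(n) = A_k(n),\qquad F_0=\widehat A_k(2),\qquad G_0=1.
\end{equation*}
Because $\widehat A_k(2)$ is already astronomically large for $k\ge 2$, one has $m\le \widehat A_k(m-1)$ whenever $m\ge 3$, and so the recurrence~(\ref{eq_rec_Ahat}) gives $\widehat A_k(m)\le F\bigl(\widehat A_k(m-1)\bigr)$ and hence $\widehat A_k(m)\le F^{(m-2)}(F_0)=F^\circ(m-2)$. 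Moreover $G^\circ(n)=A_k^{(n)}(1)=A_{k+1}(n)$ by the definition of the Ackermann hierarchy, so it suffices to exhibit $\delta$ and an integer $d$ meeting the two conditions of Lemma~\ref{lemma_FG_delta} (plus the initial condition~(\ref{eq_condition_c})).

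For the inductive step ($k\ge 3$), the induction hypothesis $\widehat A_{k-1}(y)\le A_k(2y+4)$, applied twice and combined with the elementary fact $2A_k(w)+4\le A_k(w+1)$ (which follows from $A_{k-1}(A_k(w))\ge 2A_k(w)+4$ and is valid once $k\ge 3$), produces the clean estimate
\begin{equation*}
F(n)\le n\cdot A_k\!\bigl(A_k(2^{2^k+1}n+5)\bigr).
\end{equation*}
I would then take $\delta(n)=2^{2^k+2}(n+1)$; this satisfies $n\le \delta(n)$ and dominates the inner argument $2^{2^k+1}n+5$. The two remaining multiplicative nuisances---the leading $n$ coming from $F$ and the $2^{2^k+2}$ coming from $\delta$---can each be absorbed into one extra unit of the innermost $A_k$ via the super-fast-growth estimate $c\cdot A_k(Y)\le A_k(Y+1)$, which holds whenever $A_k(Y)$ is much larger than $c$, and this is clearly the case here since $Y=A_k(\delta(n))$ is enormous. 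The upshot is the inequality $\delta(F(n))\le A_k^{(2)}(\delta(n))$ (possibly after reabsorbing a small additive shift into $\delta$), so Lemma~\ref{lemma_FG_delta} with $d=2$ yields $F^\circ(n)\le A_{k+1}(2n+c)$ for a constant $c$ chosen large enough that $\delta(\widehat A_k(2))\le A_{k+1}(c)$; since $\widehat A_k(2)$ is bounded by an iterated exponential in $k$ while $A_{k+1}$ grows Ackermann-fast, a small absolute $c$ suffices, and picking $c\le 8$ delivers $\widehat A_k(m)\le F^\circ(m-2)\le A_{k+1}(2m+4)$.

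The main obstacle is the base case $k=2$: the clean inductive hypothesis $\widehat A_1(y)\le A_2(2y+4)$ is false, since one can verify from the recurrence that $\widehat A_1(y)=4^{y-1}y!$, which grows like $2^{\Theta(y\log_2 y)}$ whereas $A_2(y)=2^y$. I would handle $k=2$ separately by using the explicit weaker bound $\widehat A_1(y)\le A_2(c_0\,y\log_2 y)$, valid for some absolute constant $c_0$ and all $y\ge 2$, and re-running the same $\delta$-argument with a correspondingly enlarged $\delta$ to absorb the extra logarithmic factor; the resulting inequality still fits into Lemma~\ref{lemma_FG_delta} with $d=2$ and an adjusted additive constant, yielding $\widehat A_2(m)\le A_3(2m+4)$. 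Once this base case is in place, the induction proceeds exactly as outlined above.
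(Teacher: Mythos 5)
Your plan is essentially the paper's: for each fixed $k$, package the recurrence into the generic $\delta$-lemma (Lemma~\ref{lemma_FG_delta}) with $G=A_k$, so that $G^\circ = A_{k+1}$, and then induct on $k$; your handling of the bothersome factor $m$ (comparing $m$ to $\widehat A_k(m-1)$ rather than to the nested $\widehat A_{k-1}$ term, as the paper does) and your choice $F_0=\widehat A_k(2)$ rather than $F_0=1$ are harmless cosmetic variants. Two things in the writeup are genuine gaps, however.

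First, the base case $k=2$ is waved at rather than done, and the wave is misleading. You write that one should ``re-run the same $\delta$-argument with a correspondingly enlarged $\delta$,'' suggesting a $\delta$ of the same shape as the linear $\delta(n)=2^{2^k+2}(n+1)$ used for $k\ge 3$. That cannot work: with $\widehat A_1(y)=4^{y-1}y!\le 2^{O(y\log y)}$ one gets $F(n)$ of order $2^{2^{\Theta(n\log n)}}$, so to make $\delta(F(n))\le A_2(A_2(\delta(n)))=2^{2^{\delta(n)}}$ the function $\delta$ must itself be superlinear (the paper uses the explicit $\delta(m)=600m^3$ against the coarser bound $\widehat A_1(m)\le 2^{m^2}$). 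The structure of the base case is genuinely different from the inductive step, not an adjustment of constants, and this is precisely why it has to be done by hand. Second, the justification you give for the initial condition $\delta(F_0)\le G^\circ(c)$ is incorrect: $\widehat A_k(2)$ is \emph{not} ``bounded by an iterated exponential in $k$.'' One has $\widehat A_k(2)=2\,\widehat A_{k-1}\bigl(\widehat A_{k-1}(2^{2^k})\bigr)$, which already grows Ackermann-fast with $k$; unwinding even once via the inductive hypothesis gives $\widehat A_k(2)\le 2A_k\bigl(2A_k(2^{2^k+1}+4)+4\bigr)$, which is roughly $A_{k+1}(6)$, not an iterated exponential. The conclusion that $c\le 8$ suffices does hold, but it must be argued by comparing $\widehat A_k(2)$ to a small number of iterations of $A_k$ using the induction hypothesis for $k-1$, not by the stated growth claim. (Choosing $F_0=1$ as the paper does sidesteps this entirely, since then the initial condition is a trivially small number.)
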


\begin{proof}
We start by noting that
\begin{equation}\label{eq_Ahat1}
\widehat A_1(m) = 2^{2m-2} m! \le 2^{m^2}.
\end{equation}
Unfortunately the recurrence (\ref{eq_rec_Ahat}) does not fit the
general setting of Lemma~\ref{lemma_FG_delta} because of the factor
$m$ in it. But it is not hard to show that
\begin{equation*}
\widehat A_k(m) \le \widehat A_{k-1}{\left( \widehat A_{k-1}{\left(
2^{2^k} \widehat A_k(m-1) \right)} \right)}^2 \qquad \text{for $m\ge
2$},
\end{equation*}
so we will use this recurrence instead (the penalty we pay is
minimal). We are going to apply Lemma~\ref{lemma_FG_delta} with $d =
2$, with
\begin{align}
F(m) &= \widehat A_{k-1}{\left( \widehat A_{k-1}{\left( 2^{2^k} m
\right)} \right)}^2, \label{eq_rec_Fm}\\
G(m) &= A_k(m), \nonumber
\end{align}
and with the initial conditions $F_0 = G_0 = 1$. Thus,
\begin{align*}
F^\circ(m) &\ge \widehat A_k(m),\\
G^\circ(m) &= A_{k+1}(m).
\end{align*}

Let us start with the case $k = 2$. In this case we have, by
(\ref{eq_Ahat1}),
\begin{align*}
F(m) &= \widehat A_1(\widehat A_1(16m))^2 \le 2^{2^{512m^2+1}},\\
G(m) &= 2^m.
\end{align*}
Then an appropriate choice of $\delta$ is $\delta(m) = 600 m^3$,
since
\begin{equation*}
\delta(F(m)) \le \delta{\left( 2^{2^{512m^2 + 1}} \right)} = 600\cdot
2^{3\cdot2^{512 m^2+1}} \le 2^{2^{600 m^3}} = G(G(\delta(m)))
\end{equation*}
for all $m\ge 1$, and so $\delta$ satisfies (\ref{eq_app_deltaFG}).
Further, it is enough to take $c = 4$ in (\ref{eq_condition_c}),
since
\begin{equation*}
G^\circ(4) = 2^{2^{2^2}} \ge 515 = \delta(F_0).
\end{equation*}
We conclude that $\widehat A_2(m) \le A_3(2m+4)$.

Now we deal with the general case $k\ge 3$. Suppose by induction
that $\widehat A_{k-1}(m) \le A_k(2m+4)$. Substituting this into
(\ref{eq_rec_Fm}),
\begin{equation*}
F(m) \le A_k{\left( 2 A_k{\left( 2^{2^k+1} m + 4 \right)} + 4
\right)}^2.
\end{equation*}
Now it is easy to see that taking $\delta(m) = 2^{2^k+1}m + 5$
guarantees that
\begin{equation*}
\delta(F(m)) \le A_k(A_k(\delta(m))) = G(G(\delta(m)))
\end{equation*}
for all $m\ge 1$. Furthermore, we have
\begin{equation*}
G^\circ(4) = A_{k+1}(4) > 2^{2^k+1}+5 = \delta(F_0).
\end{equation*}
We conclude that $\widehat A_k(m) \le A_{k+1}(2m+4)$, as desired.
\end{proof}

\end{document}